\documentclass[11pt]{article}
\usepackage{epsf}
\usepackage{epsfig}
\usepackage{graphicx}
\usepackage{color}
\usepackage{amsmath}
\usepackage{mathrsfs}

\usepackage{amsthm}
\usepackage{latexsym}
\usepackage{amssymb}
\usepackage{amsmath}

\usepackage{stmaryrd}

\usepackage{epsf}
\usepackage{verbatim}

\newtheorem{theorem}{Theorem}[section]

\newtheorem{lemma}{Lemma}[section]
\newtheorem{corollary}{Corollary}[section]

\newtheorem{definition}{Definition}[section]

\textwidth 6.8in \textheight 9in \topmargin -0.2 in \oddsidemargin
-0.17in \hyphenpenalty=3000 \tolerance=1000

\begin{document}
\title{{\bf Dichotomy for Real Holant$^c$ Problems}}

\vspace{0.3in}
\author{Jin-Yi Cai\thanks{University of Wisconsin-Madison.
 {\tt jyc@cs.wisc.edu}. }
\and Pinyan Lu\thanks{ITCS, Shanghai University of Finance and Economics {\tt
lu.pinyan@mail.shufe.edu.cn}} \and Mingji Xia\thanks{State Key Laboratory of Computer Science, Institute of Software, Chinese Academy of Sciences. University of Chinese Academy of Sciences. {\tt mingji@ios.ac.cn}}}

\date{}
\maketitle

\bibliographystyle{plain}

\begin{abstract}
Holant problems capture a class of Sum-of-Product
computations such as counting matchings.
It is inspired by holographic algorithms and
is equivalent to tensor networks, with counting CSP being a special case.
A classification for Holant problems is more difficult
to prove, not only because it implies a
classification for counting CSP, but also due to the
deeper reason that there exist 
more intricate polynomial time tractable problems in the broader framework.

We discover a new family of constraint functions $\mathscr{L}$ which
define polynomial time computable counting problems. These
do not appear in counting CSP, and no newly discovered
tractable constraints can be symmetric.
It has a delicate support structure related to error-correcting codes.
Local holographic transformations is fundamental in its tractability.
We prove a complexity dichotomy theorem for all
Holant problems
defined by any real valued constraint function set on Boolean
variables and contains
two 0-1 pinning functions. Previously, dichotomy for
 the same framework was only  known for \emph{symmetric} constraint functions.
 The set $\mathscr{L}$ 
supplies the last piece of tractability.
 We also prove a dichotomy for a variant of
counting CSP as a technical component toward this Holant dichotomy.
\end{abstract}

\thispagestyle{empty}
\newpage
\setcounter{page}{1}

\section{Introduction}
There has been great progress in the complexity classification program
for counting problems defined as Sum-of-Product computations.
An ideal outcome of such a result is usually stated in the form of
a dichotomy theorem, namely it classifies every single problem expressible
in the class as either \#P-hard or polynomial time solvable. Counting
Constraint Satisfaction Problems (\#CSP) is the most well-studied
framework in such context. For \#CSP over the Boolean domain,
 two explicit tractable families, namely $\mathscr{P}$ (product type)
and  $\mathscr{A}$ (affine type), are identified; any function set
not contained in these two families is proved to be \#P-hard.
The result was first proved for unweighted 0-1 valued constraint
 functions~\cite{CreignouH96},
 later for non-negatively weighted functions~\cite{weightedCSP},
 and finally for complex valued
 functions~\cite{CaiLX14}.  From  non-negative values to complex values,
 the tractable family $\mathscr{A}$ expands highly non-trivially;
the tractability incorporates cancelations and the proof depends on a
nice algebraic structure.
 Dichotomy theorems are also known for \#CSP over large domains
although the tractability criterion is not very explicit
and it is  not even known to be decidable in the case of complex weighted
constraint functions~\cite{Bulatov08,dyerstoc10,dyer11,CaiCL11,CaiC12}.
In this paper, we focus on problems over the Boolean domain.

Unfortunately, not every problem defined by local constraints can be
described in the \#CSP framework, and thus  not every such problem is
 covered by the
 \#CSP dichotomies. E.g., the graph matching problem
is such an example~\cite{freedman-l-s}.
However,
it is naturally included in a more refined framework, called Holant problems.
This was defined in~\cite{FOCS08},
and the name was inspired by the introduction of
{\it Holographic Algorithms} by L.~Valiant~\cite{bib:holographicalg,AA_FOCS}
(who first used the term Holant).
The Holant framework is essentially equivalent to tensor networks.
\#CSP can be viewed as a special case of Holant problems.
Compared to \#CSP, the Holant framework contains
more surprising tractable problems.
Consequently, it is also much more challenging to prove dichotomy
theorems in the Holant framework.
 After a great deal of work~\cite{holant-stoc,bib:holantc,parity,realholant,complexholant},  a dichotomy for Holant
problems was proved  for \emph{symmetric} constraint functions.
But obviously symmetric functions are only a tiny fraction of all
constraint functions.

Let us meet a function $f$ on 14 variables.
We will show that this $f$ is a new breed of functions which
define  tractable problems
in the Holant framework. It is not symmetric, and such tractable functions
do not show up in the \#CSP framework.

Let $H = \begin{bmatrix}
0 & 0 & 0 & 1 & 1 & 1 & 1 \\
0 & 1 & 1 & 0 & 0 & 1 & 1 \\
1 & 0 & 1 & 0 & 1 & 0 & 1
\end{bmatrix}$.
A standard definition of the $[7,4]$-Hamming code $C$ consists of
0-1 strings of length 7 with  $H$ as its parity check matrix:
$C = \{ {\bf x} \in \mathbb{Z}_2^7 \mid {\bf x} H = {\bf 0} \bmod 2\}$.
We consider the dual Hamming code $C^{\perp}$ which has $H$ as a generating
matrix. $C^{\perp}$ is a linear subspace of $\mathbb{Z}_2^7$ of dimension 3.
It is well-known that every nonzero word of  $C^{\perp}$ has
Hamming weight 4.
Let
\[S = \{{\bf w}\overline{{\bf w}} \in \mathbb{Z}_2^{14}
 \mid {\bf w} \in C^{\perp}\},\]
where $\overline{{\bf w}}$ flips every bit of ${\bf w}$.
Clearly $S$ is an affine linear subspace in $\mathbb{Z}_2^{14}$ of dimension 3.

Now our  function  $f$ is defined as follows:
$f$ has support $S$.
In the column order of $H$ we may take free variables $x_1, x_2, x_4$,
and on the support $S$ we have $x_3 = x_1 + x_2$, $x_5 = x_1 + x_4$,
$x_6 = x_2 + x_4$, and $x_7 = x_1 + x_2 + x_4$ (arithmetic in $\mathbb{Z}_2$).
There are 7 other variables $x_{7+i}$ ($ 1 \le i \le 7$), and
on $S$ we have $x_{7+i} = \overline{x_i}$.
In terms of the 0-1 valued free variables  $x_1, x_2, x_4$, $f$ takes
value $(-1)^{x_1x_2x_4}$ on $S$, and $0$ elsewhere.
Thus on the support set $S$, $f = 1$, except at one point
$x_1=x_2=x_4=1$ it takes value $-1$.

It turns out that this $f$ defines a tractable Holant problem,
even though it does not belong to any of the previously  known
tractable constraint function families for \#CSP.
The tractability of $f$ depends on the fact that
every ${\bf w}\overline{{\bf w}} \in S$ has Hamming weight exactly 7,
\emph{and} (as a consequence of $C^{\perp}$ being
a linear code where  every nonzero word has weight 4)
that for any ${\bf w}\not = {\bf w'}$ with both
${\bf w}, {\bf w'} \in C^{\perp}$,
 the number of common bit positions where both
${\bf w}\overline{{\bf w}}$
and ${\bf w'}\overline{{\bf w'}}$ have bit 1
is always 3.

%
%
%

 For Holant problems with general (not necessary symmetric) functions,
the only known dichotomy is for a restricted
class called Holant$^*$ problems~\cite{CaiLX11}, where all unary functions
 are assumed to be available. How to extend this is a challenging
 open question. A very broad subclass of Holant problems is called
Holant$^c$, where only two unary pinning functions $\Delta_0$, $\Delta_1$
(that set a variable to 0 or 1) are assumed
to be available.  Holant$^c$ already covers a lot of ground, including
all of \#CSP, graph matching and so on.

 \#CSP is the special case of
 Holant problems where the constraint function set
is assumed to contain  {\sc Equality}
 of all arities.
 One can show that if we have an  {\sc Equality}
of odd arity  at least 3, we can realize  {\sc Equality} of all arities.
 But, if we have an  {\sc Equality}  of even arity, we can only realize
 {\sc Equality}  of even arities.
Dyer, Goldberg and Jerrum~\cite{weightedCSP} proved that
in the \#CSP framework one can realize the pinning functions
 $\Delta_0$ and $\Delta_1$.
  We will denote by \#CSP$_2^c$ the special case of \#CSP  where
 each variable appears an even number of times, and  $\Delta_0$,  $\Delta_1$
are available.
  \#CSP$_2^c$ plays
 an important role.
  A dichotomy for \#CSP$_2^c$ is somewhat unavoidable to get a
dichotomy for Holant. This is not only logically true in the sense
that a dichotomy for Holant will imply a dichotomy for \#CSP$_2^c$,
but also true in the sense that one usually proves
 a dichotomy \#CSP$_2^c$ as a major step toward a
 dichotomy of Holant~\cite{realholant,complexholant}.
 Previously one could only prove dichotomy for \#CSP$_2^c$
 for symmetric functions.
  Compared to the dichotomy for \#CSP, we already know that there is one more
 tractable family in the dichotomy for symmetric \#CSP$_2^c$.
It is a slight modification of the family $\mathscr{A}$, which
 is denoted by  $\mathscr{A}^\alpha$.
 Is this the only addition when we go from \#CSP to \#CSP$_2^c$
without the symmetry restriction?

\subsection{Our Results}
%
In this paper we prove a complexity dichotomy for
 Holant$^c$ with general (not necessary symmetric)
real valued functions. In order to do that
we first prove a dichotomy for \#CSP$_2^c$
 with general (not necessary symmetric) \emph{complex} valued functions.
In addition to
 the two tractable families $\mathscr{P}$ and  $\mathscr{A}$ for
 \#CSP, and the known modification $\mathscr{A}^\alpha$,
 we discover a brand-new tractable family, denoted by
 $\mathscr{L}$, which we call
 \emph{local affine functions}.
 The dichotomy for \#CSP$_2^c$ says that these four ($\mathscr{P}$, $\mathscr{A}$, $\mathscr{A}^\alpha$ and $\mathscr{L}$) are exactly all the
 tractable families.
 The dichotomy for
 Holant$^c$ problems basically says that
the tractable family for Holant$^c$ is precisely the union of
tractable families of \#CSP$_2^c$ and Holant$^*$.

Conceptually (and also technically but somewhat hidden),
the most important contribution of this work is the discovery
and identification of the new tractable family $\mathscr{L}$.
The formal definition and characterization is given in Section~\ref{sec:local-affine}.
Our function $f$ of arity 14 is among its smallest examples.
 Given the succinct mathematical definition of $\mathscr{L}$,
the description of
 the algorithm is very short. However, we would like to point out
 that this formal simplicity hides many interesting and
surprising structures.

For reasons that will become clearer, we will now denote our function
$f$ of  arity 14 as  $f_{7}^\alpha(+-)$.
Five years ago, we discovered
a polynomial time algorithm for counting problems
defined by $f_{7}^\alpha(+-)$ (and some similar
functions) in the Holant$^c$
and  \#CSP$_2^c$ setting.
The algorithm is non-trivial.
But we were not able to prove a dichotomy.

Let's consider another 0-1 valued function $f_{31}$:
It has arity $31$. It is the 0-1 indicator function
of a (particular kind of) 5-dimensional linear subspace $S$ of
 $\mathbb{Z}_2^{31}$. Five of  $31$ variables are considered
 free variables and all $31$ variables on $S$ correspond to exactly
all possible non-empty linear combinations of the five free variables.
 The function  $f_{31}$ is a pure affine function in $\mathscr{A}$,
 and known to be tractable alone. On the other hand,
 the function $f_{7}^\alpha(+-)$  is neither in $\mathscr{A}$
nor in $\mathscr{A}^\alpha$, but we also had a polynomial time algorithm
for $f_{7}^\alpha(+-)$ type functions alone.
 The real challenge, for the quest of a dichotomy,
is to put them together.
 What is the complexity for Holant$^c(f_{31}, f_{7}^\alpha(+-))$
or \#CSP$_2^c(f_{31}, f_{7}^\alpha(+-))$?
 If we replace $f_{31}$ with a smaller arity but of the
 same structure such as $f_{15}, f_7, f_3, f_1$, we can prove that
 the problem is \#P-hard.
It seems highly implausible that
tractability would start to show up only at such high arity.
And so we conjectured that \#CSP$_2^c(f_{31}, f_{7}^\alpha(+-))$
is also \#P-hard. We tried to prove this for five years but failed.
We also tried to find a P-time algorithm without success,
 until now. It is quite tantalizing to think about
 what property is shared by  $f_{7}^\alpha(+-), f_{31}, f_{63}, \ldots$
 but not with $f_{15}, f_7, f_3, f_1$?

We now know that the explanation is this  new family $\mathscr{L}$.
Interestingly, the deceptively simple definition of $\mathscr{L}$ does
 include $f_{7}^\alpha(+-), f_{31}, f_{63}, \ldots$  but excludes
  $f_{15}, f_7, f_3, f_1$.
(This fact can be verified but is  not totally trivial.)
By the unifying notion of $\mathscr{L}$, we also have a
much simpler description of a polynomial time algorithm,
which starts with a global linear system and
a localized holographic transformation performed simultaneously everywhere.
(Because this description is much simpler, we will not describe
our earlier algorithm in this paper.)

Several facts about $\mathscr{L}$ are worth mentioning.
  These interesting structures can only appear for general functions
but not for symmetric ones. Secondly, although the definition of $\mathscr{L}$
 seems to involve complex numbers in an essential way, it does include
 some real valued functions such as $f_{7}^\alpha(+-)$.
 We cannot avoid going through $\mathbb{C}$ even if we only hope to
prove a dichotomy for real valued functions.
 Although the algorithm for  $\mathscr{L}$ looks short, it does
 have a very different nature compared to that for
 $\mathscr{P}$, $\mathscr{A}$ and $\mathscr{A}^\alpha$.
 The algorithms for previous known tractable families basically
perform a local elimination to handle the variables one by one.
 The algorithm $\mathscr{L}$ contains a global step, which is
to solve a global linear equation, followed by a
localized holographic transformation simultaneously everywhere.
We have tried many purely
 local algorithms and failed,
 until we reached this global algorithm.

\subsection{Techniques by Examples}
Let us first describe the proof that
\#CSP$_2^c(f_{15}, f_{7}^\alpha(+-))$ is \#P-hard.
Here $f_{15}$ is a 0-1
indicator function
of a 4-dim linear subspace $S$ of
 $\mathbb{Z}_2^{15}$; 4 variables are chosen as
 free variables and all $15$ variables on $S$ correspond to
all their non-empty linear combinations.
An instance of \#CSP$_2^c(f_{15}, f_{7}^\alpha(+-))$
is a bipartite graph $(V, U, E)$ where $V$ are variables,
$U$ are constraint functions from $\{f_{15}, f_{7}^\alpha(+-), \Delta_0,
\Delta_1\}$ and $E$ indicates how the constraints are applied. Being
in \#CSP$_2$, every $v \in V$ has even degree.
The Sum-of-Product computation is to evaluate $\sum_{\sigma: V \rightarrow
 \{0,1\}} \prod_{u \in U} f_u({\sigma})$,
where $f_u$ is the function at $u \in U$.

If a variable appears exactly twice, once in $\Delta_1$
and once as an input to $f_{15}$, this effectively pins that input of
$f_{15}$ to 1. This creates a function $g$ of arity 14,
which is ``realizable'' in \#CSP$_2$.
What is $g$? Even though $f_{15}$ is \emph{not} symmetric,
clearly not every subset of 4 variables can be chosen as free,
every single variable \emph{can} be free (as part of a subset of 4).
In group terminology, the symmetry group of  $f_{15}$ is
not ${\frak S}_{15}$, but there is a transitive group
of symmetry ${\bf GL}_4(\mathbb{Z}_2)$ acting on the nonzero vectors
of $\mathbb{Z}_2^4$.

Hence up to renaming the variables, $g$ is the same as
 setting $x_4$ of $f_{15}$ to 1.  This function has exactly
the same support structure as $f_{7}^\alpha(+-)$,
but the function values are all $1$
on its support, whereas $f_{7}^\alpha(+-)$ has value $-1$ when the 3
 free variables are all equal to $1$. We call this new function $f_7(+-)$.
For both  functions $f_{7}^\alpha(+-)$ and $f_7(+-)$
we can divide the 14 inputs into  $7$ pairs  in the same way,
which will be called bundles in this paper; each bundle has two input
variables which always take opposite 0-1 values on the support; among the $7$ bundles they have the same linear relation. Therefore, we can combine them in the following straightforward way:
for each corresponding bundle connect the two variables
labeled $(-)$, one from each bundle,
 and leave the  variables labeled  $(+)$ as inputs of the gadget. Technically
we have a  \#CSP$_2$ construction where
 for each corresponding bundle there is a variable that appears
exactly twice, once for each variable labeled $(-)$ in the bundle.
This gadget  realizes
a function $h$ with  $14$ inputs in $7$ bundles as well. The two inputs in each bundle must have the same value on the support,
 and the value of $h$ is the same as $f_{7}^\alpha(+-)$,
since $f_7(+-)$ is identically 1 on the support. So, this function can be denoted as $f_{7}^\alpha(++)$.
There is an easy reduction
\#CSP$(f_{7}^\alpha)\leq_T$\#CSP$_2^c(f_{7}^\alpha(++))$:
In any instance of \#CSP$(f_{7}^\alpha)$, 
replicate \emph{twice} every occurrence of variables in
constraints, and replace $f_{7}^\alpha$ by $f_{7}^\alpha(++)$.
Hence \#CSP$(f_{7}^\alpha)
  \leq_T$ \#CSP$_2^c(f_{15}, f_{7}^\alpha(+-))$. 
As $f_{7}^\alpha \not \in \mathscr{A} \cup \mathscr{P}$, 
\#CSP$(f_{7}^\alpha)$ is  \#P-hard.

\vspace{.05in}
How about \#CSP$_2^c(f_{7}, f_{7}^\alpha(+-))$?
Here $f_7$ has arity 7 and a support of dimension 3.  Similarly,
if  we pin a variable of $f_{7}$ to $1$ we get $f_3(+-)$.
But $f_{7}^\alpha(+-)$ and $f_3(+-)$ do not have the same
support structure.   Then we need the following more complicated
 gadget as shown in Fig.~\ref{Fig f3f7} to construct  $f_{7}^\alpha(++)$.

\begin{figure}[h]
	\begin{center}
		\includegraphics[width=0.6\textwidth]{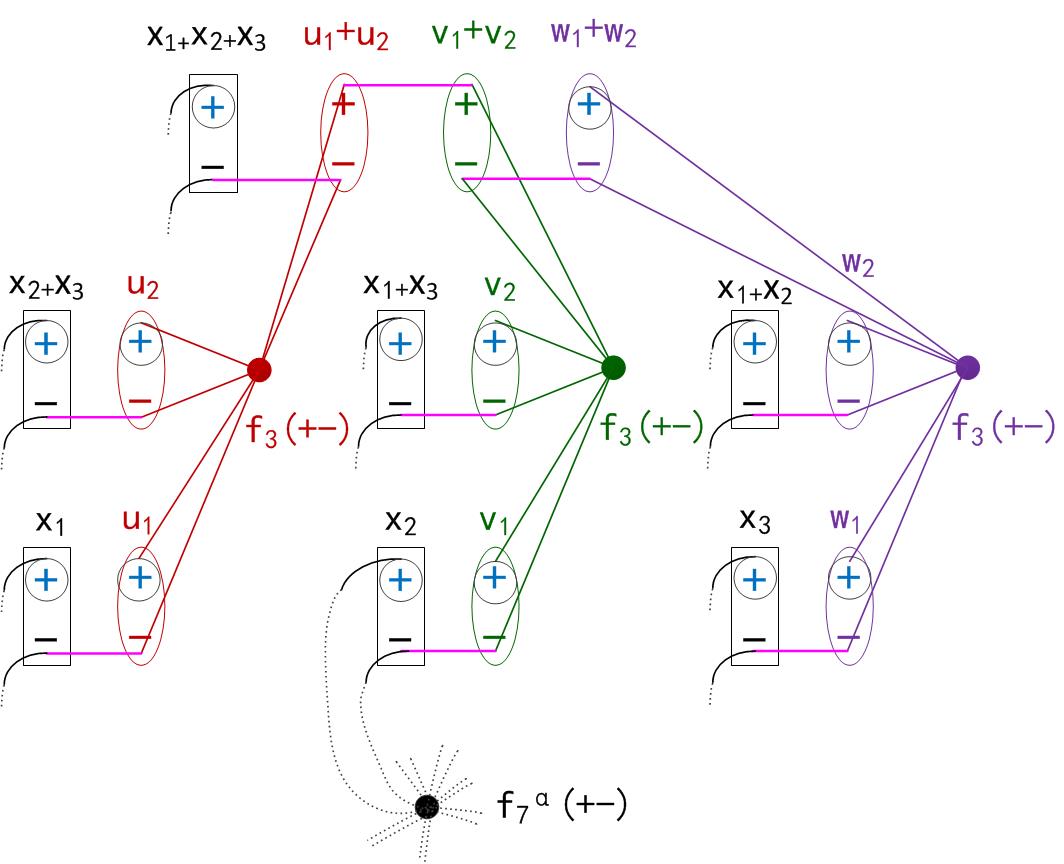}
	\caption{The gadget realizing $f_{7}^\alpha(++)$ is composed of one
copy of  $f_{7}^\alpha(+-)$ and 3 copies of $f_3(+-)$, shown as the 4 dots in the picture. Each $f_3(+-)$ function has  6 edges stretched out as input variables, which are grouped into 3 bundles, shown as ellipses. The function
 $f_{7}^\alpha(+-)$ has  14 edges stretched out shown
as dotted lines (only one pair is
completely shown). They are grouped  into 7 bundles, 
shown as rectangles. 9 pairs of variables
 are connected (shown as horizontal pink line segments), leaving
$3 \times 6 + 14 - 9 \times 2=14$ variables exposed as external variables
of the gadget, which are circled.}
	\label{Fig f3f7}
	\end{center}
\end{figure}
%
%
%
%
We observe that this construction is very delicate.
After connecting the variables $u_1(-)$ and, resp. $u_2(-)$,
of one copy of $f_3(+-)$ with $x_1(-)$ and, resp. the variable labeled
 ``$x_2+x_3\/(-)$'', of $f_{7}^\alpha(+-)$,  we have in fact forced
the value of the variable labeled ``$u_1 + u_2\/(-)$''  of $f_3(+-)$ to equal
(on the support)
to the
variable labeled ``$x_1 + x_2+x_3\/(-)$'' of $f_{7}^\alpha(+-)$.
 Similarly after  connecting $v_1(-)$ and $v_2(-)$
with $x_2(-)$ and  ``$x_1+x_3\/(-)$'' of $f_{7}^\alpha(+-)$,
the  value of ``$v_1 + v_2\/(-)$''
is also forced to equal  ``$x_1 + x_2+x_3\/(-)$'', but that
variable has already been taken.
On the other hand,
after  connecting $w_1(-)$ and $w_2(-)$
with $x_3(-)$ and  ``$x_1+x_2\/(-)$'' of $f_{7}^\alpha(+-)$,
the  value of ``$w_1  + w_2\/(-)$''
is also forced to equal  ``$x_1 + x_2+x_3\/(-)$''.
Hence it is legitimate to connect ``$v_1 + v_2\/(-)$''
with ``$w_1  + w_2\/(-)$'' (finding a home for both orphans.)
Similarly,  both ``$u_1 + u_2\/(+)$''
and ``$v_1 + v_2\/(+)$'' are 
 forced to equal ``$x_1 + x_2+x_3\/(+)$'' (on the support),
hence  connecting them is also legitimate. 
In the meanwhile, the pair $x_1(+)$ and $u_1(+)$ must be equal
on the support, forced by the connection between  $u_1(-)$ and  $x_1(-)$,
making them a $(++)$ pair.
Similarly, $x_2(+)$ and $v_1(+)$ are forced to equal 
on the support, making a  $(++)$ pair,
and $x_3(+)$ and $w_1(+)$ are forced to equal
making another  $(++)$ pair.
Then ``$x_2 + x_3\/(+)$'' and $u_2(+)$ make a $(++)$ pair,
but this bundle satisfies the linear dependence that
it is equal to the sum of  the two free variables $x_2(+)$ and $x_3(+)$
on the support. The same can be said for the other 3 dependent bundles.
In all, it is clear that the  7 exposed pairs of variables
associated with $x_1(+)$, $x_2(+)$, $x_3(+)$
``$x_1+x_2\/(+)$'', ``$x_1+x_3\/(+)$'', ``$x_2+x_3\/(+)$''
and ``$x_1 + x_2+x_3\/(+)$'' form 7 bundles of equal variables,
and  they have precisely the 3-dimensional support structure in a 14-dimensional
space, as described for  $f_{7}^\alpha(+-)$.
As the value of $f_3(+-)$ is always 1 on the support,
it is clear that the function of the gadget is
$f_{7}^\alpha(++)$.

The above construction and proof of hardness are special cases of our Lemma \ref{lemma: together two +-}. We note that these functions are really sparse.
 For example the function $f_{7}^\alpha(+-)$ has
 only $8$ nonzero values  out of $16384$ $(=2^{14})$ values total.
They are very ``fragile'': If you do not make the connection
``just so'', then chances are that
the construction will collapse and no good reduction can be obtained.
On the other hand, precisely because of their  delicate structure
one can come up with extremely intricate designs. At the same
time, the interesting structure may also portend some unforeseen algorithms.

We will use these delicate structures to prove \#P-hardness.
 But to prove a dichotomy theorem, one needs to prove that an
arbitrary function set  not contained in
one of the 4 tractable families is \#P-hard. The given functions may not have any of the nice structure, then how can
we do the construction? To handle that, we have a number of regularization lemmas in Section~\ref{subsec:RegularizationLemmas}, showing that one
can always construct gadgets to regularize the
functions.  Starting with any function set not contained in
one of the tractable families,
we can produce functions with similar nice structures but still
 outside the respective tractable families
 unless we already can prove \#P-hardness outright.

Then, the question is
 how about \#CSP$_2^c(f_{31}, f_{7}^\alpha(+-))$?
Can we also construct the function $f_{7}^\alpha(++)$ or other functions to get \#P-hardness?
If we pin two free variables of $f_{31}$, we get either $f_{7}(++++)$ 
or $f_{7}(++--)$. They are not like $f_{7}(+-)$. With these, together with $f_{7}^\alpha(+-)$, we do not know how to construct functions like $f_{7}^\alpha(++)$ as before.
All attempts to construct similar gadgets like in
 Figure~\ref{Fig f3f7} failed.
Now as a consequence of our dichotomy theorem,
 assuming $\#{\rm P}$ is not equal to ${\rm P}$,
we can prove that no such construction can succeed.
The reason is that they both belong to the new tractable family $\mathscr{L}$.
 The algorithm for that and a characterization for $\mathscr{L}$
 is given in Section~\ref{sec:local-affine}.
 The criteria there can be used to show that $f_{31}, f_{7}^\alpha(+-)$
 are in the family  $\mathscr{L}$ while $f_{15}, f_7, f_3, f_1$ are not.


%
%

\section{Preliminaries}
A (constraint) function of arity $n$ is a mapping from $\{0,1\}^n \rightarrow \mathbb{C}$.
We denote by $=_n$ the {\sc Equality} function of arity $n$. A
symmetric function $f$ on $n$ Boolean variables can be expressed by
$[f_0,f_1,\ldots,f_n]$, where $f_j$ is the value of $f$ on inputs of
Hamming weight $j$. Thus, $(=_n)=[1,0,\ldots,0,1]$ (with $n-1$
zeros). We also use $\Delta_0, \Delta_1$ to denote $[1,0]$ and $[0,1]$
respectively. A binary function $f$ is also expressed by the matrix
$\begin{bmatrix}f(0,0)& f(0,1)\\
f(1,0)& f(1,1)\end{bmatrix}$.

A {\it signature grid} $\Omega = (G, {\mathscr F}, \pi)$ consists of
a graph $G=(V,E)$, and a labeling $\pi$ of each vertex $v \in V$
with a function $f_v \in
{\mathscr F}$.
 The Holant problem on instance
$\Omega$ is to compute
${\rm Holant}_\Omega=\sum_{\sigma: E \rightarrow \{0,1\}}
\prod_{v\in V} f_v(\sigma|_{E(v)})$, where $\sigma|_{E(v)}$ is the assignment $\sigma$ restricted to the edges incident to $v$.
A Holant problem is parameterized by a set of functions.
\begin{definition}[Holant]
Given a set of functions ${\mathscr F}$, we define a counting
problem ${\rm Holant}({\mathscr F})$:

Input: A {\it signature grid} $\Omega = (G, {\mathscr F}, \pi)$;

Output: ${\rm Holant}_\Omega$.
\end{definition}

Suppose $c \in \mathbb{C}$ is  a nonzero number. As constraint
functions $f$ and $cf$ are equivalent in terms of the complexity of
Holant problems they define. Hence we will consider functions
$f$ and $cf$ to be interchangeable.
We would like to characterize the complexity of Holant problems in
terms of its function sets\footnote{
We allow ${\mathscr F}$ to be an infinite set.
${\rm Holant}({\mathscr F})$ is tractable means that
it is computable in P even when we include the description of the
functions in the input $\Omega$ in  the input size. ${\rm
Holant}({\mathscr F})$ is   \#P-hard  means that there exists a
finite subset of ${\mathscr F}$ for which the problem is  \#P-hard.
For considerations of
models of computation, function values are algebraic in ${\mathbb{C}}$.}
Some special families of Holant problems have already been widely
studied. For example, if all the  {\sc Equality} functions are in ${\mathscr F}$
then this is exactly the
weighted \#CSP problem.
Other well-studied special families of Holant are Holant$^*$ and Holant$^c$.
\begin{definition}
Let ${\mathscr U}$ denote the set of all unary functions.
Then ${\rm Holant}^* ({\mathscr F})
= {\rm Holant}({\mathscr F}\cup {\mathscr U})$.
\end{definition}

\begin{definition}
 ${\rm Holant}^c ({\mathscr F})
= {\rm Holant}({\mathscr F}\cup \{\Delta_0, \Delta_1\})$.
\end{definition}

\#CSP$({\mathscr F})$ is equivalent to
${\rm Holant}({\mathscr F}\cup \{\Delta_0, \Delta_1, =_1, =_2, =_3, \cdots\})$.
We define \#CSP$_2^c$ as the follows.
\begin{definition}
 ${\rm \#CSP}_2^c ({\mathscr F})
= {\rm Holant}({\mathscr F}\cup \{\Delta_0, \Delta_1, =_2, =_4, =_6, \cdots\})$.
\end{definition}

In the above definitions, the functions and domain  $\{0,1\}$ are without
structures.
However, as we describe the complexity classification of counting problems,
especially for tractable problems, we may assign structures
to the domain and the functions.
We may consider polynomials in $\mathbb{Z}[x_1,x_2\ldots,x_n]$
with each $x_i$ taking values from $\{0,1\} \subseteq \mathbb{Z}$;
the evaluation in $\mathbb{Z}$.
In another setting,
we may consider the domain
as a finite field $\mathbb{Z}_2$ of size $2$, and  $\{0,1\}^n$
as a vector space of dimension $n$ over $\mathbb{Z}_2$.

%

\begin{definition}[Support]
The underlying relation, also called the support of a function $f$ is given by ${\rm supp}(f)=\{x \in \{0,1\}^n| f(x) \neq 0\}$.
\end{definition}

We say a relation $R \subseteq \{0,1\}^n$ is affine if it is
an affine linear subspace of $\mathbb{Z}_2^n$.
It is composed of solutions of some system  $Ax=b$
of affine linear equations over $\mathbb{Z}_2$.
If $\text{supp}(f)$ is affine, we say
$f$ has affine support.
We also view this relation as a 0-1 valued indicator function $\chi_{Ax=b}$.

\begin{definition}[Compressed function \cite{CaiFu16}]
If $f$ has affine support of dimension $r$,
and $X = \{x_{j_1},\ldots,x_{j_r}\} \subseteq \{x_1,x_2, \ldots,x_n\}$
 is a set of free variables for ${\rm supp}(f)$, then
$\underline{f_X}$ is the compressed function of $f$ for $X$ such that $\underline{f_X}(x_{j_1},\ldots,x_{j_r})=f(x_1,x_2, \ldots,x_n)$, where $(x_1,x_2, \ldots,x_n) \in {\rm supp} (f)$. When it is clear from the context, we omit $X$ and use $\underline{f}$ to denote $\underline{f_X}$.
\end{definition}


If $f$ has affine support,
then $r = \dim {\rm supp}(f)$
 is called the rank of $f$.
Usually, we may rename variables
so that $x_1,x_2 \ldots x_r$ is a set of free variables.

\begin{definition}[Product type: $\mathscr{P}$]
$\mathscr{P}$ denotes the class of functions which can be expressed
as a product of unary functions, binary equality functions
($[1,0,1]$) and binary disequality functions ($[0,1,0]$).
\end{definition}

\begin{definition}[Affine: $\mathscr{A}$]
$\mathscr{A}$ denotes all functions $f: \{x_1, x_2, \ldots, x_n\}\rightarrow \mathbb{C}$ satisfying the following conditions:
\begin{itemize}
\item ${\rm supp}(f)$ is affine $\chi_{(Ax=b)}$.
\item  Assume  $x_1,x_2,\ldots,x_r$ are free variables.
The compressed function of $f$ is
$\lambda \cdot i^{L(x_1,\ldots, x_r)+2Q(x_1,\ldots, x_r)}$
(for some nonzero constant $\lambda \in \mathbb{C}$)
where $L$
is an integer coefficient linear polynomial
 and $Q$
is an integer coefficient multilinear polynomial
where each monomials has degree 2.
\end{itemize}
Of course, $f$ is the product $
\lambda \cdot  \chi_{(Ax=b)} \cdot i^{L(x_1,\ldots, x_r)+2Q(x_1,\ldots, x_r)}$.
\end{definition}

We use  $\alpha$ to denote $e^{\frac{\pi}{4}i} = \frac{1+i}{\sqrt{2}}$,
a square root of $i$. The notation
$a \equiv b$ means $a =b \mod 2$.


%
A matrix $M \in \mathbb{C}^{2 \times 2}$ defines
a holographic transformation $f \mapsto  M^{\otimes n}(f)$,
where we list the values of $f$ as a column vector indexed by $\{0,1\}^n$.
Let $M_\alpha=\begin{bmatrix}1 & 0\\
0 & \alpha \end{bmatrix}$, and $M_1=I_2=\begin{bmatrix}1 & 0\\
0 & 1 \end{bmatrix}$.

\begin{definition}[$\alpha$ dual affine: $\mathscr{A}^ \alpha$]
 $\mathscr{A}^ \alpha = \{M_\alpha^{\otimes {\rm arity}(f)} (f) \mid f \in \mathscr{A}\}$.
\end{definition}
The inverse transformation of  $M_\alpha$
is $M_{\alpha^{-1}}$.
A function $f$ is in $\mathscr{A}^\alpha$ iff $M_{\alpha^{-1}}^{\otimes n} f$ is in $\mathscr{A}$.

\begin{theorem} \label{thm old csp dichotomy}
A $\#\text{\rm CSP}(\mathscr{F})$ problem has polynomial time algorithm, if one of the following holds,
\[\mathscr{F} \subseteq \mathscr{P} ~~~~~\mbox{or}~~~~~\mathscr{F} \subseteq \mathscr{A}.\]
Otherwise, it is \#P-hard.
\end{theorem}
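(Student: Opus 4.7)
The plan is to treat tractability and hardness separately. For the tractability side, if $\mathscr{F} \subseteq \mathscr{P}$, every $f \in \mathscr{F}$ factors as a product of unary, binary equality $[1,0,1]$ and binary disequality $[0,1,0]$ factors. The Holant instance therefore decomposes into edge-disjoint chains and cycles carrying only equality/disequality constraints with unary attachments, so the sum can be evaluated by local contraction in linear time. If $\mathscr{F} \subseteq \mathscr{A}$, each $f \in \mathscr{F}$ equals $\lambda \cdot \chi_{Ax=b} \cdot i^{L(x) + 2Q(x)}$ on its support. Aggregating the local constraints yields a global affine system $A\sigma = b$ over $\mathbb{Z}_2$ together with a global exponent of the same $L + 2Q$ form, so the Holant becomes a quadratic Gauss sum over $\mathbb{F}_2$. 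After parameterizing the solution space by free variables and diagonalizing $Q$ via a linear change of variables (tracking the mod-$4$ exponent carefully), the sum factors into one-variable Gauss sums and is computable in polynomial time.

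For the hardness direction, assume $\mathscr{F} \not\subseteq \mathscr{P}$ and $\mathscr{F} \not\subseteq \mathscr{A}$. The first step is to realize the pinning functions $\Delta_0, \Delta_1$: because $\#\mathrm{CSP}(\mathscr{F})$ contains all {\sc Equality} functions, one invokes the pinning lemma of Dyer--Goldberg--Jerrum \cite{weightedCSP}, which uses $=_3$ together with any non-degenerate witness to break weight symmetry. Next, by pinning variables and summing out auxiliary variables, I would reduce to the presence of small-arity witnesses $f \in \mathscr{F}\setminus \mathscr{P}$ and $g \in \mathscr{F}\setminus \mathscr{A}$ (possibly $f = g$), exploiting the closure of $\mathscr{P}$ and $\mathscr{A}$ under variable fixing and marginalization. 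Finally, I would build gadgets with $=_2$ and these witnesses to interpolate a hard-core constraint such as a weighted {\sc NAE} or $[1,0,0,1]$, and conclude hardness by reduction from \#SAT.

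The main obstacle is the interaction of the two non-containment conditions: a single function may lie in $\mathscr{A} \setminus \mathscr{P}$ or in $\mathscr{P} \setminus \mathscr{A}$, so hardness must be manufactured from the joint structure of a non-$\mathscr{P}$ witness and a non-$\mathscr{A}$ witness, not from either alone. Overcoming this requires showing that any such pair can be combined in a single gadget whose interpolation spectrum is Zariski-dense enough to realize a known \#P-hard template. In the 0/1 case this is the original argument of \cite{CreignouH96}, in the non-negative case that of \cite{weightedCSP}, and the general complex case with cancellations requires the algebraic eigenvalue-density and Galois-theoretic arguments of \cite{CaiLX14}. The technical core is the structural classification of what remains after exhausting all local operations, which I expect to account for the bulk of the proof.
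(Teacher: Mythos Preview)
The paper does not prove this theorem at all: it is stated in the Preliminaries as a known result, with attribution (in the introduction and surrounding text) to \cite{CreignouH96}, \cite{weightedCSP}, and \cite{CaiLX14}. There is no ``paper's own proof'' to compare against; the theorem is used as a black box throughout (e.g., in the proof of the \#P-hardness part of Theorem~\ref{thm csp2 theorem} and in Lemma~\ref{lemma-odd-equality}).

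Your sketch is a reasonable high-level summary of how the cited papers establish the result, and in particular your identification of the main obstacle---that hardness must be manufactured from the \emph{joint} presence of a non-$\mathscr{P}$ witness and a non-$\mathscr{A}$ witness, and that the complex-valued case requires the algebraic interpolation machinery of \cite{CaiLX14}---is accurate. But since the present paper simply invokes the theorem, the appropriate ``proof'' here is a one-line citation, not a reconstruction.
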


The following two families of functions are used in the dichotomy for $\#\text{Holant}^*(\mathscr{F})$.
$\mathcal{M}$ is the set of all functions $f$ such that $f$ is zero
except on $n+1$ inputs whose Hamming weight is at most $1$, where $n$ is the
arity of $f$. The name $\mathcal{M}$ is given for {\it matching}.
$\mathcal{T}$ is the set of all functions of arity at most $2$.
To discuss the complexity of Holant problem,
we may always remove identically zero functions.

\begin{lemma}\label{lemma-decomposable}
Let $f=g\otimes h$, none of them identically 0.
 Then Holant$^c(\mathscr{F} \cup \{f\}) \equiv_{\rm T}$ Holant$^c(\mathscr{F} \cup \{g, h\})$. Holant$^*(\mathscr{F} \cup \{f\}) \equiv_{\rm T}$ Holant$^*(\mathscr{F} \cup \{g, h\})$.
\end{lemma}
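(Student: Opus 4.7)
The plan is to establish both reductions in each equivalence, and the same argument handles both Holant$^c$ and Holant$^*$ since we only need the unary pinning functions $\Delta_0, \Delta_1$, which are available in both settings.

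First I would handle the easy direction, showing Holant$^c(\mathscr{F} \cup \{f\}) \leq_{\rm T} \text{Holant}^c(\mathscr{F} \cup \{g, h\})$. Let $n_g$ and $n_h$ be the arities of $g$ and $h$, so that $f$ has arity $n = n_g + n_h$ and $f(x_1,\ldots,x_n) = g(x_1,\ldots,x_{n_g}) \cdot h(x_{n_g+1},\ldots,x_n)$. Given any signature grid instance using $f$, replace each vertex labeled by $f$ with two disjoint vertices labeled $g$ and $h$, routing the first $n_g$ incident edges to the $g$-vertex and the remaining $n_h$ incident edges to the $h$-vertex. The Holant value is preserved term by term because each edge assignment $\sigma$ satisfies $f(\sigma|_{E(v)}) = g(\sigma|_{E(v_g)}) \cdot h(\sigma|_{E(v_h)})$.

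Next I would prove the reverse direction, realizing $g$ (and symmetrically $h$) from $\{f, \Delta_0, \Delta_1\}$. Because $h$ is not identically zero, there is some ${\bf y} = (y_1,\ldots,y_{n_h}) \in \{0,1\}^{n_h}$ with $c := h({\bf y}) \neq 0$. Build a gadget that takes a vertex labeled by $f$ with $n$ dangling edges and attaches $\Delta_{y_i}$ to the $(n_g+i)$-th edge for each $1 \le i \le n_h$. Summing over the pinned edge values leaves $n_g$ external edges, and the gadget realizes the function
\[
{\bf x} \mapsto \sum_{{\bf z}} f({\bf x}, {\bf z}) \prod_{i=1}^{n_h} \Delta_{y_i}(z_i) = f({\bf x}, {\bf y}) = h({\bf y}) \cdot g({\bf x}) = c \cdot g({\bf x}).
\]
Since nonzero scalar multiples of constraint functions define the same Holant problem (as noted in the preliminaries), this gadget realizes $g$. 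The symmetric construction, fixing the first $n_g$ inputs to some ${\bf x}^*$ with $g({\bf x}^*) \neq 0$, realizes $h$. Using these two gadgets in place of each occurrence of $g$ and $h$ in an instance of Holant$^c(\mathscr{F} \cup \{g,h\})$ and dividing by the product of the nonzero scaling constants (one per gadget use, easily tracked) gives the T-reduction.

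For the Holant$^*$ statement, the identical gadget-based proof applies verbatim, since $\Delta_0, \Delta_1 \in \mathscr{U}$ are already in the available function set. There is no real obstacle here; the only thing to be careful about is tracking the scaling factors from repeated pinning uses, but this is a finite count bounded by the instance size and trivially computable, so the reduction remains polynomial time.
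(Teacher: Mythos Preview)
Your proof is correct and is the standard argument for this folklore lemma. The paper itself states this lemma without proof, so there is nothing to compare against; your pinning construction using $\Delta_0,\Delta_1$ to extract $g$ and $h$ up to nonzero scalars is exactly the expected approach.
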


So we only work with functions which cannot be further decomposed.

\begin{theorem} \label{thm old holant-star-dichotomy}
Let $\mathscr{F}$ be a set of non-decomposable functions.
Then $\#\text{Holant}^*(\mathscr{F})$ problem has polynomial time algorithm, if one of the following holds,
\[\mathscr{F} \subseteq \mathscr{T} ~~\mbox{or}~~\mathscr{F} \subseteq H\mathscr{P}~~\mbox{or}~~\mathscr{F}
\subseteq Z\mathscr{P} ~~\mbox{or}~~\mathscr{F} \subseteq Z\mathscr{M},\]
where $H$ is an orthogonal matrix and  $Z=\begin{pmatrix} 1 &  1  \\
                i  & -i
\end{pmatrix}$, or
$\begin{pmatrix} 1 &  1  \\
                -i  & i
\end{pmatrix}$.
Otherwise, it is \#P-hard.
\end{theorem}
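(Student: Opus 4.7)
The plan is to separate tractability from hardness and, on the tractability side, to reduce each of the four listed families to a known polynomial-time computation via a global holographic transformation that exploits the availability of all of $\mathscr{U}$.

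For $\mathscr{T}$ the signature grid is essentially a weighted graph, so the Holant sum factors into local matrix-vector products and is computable in polynomial time. For $H\mathscr{P}$ with $H$ orthogonal, the binary equality edge function $(=_{2})$ is invariant under $H\otimes H$ because $H^{T}H=I$, so a global holographic transformation by $H$ gives $\text{Holant}^{*}(H\mathscr{P})\equiv_{\rm T}\text{Holant}^{*}(\mathscr{P})$; functions in $\mathscr{P}$ are tensor products of unaries, binary equalities and binary disequalities, and the computation splits into independent local evaluations. For $Z\mathscr{P}$ one computes that $Z^{T}Z$ is a nonzero scalar multiple of the disequality function $[0,1,0]$, so the $Z$-transformation converts the problem to the $\mathscr{P}$ computation over disequality edges, still tractable after complementing one endpoint of each edge. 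For $Z\mathscr{M}$ the same transformation leaves a problem whose vertex functions have support on Hamming weight at most $1$ over disequality edges; this reduces to a perfect-matching-style summation solvable in polynomial time via a matchgate/Pfaffian argument combined with the unary cancellations available in Holant$^{*}$.

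On the hardness side, the plan is to exploit the freely available $\mathscr{U}$ to perform pinning and arbitrary diagonal transformations vertex by vertex, and then to do a case analysis on arity. The normalization step uses the non-decomposability hypothesis (via Lemma \ref{lemma-decomposable}) to assume each $f\in\mathscr{F}$ cannot be factored, and uses unaries to move $f$ to a canonical form in its $\mathbf{GL}_{2}(\mathbb{C})$ orbit. For binary $f$ the $2\times 2$-matrix classification places $f$ either in $\mathscr{T}$ outright, in one of the three remaining families after conjugation by some orthogonal $H$ or by $Z$, or in a position where interpolation produces a known $\#$P-hard signature such as a hard two-spin Ising evaluation. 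For higher arity one pins all but two or three inputs of $f$ using unaries, obtaining a low-arity restriction which must itself violate all four tractable classes (otherwise $f$ itself would lie in one of them up to transformation), and then invokes the binary case.

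The main obstacle will be to establish maximality of the union $\mathscr{T}\cup H\mathscr{P}\cup Z\mathscr{P}\cup Z\mathscr{M}$: for every $f\notin\mathscr{T}\cup H\mathscr{P}\cup Z\mathscr{P}\cup Z\mathscr{M}$ one must exhibit a concrete gadget, built from $f$ and $\mathscr{U}$, whose signature can be reduced from a known $\#$P-hard problem such as Theorem \ref{thm old csp dichotomy} or the counting perfect matching problem on non-planar graphs. This requires careful bookkeeping of the stabilizers acting on each family, since a generic function can ``just miss'' one class while almost lying in another, and interpolation arguments via recurrences whose Vandermonde systems must be shown to have full rank (so that all eigenvalue coincidences are ruled out). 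The delicate combinatorics of choosing the right gadget to certify that each obstruction is genuinely hard, and not absorbable by a conjugation into a different family, is the crux of the proof.
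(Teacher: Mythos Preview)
This theorem is not proved in the paper at all: it appears in the Preliminaries (Section~2) as a known result, quoted from~\cite{CaiLX11} and used only as a black box in the proof of Theorem~\ref{thm:reduction-of-holant-c-to-holstar-csp2}. There is therefore no ``paper's own proof'' to compare your proposal against.

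That said, a few remarks on your sketch. The tractability arguments for $\mathscr{T}$, $H\mathscr{P}$, and $Z\mathscr{P}$ are roughly right in spirit, but your treatment of $Z\mathscr{M}$ is incorrect: you invoke a ``matchgate/Pfaffian argument,'' but Pfaffian methods apply only on planar instances, whereas Holant$^{*}$ inputs are arbitrary graphs, and counting (not necessarily perfect) matchings on general graphs is \#P-hard. The actual tractability of $Z\mathscr{M}$ in Holant$^{*}$ does not go through Pfaffians; it relies on the specific structure of $\mathscr{M}$-signatures together with the freely available unaries to decompose the instance. On the hardness side, your plan to pin down to arity two or three and then quote the binary case is far too coarse: a restriction of a non-decomposable $f$ can very well land inside one of the tractable families even when $f$ itself does not, so the parenthetical ``otherwise $f$ itself would lie in one of them up to transformation'' is false as stated. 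The genuine proof in~\cite{CaiLX11} requires a much more elaborate inductive framework with carefully chosen gadgets and interpolation, not a one-step pinning reduction.
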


%
%
%
%
%
%
%

\section{Local Affine Functions}\label{sec:local-affine}
The next definition is crucial for this work.

\begin{definition}[Local affine: $\mathscr{L}$]\label{def:local-affine}
A function $f$ is in $\mathscr{L}$, if and only if for each $\sigma=s_1 s_2 \cdots s_n \in \{0, 1\}^n$ in the support of $f$, $(M_{\alpha^{s_1}} \otimes M_{\alpha^{s_2}} \otimes \cdots \otimes M_{\alpha^{s_n}} ) f$ is in $\mathscr{A}$.
\end{definition}

The notation $M_{\alpha^{s_j}}$ just means that
when $s_j=0$ the $j$th input is transformed by the
 identity matrix (in fact, not transformed) and when $a_j=1$
 the $j$th input is transformed by $M_\alpha$. This is very interesting since each arity performs a possible different holographic transformation. This is why we use the term "local" to name this family of functions.

 By Definition~\ref{def:local-affine},
$f \in \mathscr{L}$ if and only if
for each $\sigma=s_1 s_2 \cdots s_n
\in {\rm supp}(f)$, the transformed function
 ${M}_{\sigma} f: (x_1, \ldots, x_n) \mapsto
\alpha^{\sum_{i=1}^n s_i x_i} f(x_1, \ldots, x_n)$
is in $\mathscr{A}$.
Here each $s_i$ is a 0-1 valued integer, and the sum $\sum_{i=1}^n s_i x_i$
is evaluated as an integer (or an integer mod 8).


Of course the identically 0 function belongs to $\mathscr{L}$.
If $f \in \mathscr{L}$ and is not identically 0,
then there is some $\sigma \in {\rm supp}(f)$,
such that ${M}_{\sigma} f \in \mathscr{A}$.
It is apparent by its form,
${M}_{\sigma} f = \alpha^{\sum_{i=1}^n s_i x_i} f$,  that
$f^2\in\mathscr{A}$.
Since $f$ and $f^2$ have the same support, it follows that
${\rm supp}(f)$
  is an affine linear subspace over $\mathbb{Z}_2$.
Assume that $f$ has affine support  with
 free variables $x_1, \ldots, x_r$, let ${\rm supp}(f)$ be  described by
the  data $(A, b)$, where
$A \in \{0, 1\}^{n \times r}$ is a 0-1 integer matrix
of which the top $r \times r$ matrix is $I_r$,
and $b \in \{0, 1\}^{n}$ is a 0-1 integer  vector
of which the  first $r$ entries are all 0.
The $i$-th 0-1 variable $x_i$ ($1 \le i \le n$)
is expressed as  $x_i \equiv \sum_{j=1}^r a_{ij} x_j  +  b_i  \bmod 2$.
Then by $f^2\in\mathscr{A}$, we have the following expression
\begin{equation}\label{generic-form-f-sq-in-A}
f = \lambda \cdot {\rm supp}(f)  \cdot \alpha^{\sum_{S \subseteq [r]} c_S \prod_{j \in S} x_j}
\end{equation}
where $\lambda \not =0$, $c_S \in \mathbb{Z}$ for all $S \subseteq [r]$.
This is easy to see because after a global scaling, each nonzero entry of $f^2$ is a power of $i$ and as a result
each nonzero entry of $f$ is a power of $\alpha$. Every such function can be expressed in the above formula.
Since $\alpha^8 =1$ we may consider all coefficients
$c_S$ belong to $\mathbb{Z}_8$.
The multilinear
polynomial in  $\mathbb{Z}_8 [x_1, \ldots, x_r]$
is unique for $f$.
To see this,
take the quotient  $q(x)$ of two expressions on the
support, and write  $q(x) = {\rm supp}(f) \cdot \alpha^{P(x)}$.
If the multilinear polynomial $P(x) \in \mathbb{Z}_8 [x_1, \ldots, x_r]$
 is not identically zero, then let $S \subseteq [r]$
be of minimum cardinality such that $\prod_{j \in S} x_j$ is a term
with non-zero coefficient in $P(x)$.
Assigning $x_j$ to 1 for all $j \in S$, and all other $x_j$ to 0, for $j \in
[r] \setminus S$, shows that
$q(x)$ is not identically 1 on ${\rm supp}(f)$.
Then by the fact that $f^2\in\mathscr{A}$ we know that
$c_S \equiv 0 \bmod 2$ for $|S|=2$,  and
 $c_S \equiv 0 \bmod 4$ for
$|S| \ge 3$.
 We may normalize it so that $\lambda =1$ and
$c_{\emptyset} = 0$.  We can write it more explicitly as
\begin{equation}\label{generic-form-f-sq-in-A-detailed}
f =  \lambda \cdot {\rm supp}(f) \cdot  \alpha^{L(x) + 2 Q(x) + 4 H(x)}
\end{equation}
where  $L(x) = \sum_{j=1}^r c_j x_j$ is a linear function,
$Q(x) = \sum_{1 \le j < k \le r} c_{jk} x_j x_k$
 is a quadratic (multilinear)  polynomial,
and $H(x) = \sum_{1 \le j < k < \ell \le r} c_{jk\ell}  x_j x_k x_{\ell}
+ \cdots$
is a  (multilinear)  polynomial with all monomials  of
degree at least 3.

Any $(s_1, \ldots, s_r) \in \{0, 1\}^r$
determines  a unique point  $\sigma = (s_1, \ldots, s_r, s_{r+1},
\ldots, s_n) \in \{0, 1\}^n$ in ${\rm supp}(f)$,
from which we get the transformed function
$\alpha^{\sum_{i=1}^n s_i x_i} f \in \mathscr{A}$.
Here each $s_i$ is a 0-1 constant and $x_i$ is a 0-1 variable.

\subsection{Algorithm} \label{sec: L algorithm}

 The most interesting and surprising discovery of this work is the following polynomial time algorithm.

\begin{theorem}
There is a polynomial time algorithm for Holant$(\mathscr{L})$.
\end{theorem}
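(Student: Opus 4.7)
The plan is to turn each instance of $\mathrm{Holant}(\mathscr{L})$ into an equivalent instance of $\mathrm{Holant}(\mathscr{A})$ by performing one global combinatorial step followed by a synchronized holographic transformation at every vertex. Given a signature grid $\Omega = (G, \mathscr{F}, \pi)$, I first find an edge assignment $\tau : E \to \{0,1\}$ with $\tau_v := \tau|_{E(v)} \in \mathrm{supp}(f_v)$ for every vertex $v$. Since each $\mathrm{supp}(f_v)$ is an affine subspace of $\{0,1\}^{d_v}$ (as argued following Definition~\ref{def:local-affine}), the joint condition is a single global affine system over $\mathbb{F}_2$, solvable in polynomial time by Gaussian elimination; if no solution exists then the Holant sum vanishes.

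Having fixed such a $\tau$, the defining property of $\mathscr{L}$ guarantees that $g_v := \bigl(\bigotimes_{e \in E(v)} M_{\alpha^{\tau_v(e)}}\bigr) f_v \in \mathscr{A}$, with explicitly $g_v(y) = \alpha^{\tau_v \cdot y} f_v(y)$ where the dot product is taken over $\mathbb{Z}$. Inverting and substituting into the Holant sum yields
\begin{equation*}
\mathrm{Holant}_\Omega \;=\; \sum_{\sigma : E \to \{0,1\}} \alpha^{-\sum_v \tau_v \cdot \sigma_v} \prod_v g_v(\sigma_v).
\end{equation*}
Each edge contributes to exactly two input slots (one per endpoint, counted with multiplicity for self-loops), so $\sum_v \tau_v \cdot \sigma_v = 2\sum_{e \in E} \tau(e)\sigma(e)$ over $\mathbb{Z}$, and the scalar prefactor collapses to $\prod_e i^{-\tau(e)\sigma(e)}$, a product of edge-local unary functions each equal to $[1,1]$ or $[1,-i]$. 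Both of these unaries lie in $\mathscr{A}$, so they can be absorbed into one endpoint of each edge without leaving $\mathscr{A}$.

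The computation has thus been reduced to an instance of $\mathrm{Holant}(\mathscr{A})$ on the same graph. Its value is a Gauss-type character sum $\sum_{\sigma \in A} i^{Q(\sigma)}$ over an affine subspace $A \subseteq \{0,1\}^E$ with $Q$ multilinear of degree at most $2$ in the free variables of $A$, and this is evaluable in polynomial time by the diagonalization of quadratic forms over $\mathbb{F}_2$ that underlies the tractability of $\mathscr{A}$ in Theorem~\ref{thm old csp dichotomy}. The delicate step in the plan is the parity cancellation $\sum_v \tau_v \cdot \sigma_v = 2 \sum_e \tau(e)\sigma(e)$: this is exactly what makes the simultaneous localized holographic transformations --- each driven by the \emph{same} global $\tau$ --- recombine into a per-edge $\mathscr{A}$-function rather than something more entangled. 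Once this collapse is observed, everything else reduces to linear algebra over $\mathbb{F}_2$ and standard affine Gauss-sum evaluation.
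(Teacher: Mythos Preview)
Your proof is correct and follows essentially the same approach as the paper: solve a global affine system over $\mathbb{F}_2$ to find a simultaneous support point $\tau$, then apply the localized $M_\alpha$ transformations dictated by $\tau$ to land every vertex function in $\mathscr{A}$, and observe that what remains on each edge is still an $\mathscr{A}$-function. The paper phrases the last step as ``the two inverse $M_\alpha$ transformations on an edge will also get a function in $\mathscr{A}$,'' which is exactly your explicit parity-cancellation computation $\sum_v \tau_v \cdot \sigma_v = 2\sum_e \tau(e)\sigma(e)$ yielding the edge factor $[1,0,-i]\in\mathscr{A}$.
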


\begin{proof}
We first focus on the support but ignore the concrete value of the functions.  Since the support of the function at each vertex is affine, we can solve a linear system to get an assignment for all the edges which is on the support for all the functions. If the linear system does not have any solution, we can simply output zero since there is no assignment which can give possible non-zero value. Once we get a assignment for edges which is simultaneously on the support for all the functions, we can perform an $M_\alpha$ transformation on all the edges with assignment $1$. By the definition of $\mathscr{L}$, all the functions are in $\mathscr{A}$ after the transformation and the two inverse $M_\alpha$ transformations on a edge will also get a function in  $\mathscr{A}$.  Therefore,  we have an instance with same holant value but all the functions are $\mathscr{A}$. We know that there is a polynomial time algorithm to compute the holant value.
\end{proof}

It is clear that all the equality with even arity and the two constant unary function $\Delta_0, \Delta_1$ are in this family $\mathscr{L}$. So, we have the following corollary.

\begin{corollary} \label{corollary  csp2c L}
There is a polynomial time algorithm for \#{\rm CSP}$_2^c(\mathscr{L})$.
\end{corollary}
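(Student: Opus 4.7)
The plan is to reduce a given Holant$(\mathscr{L})$ instance $\Omega=(G,\mathscr{L},\pi)$ to a tractable Holant$(\mathscr{A})$ instance by first solving a global linear system that produces a consistent ``support witness'' on all edges, then performing a local holographic transformation edge by edge that is controlled by this witness. The paragraph after Definition~\ref{def:local-affine} observed that every nonzero $f\in\mathscr{L}$ satisfies $f^2\in\mathscr{A}$, hence $\mathrm{supp}(f)$ is an affine subspace of $\mathbb{Z}_2^n$. Consequently the set $T\subseteq\mathbb{Z}_2^{E}$ of edge assignments $\tau$ with $\tau|_{E(v)}\in\mathrm{supp}(f_v)$ for every vertex $v$ is the solution set of a single affine linear system over $\mathbb{Z}_2$; Gaussian elimination either produces a witness $\tau\in T$ in polynomial time or certifies $T=\emptyset$. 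In the latter case every summand of the Holant sum is zero, so the algorithm outputs $0$.

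Given $\tau$, I orient every edge $e=(u,v)$ with $\tau(e)=1$ arbitrarily and apply $M_\alpha$ to the input of $f_u$ corresponding to the head of $e$ and $M_{\alpha^{-1}}$ to the input of $f_v$ corresponding to its tail; edges with $\tau(e)=0$ are left alone. Since the pair $(M_\alpha,M_{\alpha^{-1}})$ is mutually inverse in the standard bilinear pairing along the edge, this modification preserves $\mathrm{Holant}_\Omega$. The transformed function at a vertex $v$ has the form $f_v'=(D_1\otimes\cdots\otimes D_{\deg v})\,f_v$ with each $D_j\in\{I,M_\alpha,M_{\alpha^{-1}}\}$ determined by $\tau$ and the chosen orientation. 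By Definition~\ref{def:local-affine} applied to the point $\tau|_{E(v)}\in\mathrm{supp}(f_v)$, the ``all-positive'' version $g_v=(M_\alpha^{\tau(e_1)}\otimes\cdots\otimes M_\alpha^{\tau(e_{\deg v})})\,f_v$ lies in $\mathscr{A}$, and $f_v'$ differs from $g_v$ only by the extra factor $M_\alpha^{-2}=M_{-i}$ on precisely those inputs where the orientation placed $v$ on the $M_{\alpha^{-1}}$ side. Writing a function in $\mathscr{A}$ in its canonical form $\lambda\,\chi_{Ax=b}\,i^{L(x)+2Q(x)}$ shows that multiplying one input by $(-i)^{x_j}$ merely adjusts the linear exponent $L$, so $f_v'\in\mathscr{A}$ as well.

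The resulting instance $\Omega'$ is therefore an equivalent Holant$(\mathscr{A})$ instance. Because every variable in $\Omega'$ appears in exactly two constraints, $\Omega'$ is a special case of $\#\mathrm{CSP}(\mathscr{A})$, which is polynomial-time by Theorem~\ref{thm old csp dichotomy}, completing the algorithm. The main delicate point is the step showing $f_v'\in\mathscr{A}$: Definition~\ref{def:local-affine} directly guarantees $\mathscr{A}$-membership only for the uniform ``positive'' transformation $M_\alpha^{\tau(e)}$ at every input of $f_v$, whereas Holant-preservation forces a mix of $M_\alpha$ and $M_{\alpha^{-1}}$ on opposite ends of each $\tau=1$ edge. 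The closure property that bridges the gap --- namely, that $\mathscr{A}$ is invariant under postcomposition of any single input by $M_{\pm i}$ --- is what makes the local holographic transformation work globally and is the structural fact I expect to verify with care.
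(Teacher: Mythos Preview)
Your argument is correct, but you have essentially reproved the preceding Theorem (the polynomial-time algorithm for ${\rm Holant}(\mathscr{L})$) rather than the Corollary. The paper's proof of the Corollary is a single line: the pinning functions $\Delta_0,\Delta_1$ and all even-arity {\sc Equality} functions belong to $\mathscr{L}$, so a $\#{\rm CSP}_2^c(\mathscr{L})$ instance is literally a ${\rm Holant}(\mathscr{L})$ instance and the Theorem applies directly. You never verify this containment, so strictly speaking there is a small gap --- your algorithm is stated for ${\rm Holant}(\mathscr{L})$, but a $\#{\rm CSP}_2^c(\mathscr{L})$ signature grid also contains vertices labeled $\Delta_0,\Delta_1,(=_{2k})$, and your local-transformation step needs those to land in $\mathscr{A}$ as well. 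This is easy to check (e.g.\ $M_\alpha^{\otimes 2k}(=_{2k}) = [1,0,\ldots,0,i^k]\in\mathscr{A}$), but it is precisely the content of the Corollary that you omitted.

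As for your version of the Theorem's proof, it differs from the paper's only in bookkeeping. The paper applies $M_\alpha$ to \emph{both} endpoints of every $\tau=1$ edge; this puts each vertex function into $\mathscr{A}$ directly by Definition~\ref{def:local-affine}, at the cost of leaving a compensating binary weight $(M_\alpha^{-1})^{\tt T}(=_2)M_\alpha^{-1} = [1,0,-i]$ on each such edge, which is itself in $\mathscr{A}$. You instead orient each edge and split the transformation as $M_\alpha$ on one side and $M_{\alpha^{-1}}$ on the other, keeping the edge trivial but then having to argue the extra closure fact that postcomposing a single input of an $\mathscr{A}$ function by $M_{-i}$ stays in $\mathscr{A}$. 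Both routes are valid and rest on the same structural point; the paper's version simply avoids the orientation choice and the per-vertex closure argument.
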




\subsection{Characterization}
From the definition of local affine function, it is not easy to check if a given function is in this family or not. It is not even clear
if there exists any interesting new function in this or not. In this subsection, we give an explicit characterization of this family. First of all, if $f^2 \not \in \mathscr{A}$, then $f \not \in \mathscr{L}$. So, we only need to characterize the functions with $f^2 \not \in \mathscr{A}$, or equivalently functions which are already in the form of   (\ref{generic-form-f-sq-in-A-detailed}).

\begin{theorem} \label{thm: Local affine}
A function $f$ defined in (\ref{generic-form-f-sq-in-A-detailed})
belongs to $\mathscr{L}$ iff $H$ is homogeneous of degree 3
and the following set of equations  hold
over $\mathbb{Z}_2$ relating the data $(A,b)$ for  ${\rm supp}(f)$
and the coefficients $c_S$,
\begin{equation}\label{eqn-homogeneous-combined}
 \sum_{i=1}^{n} \prod_{j \in S} a_{ij}  \equiv 0~~~~
(\forall S \subseteq [r], ~\mbox{such that}~ 1 \le |S| \le 4)
\end{equation}
and
\begin{equation}\label{eqn-inhomogeneous-combined}
\sum_{i=1}^{n} \prod_{j \in S} a_{ij}  b_i \equiv c_{S} ~~~~
(\forall S \subseteq [r], ~\mbox{such that}~ 1 \le |S| \le 3)
\end{equation}
\end{theorem}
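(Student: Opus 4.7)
The plan is to compute, for each $\sigma = (s_1, \ldots, s_n) \in {\rm supp}(f)$, the compressed function of $M_\sigma f = \alpha^{\sum_i s_i x_i} \cdot f$ explicitly as an $\alpha$-power of a multilinear polynomial in the free variables $x_1, \ldots, x_r$, and then match it against the form required for $\mathscr{A}$-membership. A function $g$ with affine support lies in $\mathscr{A}$ iff its compressed function has the form $\mu \cdot \alpha^{P(x)}$ where the multilinear polynomial $P \in \mathbb{Z}_8[x_1, \ldots, x_r]$ has linear coefficients even, quadratic coefficients divisible by $4$, and no monomials of degree $\ge 3$ (coefficient divisible by $8$). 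So the whole argument reduces to a degree-by-degree inspection of the resulting exponent.

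First I would write each determined variable $x_i$ ($i>r$) on the support as an integer-valued polynomial in the free variables. Using $\tilde{x}_i = b_i + (1-2b_i)T_i$ with $T_i = \bigoplus_{j \in S_i} x_j = \sum_{\emptyset \ne U \subseteq S_i}(-2)^{|U|-1}\prod_{j \in U} x_j$ (where $S_i = \{j \le r : a_{ij}=1\}$), the coefficient of $\prod_{t \in T} x_t$ in $\tilde{x}_i$ (for $|T| \ge 1$) is $\epsilon_i\,(-2)^{|T|-1}\prod_{t \in T} a_{it}$ with $\epsilon_i = 1-2b_i$; this formula extends uniformly to $i \le r$ using $b_i=0$ and $a_{ij} = \delta_{ij}$. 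Hence the exponent of $\alpha$ in the compressed $M_\sigma f$ is
\begin{equation*}
E_\sigma(x) \equiv L(x) + 2Q(x) + 4H(x) + \sum_{i=1}^n s_i \tilde{x}_i(x) \pmod{8}.
\end{equation*}

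Next I would perform a degree-by-degree analysis of $E_\sigma(x)$. Since $(-2)^{|T|-1} \equiv 0 \pmod 8$ for $|T| \ge 4$, for such $T$ the only contribution is $4c_T$ from $4H$; demanding $\equiv 0 \pmod 8$ forces $c_T \equiv 0 \pmod 2$, which means $H$ has no terms of degree $\ge 4$, so $H$ is homogeneous of degree $3$. For each of $|T|=1,2,3$ a direct computation (using $s_i\epsilon_i \equiv s_i - 2s_ib_i$ and noting that the resulting $b_i$-correction always lies at one higher power of $2$ than the divisibility requirement, hence drops modulo the required level) shows that the $\mathscr{A}$-divisibility of the $\prod_{t\in T} x_t$-coefficient collapses to the single mod-$2$ condition
\begin{equation*}
c_T + \sum_{i=1}^n s_i \prod_{t \in T} a_{it} \equiv 0 \pmod 2, \quad \text{for every } \sigma \in {\rm supp}(f).
\end{equation*}

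Finally I would separate this family of conditions into a $\sigma$-independent part and a $\sigma$-linear part. On the support $s_i \equiv \sum_{j \le r} a_{ij} s_j + b_i \pmod 2$, so
\begin{equation*}
\sum_{i=1}^n s_i \prod_{t\in T} a_{it} \equiv \sum_{j \le r} s_j \Bigl(\sum_{i} a_{ij}\prod_{t\in T} a_{it}\Bigr) + \sum_{i} b_i \prod_{t\in T} a_{it} \pmod 2.
\end{equation*}
Setting the free variables to zero produces the inhomogeneous equations $\sum_i b_i \prod_{t \in T} a_{it} \equiv c_T \pmod 2$ for $|T| \in \{1,2,3\}$. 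Requiring vanishing in each $s_j$ separately gives $\sum_i a_{ij}\prod_{t\in T} a_{it} \equiv 0 \pmod 2$; using $a_{ij}^2 \equiv a_{ij}$, this is the homogeneous equation for $|S|=|T|$ (when $j \in T$) and for $|S|=|T|+1$ (when $j \notin T$). As $|T|$ ranges over $\{1,2,3\}$ this covers exactly $|S| \in \{1,2,3,4\}$, matching the statement. The converse direction is immediate by retracing the same computation. The main obstacle is the careful mod-$2^{|T|+1}$ bookkeeping in the degree-by-degree step, in particular confirming that the nonlinear $(1-2b_i)$ factor in $\epsilon_i$ always contributes a harmless higher-order correction rather than an additional constraint on the $c_T$'s; once that is handled, the remaining separation via fixing and varying free variables is routine linear algebra over $\mathbb{Z}_2$.
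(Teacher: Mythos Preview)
Your argument is correct, and it is genuinely different from the paper's proof. The paper never writes down the exact integer multilinear formula $\tilde{x}_i = b_i + (1-2b_i)\sum_{\emptyset\ne U\subseteq S_i}(-2)^{|U|-1}\prod_{j\in U}x_j$; instead it handles the substitution of dependent $x_i$'s and $s_i$'s on the exponent of $\alpha$ via the trick $z\equiv z^4\bmod 8$ (resp.\ $z\equiv z^2\bmod 4$) applied to the affine forms $\sum_j a_{ij}x_j+b_i$ and $\sum_j a_{ij}s_j+b_i$, then extracts the necessary conditions by evaluating at $\sigma=0$ and at single-coordinate $\sigma$'s and subtracting, and finally establishes sufficiency by a separate induction on the Hamming weight of $\sigma$. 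Your approach gets both directions at once: once the conditions $c_T+\sum_i s_i\prod_{t\in T}a_{it}\equiv 0\pmod 2$ are shown to be exactly equivalent (over all $\sigma\in{\rm supp}(f)$) to the stated $|S|\le 4$ homogeneous and $|S|\le 3$ inhomogeneous equations, there is nothing left to do. The trade-off is that the paper's method requires less upfront algebraic machinery (no inclusion--exclusion identity) but pays for it with a longer case analysis and the inductive sufficiency argument, while your method front-loads one clean formula and then the rest is uniform. Your handling of the $\epsilon_i=1-2b_i$ factor---observing that the $-2b_i$ correction always sits exactly one power of~$2$ above the divisibility threshold for each $|T|\in\{1,2,3\}$---is the key cancellation, and it checks out in each degree.
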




Note that (\ref{eqn-homogeneous-combined}) actually encodes equivalently
four sets of equations mod 2:
We may state (\ref{eqn-homogeneous-combined}) as
\begin{eqnarray}
& & \sum_{i=1}^{n} a_{ij} \equiv 0,
~~~~ (\forall 1 \le j\le r) \label{eqn-homogeneous-deg-1}\\
& & \sum_{i=1}^{n} a_{ij}  a_{ik}  \equiv 0,
~~~~ (\forall 1 \le j < k \le r) \label{eqn-homogeneous-deg-2}\\
& & \sum_{i=1}^{n} a_{ij}  a_{ik}  a_{i\ell}  \equiv 0,
~~~~ (\forall 1 \le j < k < \ell \le r) \label{eqn-homogeneous-deg-3}\\
& & \sum_{i=1}^{n} a_{ij}  a_{ik}  a_{i\ell} a_{im} \equiv 0,
~~~~ (\forall 1 \le j < k < \ell < m \le r) \label{eqn-homogeneous-deg-4}
\end{eqnarray}
Also  (\ref{eqn-homogeneous-combined}) is equivalent to
\[ \sum_{i=1}^{n} a_{ij}  a_{ik}  a_{i\ell} a_{im} \equiv 0,
~~~ (\forall 1 \le j \le k \le \ell \le m \le r)\]
since  $a_{ij}$ are 0-1 integers.

Similarly (\ref{eqn-inhomogeneous-combined}) encodes equivalently
three sets of equations mod 2:
\begin{eqnarray}
& & \sum_{i=1}^{n} a_{ij} b_i \equiv c_j,
~~~~(\forall 1 \le j\le r) \label{eqn-inhomogeneous-deg-1}\\
& & \sum_{i=1}^{n} a_{ij}  a_{ik}  b_i \equiv c_{j k},
~~~~(\forall 1 \le j < k \le r) \label{eqn-inhomogeneous-deg-2} \\
& & \sum_{i=1}^{n} a_{ij}  a_{ik}  a_{i\ell} b_i \equiv c_{j k \ell},
~~~~(\forall 1 \le j < k < \ell \le r) \label{eqn-inhomogeneous-deg-3}
\end{eqnarray}
Also (\ref{eqn-inhomogeneous-combined})  is equivalent to
\[ \sum_{i=1}^{n} a_{ij}  a_{ik}  a_{i\ell} b_i \equiv c_{j k \ell}
~~~~(\forall 1 \le j \le k \le \ell  \le r).\]

\begin{proof}
For any integer $z$,
we have $z \equiv 0 \bmod 2$ (respectively $1 \bmod 2$)
iff $z^2 \equiv 0 \bmod 4$ (respectively $1 \bmod 4$),
and also iff $z^4 \equiv 0 \bmod 8$ (respectively $1 \bmod 8$).
We will substitute the dependent $s_i$ ($r < i \le n$) in terms of
$s_j$ ($1 \le j \le r$),
\[s_i \equiv \sum_{j=1}^r a_{ij}s_j + b_i \bmod 2,\]
 and similarly for $x_i$ ($r < i \le n$) in terms of $x_j$
($1 \le j \le r$). But the dependent expressions must be valid modulo 8,
 since these appear on the exponent of $\alpha$.
Hence we get
\[
 f \cdot \alpha^{\sum_{i=1}^n \left[ (\sum_{j=1}^r a_{ij}s_j + b_i)^4
(\sum_{j=1}^r a_{ij}x_j + b_i)^4 \right]} \in  \mathscr{A},\]
as a function in $x_i$, valid for any
$(s_1, \ldots, s_r) \in \{0,1\}^r$.

The first simple observation is that the modifier expression
has terms of degree at most 4 in $x_j$'s on the exponent of $\alpha$, and
thus cannot cancel any term of degree greater than 4 in $H$,
which is
the higher order terms in (\ref{generic-form-f-sq-in-A-detailed}).
Moreover, any degree 4 (multilinear) term in $(\sum_{j=1}^r a_{ij}x_j + b_i)^4$
has the form $x_j x_k x_{\ell} x_m$ for some $1 \le j < k < \ell < m \le r$,
and each such term comes with a coefficient divisible by $4! \equiv 0 \bmod 8$.
Thus to get a function in $\mathscr{A}$,
there can be no terms of degree 4 or higher in $H$. Thus
\[H(x) = \sum_{1 \le j < k < \ell \le r} c_{jk\ell}  x_j x_k x_{\ell}.\]

We consider the  condition of  membership in $\mathscr{A}$
 for the linear terms.
The condition is that the function be expressible
as a linear function on the exponent of ${\frak i} = \sqrt{-1}$.
By the uniqueness of expression  of the  (multilinear) polynomial
on the exponent of $\alpha$, this condition is simply
that all coeffiients of linear terms be even.
Thus we can derive necessary conditions in $\mathbb{Z}_2$.
An advatange in working over $\mathbb{Z}_2$, is that we can avoid
the 4-th power expression.

If we set $(s_1, \ldots, s_r) = (0, \ldots, 0) \in \{0, 1\}^r$,
 the all zero string of length $r$, then
 \[f \cdot \alpha^{\sum_{i=1}^n \left[ b_i
(\sum_{j=1}^r a_{ij}x_j + b_i)^4 \right]} \in  \mathscr{A}.\]
Here we used the fact that $b_i^4 = b_i$
for 0-1 valued $b_i \in \mathbb{Z}$.
Computing mod 2,  for the linear terms a necessary condition is that
\begin{equation}\label{c=sum-ab}
c_j \equiv \sum_{i=1}^n a_{ij} b_i \bmod 2,
\end{equation}
for all $j \in [r]$. This is (\ref{eqn-inhomogeneous-deg-1}).
Here we used the fact that $(\sum_{j=1}^r a_{ij}x_j + b_i)^4
\equiv \sum_{j=1}^r a_{ij}x_j + b_i \bmod 2$.

We can also choose $(s_1, \ldots, s_r)$ so that a single $s_{j_0}=1$
and the other  $s_j$'s
are all zero, then
\[f  \cdot \alpha^{\sum_{i=1}^n \left[ (a_{i j_0} + b_i)^4
(\sum_{j=1}^r a_{ij}x_j + b_i)^4 \right]} \in  \mathscr{A}.\]
Again deriving a necessary condition by working over $\mathbb{Z}_2$,
we get
\begin{equation}\label{c=sum-ab-plus}
 c_j \equiv \sum_{i=1}^n (a_{i j_0} + b_i) a_{ij}  \bmod 2.
\end{equation}
Subtracting (\ref{c=sum-ab}) from (\ref{c=sum-ab-plus})
 we obtain both (\ref{eqn-homogeneous-deg-1})
 and (\ref{eqn-homogeneous-deg-2}).

Now we consider quadratic terms.
For this purpose we only need to ensure that
the coefficients (on the exponent of $\alpha$)
 of all quadratic monomials $x_j x_k$ ($1 \le j<k \le r$)
are 0 mod 4. To compute mod 4, we may use $z^2 \bmod 4$
replacing $z^4 \bmod 8$ for any integer $z$. Thus
a necessary condition is that, for all $1 \le j<k \le r$,
\[2 c_{jk} + \sum_{i=1}^n \left[ b_i^2 (2 a_{ij}a_{ik}) \right]
\equiv 0 \bmod 4\]
and
furthermore, for all $1 \le j_0 \le r$,
\[2 c_{jk} + \sum_{i=1}^n \left[ (a_{i j_0} + b_i)^2 (2 a_{ij}a_{ik}) \right]
\equiv 0 \bmod 4\]
Subtracting the two we get
(\ref{eqn-inhomogeneous-deg-2}) and (\ref{eqn-homogeneous-deg-3}).

Finally we consider the coefficients of cubic terms in
$(\sum_{j=1}^r a_{ij} x_j + b_i)^4$. We get,  for all $1 \le j<k<\ell \le r$,
\[4 c_{jk \ell} + \sum_{i=1}^n  \left[  b_i 4 a_{ij} a_{ik} a_{i \ell}
\right]
\equiv 0 \bmod 8\]
This gives us (\ref{eqn-inhomogeneous-deg-3})
\[c_{jk \ell}  \equiv \sum_{i=1}^n  a_{ij} a_{ik} a_{i \ell} b_i
 \bmod  2.\]
Picking exactly one $s_{j_0} =1$ and all other $s_j=0$ (for $1 \le j \le r$)
 we get
furthermore (for all $1 \le j_0 \le r$)
\[4 c_{jk \ell} + \sum_{i=1}^n  \left[ (a_{ij_0} + b_i)^4
4 a_{ij} a_{ik} a_{i \ell}
\right]
\equiv 0 \bmod 8\]
i.e., $ c_{jk \ell} \equiv  \sum_{i=1}^n  \left[ (a_{ij_0} + b_i)
 a_{ij} a_{ik} a_{i \ell}
\right] \bmod  2$.
Subtracting (\ref{eqn-inhomogeneous-deg-3})
 from that we get (\ref{eqn-homogeneous-deg-4}).

Now we prove sufficiency.

By retracing the proof above we have the following
\begin{equation}\label{eqn-for-sufficiency}
f  \cdot \alpha^{\sum_{i=1}^n \left[ (\sum_{j \in S} a_{ij} + b_i)^4
(\sum_{j =1}^r  a_{ij} x_j + b_i)^4 \right]} \in  \mathscr{A},
\end{equation}
for all $S \subseteq [r]$ with candinality $|S| \le 1$.
We prove (\ref{eqn-for-sufficiency})  for all $S \subseteq [r]$
by induction on $|S|$.  Denote by $I_S = \sum_{j \in S}  a_{ij} + b_i$.

Suppose (\ref{eqn-for-sufficiency}) is true for some $S \subset [r]$
and let $j_0 \in [r] \setminus S$, we prove  (\ref{eqn-for-sufficiency})
for $S \cup \{j_0\}$.  We only need to prove that
\[\alpha^{\sum_{i=1}^n \left[ ( (a_{ij_0} + I_S)^4 - I_S^4 )
(\sum_{j =1}^r  a_{ij} x_j + b_i)^4 \right]} \in  \mathscr{A}.
\]
Note that
\[ (a_{ij_0} + I_S)^4 - I_S^4
= a_{ij_0} + 4 a_{ij_0} I_S (I_S^2 + 1) + 6 a_{ij_0} I_S^2
\equiv a_{ij_0} -2 a_{ij_0} I_S^2 \bmod 8.\]
For $S= \emptyset$ and $S = \{j_0\}$, we have
\[f  \cdot \alpha^{\sum_{i=1}^n  \left[ b_i (\sum_{j =1}^r  a_{ij} x_j + b_i)^4 \right] }
 \in  \mathscr{A}
~~~\mbox{and}~~~
f  \cdot \alpha^{\sum_{i=1}^n  \left[ (a_{ij_0} + b_i)^4
 (\sum_{j =1}^r  a_{ij} x_j + b_i)^4 \right]}
 \in  \mathscr{A}.
\]
For 0-1 valued integers $z$ and $z'$, we have
$(z+z')^4 \equiv z + z' -2 z z'  \bmod 8$, so we have
\[\alpha^{\sum_{i=1}^n  \left[ (a_{ij_0} -2 a_{ij_0} b_i)
(\sum_{j =1}^r  a_{ij} x_j + b_i)^4 \right]}
 \in  \mathscr{A}.
\]
Therefore we only need to prove that
\[\alpha^{\sum_{i=1}^n
\left[ \left( (a_{ij_0} -2 a_{ij_0} I_S^2) - (a_{ij_0} -2 a_{ij_0} b_i) \right)
(\sum_{j =1}^r  a_{ij} x_j + b_i)^4 \right]}
 \in  \mathscr{A}.
\]
i.e.,
\[{\mathfrak i}^{\sum_{i=1}^n
\left[ a_{ij_0} (b_i - I_S^2) (\sum_{j =1}^r  a_{ij} x_j + b_i)^4 \right]}
 \in  \mathscr{A}.
\]
But now the expression is on the exponent of ${\mathfrak i}$
and so we can calculate mod 4, which allows us to replace
$(\sum_{j =1}^r  a_{ij} x_j + b_i)^4$ by
$(\sum_{j =1}^r  a_{ij} x_j + b_i)^2$.
However ${\mathfrak i}$ raised to any sum of perfect squares of linear functions
of $x_1, \ldots, x_r$ is in $\mathscr{A}$. This completes the proof.
\end{proof}

\section{Complexity dichotomy theorem of \#{\rm CSP}$_2^c$}

\begin{theorem} \label{thm csp2 theorem}
A $\#\text{\rm CSP}_2^c(\mathscr{F})$ problem has polynomial time algorithm, if one of the following holds,
\[
 \mathscr{F} \subseteq \mathscr{P};~~~~~
 \mathscr{F} \subseteq \mathscr{A};~~~~~
 \mathscr{F} \subseteq \mathscr{A}^\alpha;~~~~~
~\mbox{or}~~~~~~
\mathscr{F} \subseteq \mathscr{L}.\]
Otherwise, it is \#P-hard.
\end{theorem}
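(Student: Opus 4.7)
The plan is to separate tractability and hardness. For tractability, the containments $\mathscr{F}\subseteq\mathscr{P}$ and $\mathscr{F}\subseteq\mathscr{A}$ reduce to Theorem~\ref{thm old csp dichotomy} via the trivial inclusion $\#\text{CSP}_2^c(\mathscr{F})\le_T \#\text{CSP}(\mathscr{F})$; the case $\mathscr{F}\subseteq\mathscr{L}$ is Corollary~\ref{corollary  csp2c L}; and for $\mathscr{F}\subseteq\mathscr{A}^\alpha$ I would apply the holographic transformation $M_{\alpha^{-1}}$ on every edge. By definition this transformation carries $\mathscr{A}^\alpha$ to $\mathscr{A}$; it fixes $\Delta_0$, rescales $\Delta_1$ by $\alpha^{-1}$, and maps $=_{2k}$ to the function whose only two nonzero entries are $1$ and $\alpha^{-2k}=\mathfrak{i}^{-k}$, which lies in $\mathscr{A}$. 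Hence the transformed instance is a plain $\#\text{CSP}(\mathscr{A})$ instance and is tractable.

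For hardness, assume $\mathscr{F}$ is contained in none of the four families. The overall target is to invoke Theorem~\ref{thm old csp dichotomy}: it suffices to realize inside $\#\text{CSP}_2^c(\mathscr{F})$ a function outside $\mathscr{P}\cup\mathscr{A}$ together with enough machinery (such as the duplicated-variable trick from the introduction) to simulate ordinary $\#\text{CSP}$. The first step is to invoke the regularization lemmas of Section~\ref{subsec:RegularizationLemmas} to reduce each witness $f\in\mathscr{F}$ to a canonical form: affine support, nonzero values being powers of $\alpha$, and a compressed function of the shape~(\ref{generic-form-f-sq-in-A-detailed}) on which Theorem~\ref{thm: Local affine} can be read off as a membership test for $\mathscr{L}$. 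Then I would perform a case analysis by which family $\mathscr{F}$ escapes. When $\mathscr{F}\not\subseteq\mathscr{A}$, I would generalize the $\{f_{15}, f_7^\alpha(+-)\}$ construction from the introduction: pin selected inputs of a witness $f$ with $\Delta_0,\Delta_1$ and even equalities to produce a bundled function whose $(-)$-variables can be glued (in the style of Figure~\ref{Fig f3f7}) onto those of a second witness to realize $f_7^\alpha(++)$, whence $\#\text{CSP}(f_7^\alpha)\le_T \#\text{CSP}_2^c(f_7^\alpha(++))$ together with $f_7^\alpha\notin\mathscr{A}\cup\mathscr{P}$ yields hardness. The cases $\mathscr{F}\not\subseteq\mathscr{A}^\alpha$ and $\mathscr{F}\not\subseteq\mathscr{P}$ are handled symmetrically, the former after a global $M_{\alpha^{-1}}$ transformation.

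The main obstacle will be the \emph{mixed} case, in which each individual $f\in\mathscr{F}$ lies in $\mathscr{A}\cup\mathscr{A}^\alpha\cup\mathscr{L}$ but $\mathscr{F}$ itself lies in no single family. Here pairs like $\{f_{31}, f_7^\alpha(+-)\}$ (both in $\mathscr{L}$, hence tractable) must be distinguished from pairs like $\{f_{15}, f_7^\alpha(+-)\}$ for which a Figure~\ref{Fig f3f7}-style gadget succeeds. I would use the homogeneity and inhomogeneity conditions (\ref{eqn-homogeneous-combined}) and (\ref{eqn-inhomogeneous-combined}) of Theorem~\ref{thm: Local affine} as explicit certificates: whenever two witnesses come from distinct families among $\mathscr{A}, \mathscr{A}^\alpha, \mathscr{L}$, I would locate a gadget whose realized function violates one of the degree-$3$ or degree-$4$ conditions, thereby producing a function outside $\mathscr{L}$ (and, after the pertinent transformation, outside $\mathscr{A}^\alpha\cup\mathscr{A}\cup\mathscr{P}$ as well), to which the single-family hardness reductions from the previous paragraph apply. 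Ensuring that such a certifying gadget can always be realized within the parity-preserving $\#\text{CSP}_2^c$ framework, without inadvertently landing back inside $\mathscr{L}$, is the crux of the argument and the hardest part of the proof.
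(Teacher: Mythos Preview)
Your tractability argument is fine and matches the paper. The hardness sketch, however, has a structural gap.

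You organize the hardness proof as a case split on \emph{which single family} $\mathscr{F}$ escapes, proposing to handle ``$\mathscr{F}\not\subseteq\mathscr{A}$'' on its own by mimicking the $\{f_{15},f_7^\alpha(+-)\}$ construction, and the other non-containments ``symmetrically.'' But that construction from the introduction already uses \emph{two} non-containments at once: $f_{15}\in\mathscr{A}\setminus\mathscr{L}$ supplies the witness for $\overline{\mathscr{L}}$, while $f_7^\alpha(+-)\in\mathscr{L}\setminus\mathscr{A}$ supplies the witness for $\overline{\mathscr{A}}$. A single non-containment is never enough (e.g.\ $\mathscr{F}\subseteq\mathscr{P}$ is tractable yet $\mathscr{F}\not\subseteq\mathscr{A}$ can hold). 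So your ``mixed case'' is not a residual corner case; it is the entire proof.

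The paper does not split into cases at all. It runs four \emph{parallel} normalization pipelines, one per non-containment, each producing a function in a short explicit menu: from $\overline{\mathscr{P}}$ one always gets some $p{++}\in\overline{\mathscr{P}}{++}$ (Lemma~\ref{lemma not product type}); from $\overline{\mathscr{A}}$ one gets either something in $\overline{\mathscr{A}}{++}$ or some $F_j^\alpha(+)$ or $F_j^\alpha(+-)$ with $j\in\{1,3,7\}$ (Lemma~\ref{lemma not A pin to 321}); from $\overline{\mathscr{A}^\alpha}$ the $M_\alpha$-dual list (Lemma~\ref{lemma not A-alpha  pin to 321}); and from $\overline{\mathscr{L}}$ one gets either $\overline{\mathscr{A}}{++}$, some $F_j^\mathscr{A}(+-)$, or a unary $F_1^{\mathscr{A}}$ or $F_1^\alpha$ (Lemma~\ref{lemma reduce L}). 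The heart of the argument is then a small finite combination table (Lemmas~\ref{lemma: together two +-}--\ref{lemma: together one alpha +- one unary}) showing that any pair of outcomes from the $\overline{\mathscr{A}}$ and $\overline{\mathscr{L}}$ pipelines (or $\overline{\mathscr{A}}$ and $\overline{\mathscr{A}^\alpha}$) can be glued, in the Figure~\ref{Fig f3f7} style, to realize some $F_s^\alpha({++})\subseteq\overline{\mathscr{A}}{++}$. Only then does one invoke Lemma~\ref{lem: remove cc} with the $\overline{\mathscr{P}}{++}$ witness to reduce to \#CSP and apply Theorem~\ref{thm old csp dichotomy}. Your proposal is missing both the specific $F_j^{(\cdot)}(\cdot)$ target lists and the combination lemmas; the vague plan to ``locate a gadget violating a degree-$3$ or degree-$4$ condition'' does not substitute for them, since it is exactly the content of Lemmas~\ref{lemma L 1}--\ref{lemma: rank>3 to at most 3} that such a gadget can always be found and lands in the stated menu.
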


The algorithm for $\mathscr{P}, \mathscr{A}, \mathscr{A}^\alpha$ are known and the algorithm for  $\mathscr{L}$
is in Section \ref{sec: L algorithm} Corollary \ref{corollary  csp2c L}.
In this section, we prove the  \#P-hardness part of this theorem,
we want to show that
if  $\mathscr{F} \not \subseteq \mathscr{P}$, $\mathscr{F} \not \subseteq \mathscr{A}$, $\mathscr{F} \not \subseteq \mathscr{A}^\alpha$  and $\mathscr{F} \not \subseteq \mathscr{L}$, then  $\#\text{CSP}_2^c(\mathscr{F})$
is \#P-hard. We have one function from the complement of
each tractable class, and we prove that, when putting these four (not
necessarily distinct) constraint functions together they define
a \#P-hard problem.
Starting from these functions, we manage to obtain other
functions outside of the respective tractable classes, but with
some specific properties.

Finally after we have gained a  sufficiently good
control on these functions we can corner the beast.

This complexity dichotomy theorem about $\#\text{CSP}_2^c$ generalizes the known complexity dichotomy theorem about $\#\text{CSP}$ (Theorem \ref{thm old csp dichotomy}), and its proof uses this known theorem in several places.


\subsection{Notations}

In this subsection, we further introduce a number of definitions and notations, which shall be used in the proof.

\begin{definition}[Bundle and bundle type]
Suppose $f$ has affine support of rank $r$ with 
$\{x_1, \ldots, x_r\}$ as  a set of free variables. We use all
non-empty linear combinations $\sum_{j=1}^r{d_j x_j}$ ($d_j \in \mathbb{Z}_2$,
not all zero) of  $x_1, \ldots, x_r$  as the names of bundles of $f$.
The type of each bundle is a possibly empty multiset of
``$+$"'s and ``$-$"'s, and is  defined as follows:
For every input variable $x_k$ ($1 \le k \le n$) of $f$ there is a unique
bundle named $\sum_{j=1}^r{d_j x_j}$ such that
on ${\rm supp}(f)$, $x_k$ is either always equal to $\sum_{j=1}^r{d_j x_j}$
or always equal to   $\sum_{j=1}^r{d_j x_j}+1 \pmod 2$.
In the former case we add a ``$+$",
and in the latter case we add a  ``$-$" to the bundle type
for the bundle named $\sum_{j=1}^r{d_j x_j}$, and we say
the variable $x_k$ belongs to this bundle.

All input variables are partitioned into bundles.
The number $R$ of non-empty bundles is called the essential arity of $f$,
and $r \le R \le 2^r -1$.
\end{definition}

We can list a function's input variables, by listing all its non-empty bundles followed by the bundle type. For example, $f(x_1({\rm ++}),x_2(+),(x_1+x_2)(--))$ has rank 2, essential arity 3, and  arity 5.

\begin{definition}[Odd and even bundle, consistent and opposite bundle]
If the cardinality of a bundle  is odd (resp. even), we say it is an odd (resp. even) bundle.
For an even bundle, if there are even (resp. odd) many $``+"$  in its type, we say it is a consistent (resp. opposite) bundle. Obviously, a consistent (resp. opposite) bundle also has even (resp. odd) many $``-"$, since it is
an even bundle.
\end{definition}

An empty bundle is a consistent even bundle. Equivalently, if a bundle
is odd or opposite, then it is not empty.  
When constructing some function by a gadget, the bundles of
the function are usually the union
of some original bundles, after some possible flipping,
where a flipping changes all $``+"$ in a type to $``-"$, and changes $``-"$ to $``+"$ at the same time.
If we merge two bundle types $\alpha$ and $\beta$,
we get the union of two types $\alpha \cup \beta$.
Obviously, ${\rm even} \cup {\rm even}= {\rm even}$, ${\rm odd} \cup {\rm even}= {\rm odd}$, ${\rm odd} \cup {\rm odd}= {\rm even}$.  Similarly, ${\rm consistent} \cup {\rm consistent}= {\rm consistent}$, ${\rm consistent} \cup {\rm opposite}= {\rm opposite}$, ${\rm opposite} \cup {\rm opposite}= {\rm consistent}$.

\begin{definition}[Essential function]
Given a function $f$ with affine support, if we replace each (non-empty) bundle of variables by just one variable as the bundle name, keeping the compressed function unchanged, we get the essential function $\tilde{f}$ of $f$.
\end{definition}

For example, the essential function of $f(x_1({\rm ++}),x_2(+),(x_1+x_2)(--))$,
$\tilde{f}(x_1,x_2,(x_1+x_2))$ has arity 3, which is
 the same as the essential arity of $f$. Note that in this example,
the two variables that are both equal to $x_1+x_2+1$ on ${\rm supp}(f)$ have
been replaced by one variable which equals to $x_1+x_2$ on
${\rm supp}(f)$.

If each bundle of a function $f$ has the same type $\alpha$, for example, $f(x_1(\alpha),x_2(\alpha),(x_1+x_2)(\alpha))$,  we also denote it as $\tilde{f}(x_1,x_2,(x_1+x_2))(\alpha)$ through its essential function.
For example, $\tilde{f}({\rm ++})$ denotes a function all whose bundles are $({\rm ++})$.
Sometimes, $f$'s bundles have different types, we use $\tilde{f}(*)$ to denote $f$.
If all bundles of a function are consistent,
we say it is a function of the form $g(cc)$.



We define two type operations. The first operation is called
triple. In a type multiset, triple can replace
a single $+$ by ${\rm ++}+$, or replace a single $-$ by $---$.
The second  operation is called collation. Collation can remove ${\rm ++}$ or $--$ from one bundle type, as long as the bundle is still non-empty after the removal.
The type operations do not change the essential function. They do
 not change the properties
 that whether a type is empty, odd, even, opposite, consistent.

Suppose we can use $f$ and $(=_4)$ to construct gadgets.
 When connecting one input variable of $f$ to $(=_4)$,
 we get 3 copies of this variable as additional
 inputs of the new function. This implements
the triple operation.
We can also connect two ${\rm ++}$ or $--$ in the same bundle to make
 them disappear (provided the bundle is still non-empty after the removal).
This implements the collation operation.
In proving \#P-hardness (but not when designing algorithms),
 we can always do these operations on types.
In this sense, an odd bundle is either $(+)$ or $(-)$,
 and a consistent bundle is either empty or $({\rm ++})$, or $(--)$,
and an opposite bundle is $(+-)$.

$F_{2^r-1}$ denotes the set of functions of rank $r$ whose number of bundles achieves the maximum $2^r-1$ and each bundle has type $(+)$.
We use a super script $\mathscr{A}$ or $\alpha$ to indicate the function or the function set is contained in  $\mathscr{A}$ or $ \mathscr{A}^\alpha$.

Suppose for each function in $F$, all bundles are $(+)$,
 we define $F({\rm ++})=\{f({\rm ++})| f \in F\}$,
 and define $F(+-)$ similarly. Define $F(*)$ to be
the set of functions $f(*)$ with $f \in F$ with any bundle structure.
 Define $F(+)$ to be just $F$ where every function still has all
bundles of type $(+)$. 
For example,  $F_7^\alpha(+-)$ is the set of functions with rank 3 and 7 bundles each with type $(+-)$ whose essential functions are in $ \mathscr{A}^\alpha$.

\begin{definition}
Given a function $f(x_1,x_2,\ldots,x_n)$, we define an arity $2n$ function $f{\rm ++}$, such that $(x_1,y_1,x_2,y_2, \ldots, x_n, y_n) \in {\rm supp}(f{\rm ++})$ iff $x_j=y_j$, $j=1,2,\ldots,n $ and $(x_1,x_2, \ldots, x_n) \in {\rm supp}(f)$, and $f{\rm ++}(x_1,y_1,x_2,y_2, \ldots, x_n, y_n) = f(x_1,x_2, \ldots, x_n) $.

Define $\overline{\mathscr{A}}{\rm ++}=\{f{\rm ++} | f   \in \overline{\mathscr{A}}\}$, where $\overline{}$ denotes complement.
Define $\overline{\mathscr{P}}{\rm ++}=\{f{\rm ++} | f  \in \overline{\mathscr{P}}\}$.
\end{definition}

Because the
binary {\sc Disequality} function is in $\mathscr{A}$, 
and  $\mathscr{A}$ is closed under gadget constructions, 
it can be used to flip input variables, and  $\mathscr{A}$, it is not hard to see if $f \not \in \mathscr{A}$, $f(cc) \in \overline{\mathscr{A}}{\rm ++}$.

Given a function $f$, we  define
the function $f^2$ pointwise by $f^2(x) =(f(x))^2$. Define $\frac{1}{f}$
as the function with the same support as $f$,
but on  ${\rm supp}(f)$,  $\frac{1}{f}(x)=\frac{1}{f(x)}$.

\begin{lemma}\label{lemma:f2++}
 $\#{\rm CSP}_2^c(\{f^2 {\rm ++}\}) \leq_{\rm T}  \#{\rm CSP}_2^c(\{f\})$.
\end{lemma}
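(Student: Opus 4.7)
The plan is to realize the arity-$2n$ function $f^2 {\rm ++}$ directly as a gadget built from two copies of $f$ together with the $\#{\rm CSP}_2^c$ built-in $=_4$, and then reduce $\#{\rm CSP}_2^c(\{f^2 {\rm ++}\})$ to $\#{\rm CSP}_2^c(\{f\})$ by replacing every occurrence of $f^2 {\rm ++}$ in an input instance with this gadget, leaving all other parts of the instance unchanged.

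Concretely, assuming $f$ has arity $n$, I would label the $2n$ external variables of the gadget by $x_1, y_1, x_2, y_2, \ldots, x_n, y_n$, matching the input order of $f^2 {\rm ++}$, and introduce fresh internal variables $z_1, \ldots, z_n$. For each $j \in [n]$, attach one copy of $=_4$ whose four legs go to $x_j, y_j, z_j, z_j$ (i.e.\ two of the four legs of this $=_4$-vertex incident to the same variable vertex $z_j$), and then place two copies of the constraint $f$, each on the input tuple $(z_1, \ldots, z_n)$. Summing out the $z_j$'s, the gadget evaluates on any external assignment to
\[
\sum_{z_1,\ldots,z_n \in \{0,1\}} \prod_{j=1}^n [x_j=y_j=z_j]\cdot f(z_1,\ldots,z_n)^2
= \Bigl[\bigwedge_{j=1}^n x_j=y_j\Bigr]\cdot f(x_1,\ldots,x_n)^2,
\]
which is exactly $f^2 {\rm ++}(x_1,y_1,\ldots,x_n,y_n)$ by the definition of $f{\rm ++}$ and pointwise squaring.

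I would then verify the $\#{\rm CSP}_2^c$ parity condition of the new instance. Each external variable $x_j$ or $y_j$ occurs exactly once in the gadget (in its $=_4$), matching its single occurrence as an input to the replaced $f^2 {\rm ++}$, so the total degree of every original variable in the ambient instance is preserved and hence remains even. Each internal variable $z_j$ occurs $2+1+1 = 4$ times (twice in the $=_4$, once in each of the two copies of $f$), which is even. The other built-ins $\Delta_0, \Delta_1, =_2, =_4, \ldots$ appearing elsewhere in the instance are available on both sides of the reduction and are untouched.

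The only mild point to check is that routing two of the four legs of a single $=_4$-vertex to the same variable vertex $z_j$ is a legitimate signature-grid construction, which is standard in the Holant framework (it is just the constraint $=_4(\cdot,\cdot,z,z)$, equivalent to $=_2$ on the remaining two legs together with pinning $z$ to their common value). With this understood, the construction gives a polynomial-time many-one reduction from $\#{\rm CSP}_2^c(\{f^2 {\rm ++}\})$ to $\#{\rm CSP}_2^c(\{f\})$ that preserves the Holant value, which yields the desired $\leq_{\rm T}$.
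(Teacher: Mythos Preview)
Your proposal is correct and follows essentially the same approach as the paper: realize $f^2{\rm ++}$ as a gadget using two copies of $f$ and $n$ copies of $(=_4)$, then substitute. The only cosmetic difference is the wiring: the paper connects two legs of each $(=_4)$ to the $j$th inputs of the two separate copies of $f$ (leaving the other two legs as the external $x_j,y_j$), whereas you route both copies of $f$ through a single shared internal variable $z_j$ and send two legs of $(=_4)$ to that same $z_j$; both produce the identical signature and respect the even-degree condition.
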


\begin{proof}
We take two copies of $f$, 
and connect 2 inputs of a copy of $(=_4)$ to each pair of
 the corresponding variables. The new function is $f^2{\rm ++}$.
\end{proof}

\subsection{Regularization Lemmas}\label{subsec:RegularizationLemmas}
Assume $\mathscr{F} \not \subseteq \mathscr{P}$, $\mathscr{F} \not \subseteq \mathscr{A}$, $\mathscr{F} \not \subseteq  \mathscr{A}^\alpha$ and $\mathscr{F} \not \subseteq \mathscr{L}$,
the high level idea to prove that
$\#\text{CSP}_2^c(\mathscr{F})$ is  \#P-hard is as follows.
 We take one function outside each tractable family and prove
that putting these four (not necessary distinct) functions
together makes a \#P-hard problem.
 If we only have four generic functions, it is difficult to prove anything.
So we wish to regularize and simplify these functions, while maintaining
the property that new functions are still outside the respective
tractable families.
By forming loops and by pinning individual variables
we can reduce the arity, or more precisely, the essential arity
 of the functions. However in fact the more important parameter
that we will try to reduce inductively is the rank of the function.
The hope is that when the number of (free) variables is small,
the  functions are sufficiently easy to handle, as they sit in a space of
smaller dimension.
This is true for symmetric
constraint functions. However, in the asymmetric setting,
they are still too complicated even for functions of small rank.
 In this section, we prove some useful
regularization lemmas, that allow us to further
regularize the functions at hand.

We treat the following generic situation. In all the lemmas
in Subsection~\ref{subsec:RegularizationLemmas} we assume there is
a constraint function set $\mathscr{F}$ that satisfies the following
conditions:
\begin{quote}
(1) Any function in $\mathscr{F}$ has affine support;\\
(2) $\mathscr{F}$ contains the pinning functions $\Delta_0$ and $\Delta_1$;\\
(3) $\mathscr{F}$ contains all {\sc Equalities} of even arities and $(=_2) (+-)$ (note that this function is the same as $[1,1]({\rm ++}--)$); \\
(4) $\mathscr{F}$ is closed under gadget constructions, i.e.,
the signature of any $\mathscr{F}$-gate is in $\mathscr{F}$;  \\
(5) $\mathscr{F}$ is closed under reciprocal, i.e., if $f \in \mathscr{F}$,
then $\frac{1}{f}  \in \mathscr{F}$, where $\frac{1}{f}$ is defined above;
and \\
(6) For any function $f$ which has affine support, the two bundle type operations do not change whether $f$ is in $\mathscr{F}$ or not. (This is a consequence of (3) and (4).)
\end{quote}

Property (6) is a corollary of (3) and (4).
For example, if a function $f(+--) \not \in \mathscr{F}$, then after applying collation operation on each bundle, we get a $f(+) \not \in \mathscr{F}$.
We prove this by contradiction. Assume $f(+) \in \mathscr{F}$, if we connect $[1,1]({\rm ++}--)  \in \mathscr{F}$ to each bundle, we get $f(+--)$, which is in $\mathscr{F}$ by  (3) and (4).

These properties hold for $\mathscr{A}, \mathscr{A}^\alpha$ and $\mathscr{L}$.
In the statements of Lemma~\ref{lemma r2 useful even odd} 
to \ref{lemma r3 useful odd to +}
  we make the implicit assumption
that $\mathscr{F}$ satisfies these conditions.


Starting from a constraint  function $f$ outside of a
tractable family,
we will generically try to reduce the rank by pinning at a variable
(and all other variables in the same bundle consistently),
while maintaining the property that the function is still outside of a
tractable family.
Note that pinning at any variable does reduce the rank.
(Every variable can be a member of a set of free variables,
but not every subset of $r_f$ variables  can be a set of free variables.)
We get stuck if pinning any variable (and its bundle)
 of $f$ produces a function
in the tractable family.  The following lemmas turn this
seemingly unfortunate situation into a positive outcome,
by using this to regularize the given function outside of a
tractable family.

Assume we have a function with rank $2$ or $3$ outside the respective
families $\mathscr{A}, \mathscr{A}^\alpha$ or $\mathscr{L}$,
the  following lemmas show
how to  get a function still outside
the respective families but with very regular bundle types.
 We can first regularize so that every bundle has the same parity.
If all bundles are even, we can further regularize
their bundle types to be either all consistent or all opposite.
If all bundles are odd, we can further regularize
 the support space to be a linear space (not just an affine space), which means all bundles are $(+)$.

\begin{lemma}  \label{lemma r2 useful even odd}
Suppose $f \not \in \mathscr{F}$ has rank 2
and pinning any variable of $f$ produces  a rank 1 function in $\mathscr{F}$.
Then we can construct a rank 2 function $g$, such that
$g \not \in \mathscr{F}$, and either all its 3 bundles are odd, or  all its 3 bundles are non-empty even, or it has exactly 2 non-empty even bundles.
\end{lemma}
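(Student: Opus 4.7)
The plan is to proceed by case analysis on the bundle pattern of $f$ after a normalization step, using cross-identification between copies of $f$ as the only way to change bundle parity.

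First, I would normalize $f$ by applying the bundle type operations (triple via $=_4$ and collation) and flipping via $(=_2)(+-)$, so that each non-empty bundle carries a canonical type in $\{(+),(++),(+-)\}$. By property~(6) together with the preliminaries these moves preserve $f\not\in\mathscr{F}$ and do not change whether a bundle is odd, even, empty, consistent, or opposite.

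Since $f$ has rank $2$, at most three bundles exist, named by the non-zero linear forms $x_1, x_2, x_1+x_2$, with at least two non-empty. Let $(o,e,\epsilon)$ count the non-empty odd, non-empty even, and empty bundles. The allowed patterns in the conclusion are exactly $(o,e,\epsilon)\in\{(3,0,0),(0,3,0),(1,2,0),(0,2,1)\}$; the remaining (forbidden) patterns are (F1) $(2,0,1)$, (F2) $(1,1,1)$, and (F3) $(2,1,0)$. If $f$ already sits in an allowed pattern, take $g=f$. Otherwise, I will construct $g$ by a gadget tailored to the forbidden case. A key observation is that gadgets built from $f$ together with even-arity equalities preserve the parity of each bundle, because replacing one slot by $=_{2k}$ changes a bundle's size by $2k-2$; hence parity changes require identifying variables across two copies of $f$. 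For example, in (F3) with bundles of sizes $1,1,2$, take two copies $f^{(1)},f^{(2)}$ and identify a variable of $B_1^{(1)}$ with one of $B_1^{(2)}$ together with a variable of $B_3^{(1)}$ with one of $B_3^{(2)}$: this forces $x_1^{(1)}=x_1^{(2)}$ and $x_2^{(1)}=x_2^{(2)}$, yielding a rank $2$ function whose bundles have sizes $0,2,2$, i.e.\ pattern $(0,2,1)$. Analogous cross-identifications, combined when needed with $(=_2)(+-)$ to match signs, handle (F2) and (F1).

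The main obstacle is to verify that the constructed $g$ remains outside $\mathscr{F}$ in each case; naively $g$ is close to a power $f^2$ which can slip into $\mathscr{F}$ (e.g.\ $f\not\in\mathscr{A}$ with $f^2\in\mathscr{A}$). To handle this I would argue contrapositively: if $g\in\mathscr{F}$, then combining $g$ with the pinnings of $f$ (which lie in $\mathscr{F}$ by hypothesis) inside a further $\mathscr{F}$-gadget recovers $f$ itself, contradicting $f\not\in\mathscr{F}$. The pinning hypothesis is the essential ingredient here, supplying the auxiliary rank-$1$ $\mathscr{F}$-functions that bridge $g$ back to $f$. Executing this recovery cleanly across the three forbidden cases, especially (F1) where $f$ has arity $2$ and very few handles for cross-identification, is the heart of the proof.
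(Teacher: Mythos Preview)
Your overall case split is fine, but the construction you propose in the forbidden cases is the wrong move, and the gap you flag at the end is real and not repairable along the lines you suggest.

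The paper does \emph{not} use two copies of $f$. Instead it uses a single copy of $f$ together with the rank~$1$ function $u$ obtained by pinning one bundle of $f$; by hypothesis $u\in\mathscr{F}$. Attaching $u$ to an odd bundle of $f$ via $(=_4)$ adds one variable's worth to that bundle, flipping its parity while keeping the compressed function equal to $\underline{f}$ times a rank~$1$ factor. The point is that this modification is \emph{invertible inside $\mathscr{F}$}: property~(5), closure under reciprocal, gives $\frac{1}{u}\in\mathscr{F}$, and connecting $\frac{1}{u}$ to the new function $h$ recovers $f$ exactly. Hence $h\in\mathscr{F}$ would force $f\in\mathscr{F}$, a contradiction. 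You never invoke property~(5), and it is the engine of the whole argument.

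By contrast, your two-copy cross-identification produces a function whose compressed function is $\underline{f}^{\,2}$, and this can genuinely land in $\mathscr{F}$ even when $f\notin\mathscr{F}$ (take $\mathscr{F}=\mathscr{A}$ and any $f$ with $f^2\in\mathscr{A}$, which is exactly the situation the later lemmas must handle). Your proposed recovery---build $f$ back from $g$ and pinnings of $f$ via some $\mathscr{F}$-gadget---would require dividing $\underline{f}^{\,2}$ by $\underline{f}$, but you only have $\frac{1}{u}$ for rank~$1$ pinnings $u$, not $\frac{1}{f}$ itself (property~(5) applies only to functions already in $\mathscr{F}$). There is no gadget that manufactures $\frac{1}{\underline{f}}$ from rank~$1$ slices unless $\underline{f}$ is a product, which it need not be. So the contrapositive step does not go through.

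The fix is simple once you see it: in each forbidden case, pin $f$ to obtain a rank~$1$ function $u$ whose single bundle is odd (the case analysis in the paper shows such a $u$ is always available), and use $u$ and $(=_4)$ to pad each odd bundle of $f$ to even, one bundle at a time. The reciprocal trick then certifies $g\notin\mathscr{F}$ at every step.
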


\begin{proof}

After picking free variables $x_1$ and $x_2$, we have
up to three bundles,  named $x_1$, $x_2$, and $x_1+x_2$.

First suppose  the bundle $x_1+x_2$ is empty.
The  bundles  $x_1$ and $x_2$  are certainly both non-empty.
We will make both bundles  $x_1$ and $x_2$ non-empty and even.
 If the bundle   $x_1$ is odd
(which we may assume  it consists of  a singleton variable $x_1$),
we can pin appropriately on the $x_2$  bundle to get a unary function
$u(y)$ of rank 1 (with a singleton bundle $y$).
Use a
 gadget composed of one copy of $f$, one  copy of  $u$ and
 one  copy of $(=_4)$.
Use two variables of $(=_4)$ to connect $x_1$ of $f$ and $y$ of $u$,
The other two variables of $(=_4)$ are left as two input
variables of the gadget.
We effectively made the singleton bundle  $x_1$ to become
two  equal variables as an even bundle $({\rm ++})$.

Formally, we construct
 a function $h(*)$, such that $h(z_1,x_2)= \sum_{x_1, y}
f(x_1,x_2) \cdot u(y)
 \cdot (=_4)(x_1,y,z_1,z_1)$. The $z_1$ bundle of $h$ is even,
 and the $x_2$ bundle of $f$ is unchanged.

We claim that $h \not \in \mathscr{F}$.
For a contradiction suppose
 $h \in \mathscr{F}$. We construct a gadget by connecting the variable of $\frac{1}{u}$ to one input variable in the $z_1$ bundle of $h$.
 Because $u$ and $\frac{1}{u}$ are
in $\mathscr{F}$, we have $f(z_1,x_2(*))=
\sum_{z_1}h(z_1({\rm ++}),x_2(*)) \cdot \frac{1}{u}(z_1)
  \in \mathscr{F}$,
where the sum is over one variable of the bundle $z_1({\rm ++})$ in $h$,
equated with the only variable $z_1$ of $\frac{1}{u}$.
 A contradiction. Hence,  $h \not \in \mathscr{F}$.

Similarly, we can change the $x_2$ bundle to an
even bundle, without changing the $x_1$ bundle,
 keeping out of $\mathscr{F}$.
Therefore we can get a function $g \not \in \mathscr{F}$
with exactly 2 non-empty even bundles.

Now suppose the bundle $x_1+x_2$ is not empty. Then it is either odd,
 or it is even but non-empty.
If it is odd, then either all three bundles named $x_1$, $x_2$
and  $x_1+x_2$ are odd, in which case we are done,
or at least one of the bundles named $x_1$ or $x_2$
is even (and non-empty because of free variable status). Without loss of generality
suppose the bundle $x_1$ is even.  Now we pin the $x_2$ bundle appropriately,
then the bundles of $x_1$ and  $x_1+x_2$ are merged,
creating a unary function $u$ of rank 1 with an odd bundle.
Use this $u$ (and $(=_4)$) we can again change all three bundles of $f$ to
be even, just like before, resulting in a rank 2 function
$g \not \in \mathscr{F}$.

If the bundle $x_1+x_2$ is  even and non-empty,
then either all three bundles named $x_1$, $x_2$
and  $x_1+x_2$ are even, in which case we are done because
the  bundles named $x_1$, $x_2$
are non-empty,
or at least one of the bundles named $x_1$ or $x_2$
is odd. Without loss of generality
suppose the bundle $x_1$ is odd.  Now we pin the $x_2$ bundle appropriately,
then the bundles of $x_1$ and  $x_1+x_2$ are merged,
creating a unary function $u$ of rank 1 with an odd bundle.
The rest of the proof is the same.
\end{proof}

We have several more lemmas in the same vein.
Two of them are still about rank $2$, and the remaining
 three are  about rank $3$.
The construction method in Lemma~\ref{lemma r2 useful even odd}
of using $(=_4)$
to merge two original bundles into a new bundle
and the argument that the new function is not in $\mathscr{F}$,
 is repeatedly used in the following five lemmas.
For simplicity of the statement, we just say which bundle is merged to which, by setting which two variables to be equal. For the rank $3$ case, we often merge 3 pairs of bundles at the same time, so  we need to consider the support structure is not affected.

\begin{lemma}  \label{lemma r2 useful  consistent opposite}
Suppose $f \not \in \mathscr{F}$ has rank 2
and pinning any variable of $f$ produces  a rank 1 function in $\mathscr{F}$.
If each bundle of $f$ is even, then we can construct a rank 2 function $g
\not \in \mathscr{F}$, such that either all its 3 bundles are opposite, or  all its 3 bundles are non-empty consistent, or it has exactly 2
 non-empty bundles which are both consistent.
\end{lemma}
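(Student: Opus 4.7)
The plan is to parallel the proof of Lemma~\ref{lemma r2 useful even odd}, but with a simpler gadget since I only need to toggle the consistency of each bundle, not change its parity. Fix free variables $x_1, x_2$, so the three candidate bundles are named $x_1$, $x_2$, and $x_1+x_2$; the first two are automatically nonempty by freeness, and by hypothesis every nonempty bundle is even and hence either consistent or opposite. The aim is to exhibit a one-variable flipping gadget that swaps the consistency of a single bundle while preserving everything else, and then to apply it selectively to each bundle.

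The tool will be the binary disequality function $(=_2)(+-) \in \mathscr{F}$. Given any input variable $z$ of $f$, I form a gadget by identifying $z$ with one input of $(=_2)(+-)$ and exposing the other input as a fresh external variable $z'$; the resulting $\mathscr{F}$-gate $h$ satisfies $h(z', \text{others}) = f(1-z', \text{others})$. This leaves every bundle except the one containing $z$ untouched, while replacing one $+$ by a $-$ (or vice versa) in the bundle of $z$. It therefore swaps the consistency of that bundle without changing its cardinality, parity, or nonemptiness. Since the same construction is involutive (a second flip of the new variable recovers $f$), $h \in \mathscr{F}$ would force $f \in \mathscr{F}$ by closure under gadget construction, contradicting $f \not\in \mathscr{F}$. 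Thus the flip preserves non-membership, so I may toggle the consistency of each nonempty bundle independently.

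A case analysis then finishes the proof. If the bundle $x_1+x_2$ is empty, exactly two bundles are nonempty; I flip each that is currently opposite to obtain a rank 2 function $g \not\in \mathscr{F}$ with exactly two nonempty bundles, both consistent. If the bundle $x_1+x_2$ is nonempty, all three bundles are nonempty; I flip as needed so that all three become consistent (or all three opposite). I do not expect a real obstacle here: the only care required is checking that the flipping gadget truly acts on bundle types as described and that the involutive reversal stays inside $\mathscr{F}$, both of which follow directly from the standing assumptions that $\mathscr{F}$ contains $(=_2)(+-)$ and is closed under gadget construction (properties (3) and (4) of Subsection~\ref{subsec:RegularizationLemmas}).
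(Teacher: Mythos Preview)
Your argument has a genuine gap: you treat $(=_2)(+-)$ as the binary \textsc{Disequality} $[0,1,0]$, but the paper explicitly notes that $(=_2)(+-)$ is the arity-$4$ function $[1,1](++--)$ (property~(3) of Subsection~\ref{subsec:RegularizationLemmas}). These are not the same, and the distinction matters: the binary \textsc{Disequality} is \emph{not} in $\mathscr{L}$. Indeed, for $\sigma=(0,1)\in{\rm supp}([0,1,0])$ the transformed function has compressed form $[\alpha,1]$, whose ratio $\alpha$ is not a power of $i$, so it fails membership in $\mathscr{A}$; equivalently, equation~(\ref{eqn-inhomogeneous-deg-1}) fails since $\sum_i a_{i1}b_i = 1 \not\equiv 0 = c_1$. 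Hence $[0,1,0]\notin\mathscr{L}$, and since the lemma is meant to apply with $\mathscr{F}=\mathscr{L}$, your single-variable flip is not an $\mathscr{F}$-gate in that case and the involutive reversal argument collapses.

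Nor can the arity-$4$ function $[1,1](++--)$ rescue the idea: its unique bundle is \emph{consistent}, and merging a consistent bundle into any bundle of $f$ leaves the consistent/opposite status unchanged (consistent $\cup$ consistent $=$ consistent, consistent $\cup$ opposite $=$ opposite). So no direct toggling gadget is available from the standing assumptions. The paper's proof instead reruns the case analysis of Lemma~\ref{lemma r2 useful even odd} verbatim with ``odd'' replaced by ``opposite'' and ``even'' by ``consistent'': one pins $f$ (using the hypothesis that every pinning lands in $\mathscr{F}$) to produce a rank-$1$ function $u\in\mathscr{F}$ with an \emph{opposite} bundle whenever the bundle structure of $f$ makes one available, merges $u$ into the offending bundle via $(=_4)$ (opposite $\cup$ opposite $=$ consistent), and uses $\frac{1}{u}\in\mathscr{F}$ to certify that the result is still outside $\mathscr{F}$. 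That case analysis is what your shortcut was trying to avoid, but it is actually needed.
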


\begin{proof}
The proof is similar to Lemma \ref{lemma r2 useful even odd}. We replace ``odd" by ``opposite", and ``even" by ``consistent".
\end{proof}

\begin{lemma}  \label{lemma r2 useful odd to +}
Suppose $f \not \in \mathscr{F}$ has rank 2
and pinning any variable of $f$ produces  a rank 1 function in $\mathscr{F}$.
If each bundle of $f$ is odd, then we can construct a rank 2 essential arity 3
function  $g(+) \not \in \mathscr{F}$.
\end{lemma}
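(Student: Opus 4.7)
The plan is to first reduce each odd bundle of $f$ to a singleton of type $(+)$ or $(-)$ via collation, and then to flip each residual $(-)$ into a $(+)$ using a gadget built from a pinned rank-$1$ function of $f$ together with $=_2 \in \mathscr{F}$, following the template of Lemma~\ref{lemma r2 useful even odd}.

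Since every bundle of $f$ is odd, none is empty, so the three possible bundles $x_1$, $x_2$, and $x_1+x_2$ are all present and $f$ already has essential arity $3$. Applying the collation operation (realized by copies of $=_4 \in \mathscr{F}$) to each bundle cuts it down to a single variable of type $(+)$ or $(-)$ while preserving parity; by property~(6), the resulting function $f'$ still lies outside $\mathscr{F}$.

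For each remaining $(-)$ bundle, say bundle $x_1$ with sole variable $W_1$, I would pin the variable of bundle $x_2$ in $f$ to an appropriately chosen constant $b \in \{0,1\}$, producing a rank-$1$ function $u \in \mathscr{F}$ (by the lemma's hypothesis). For any configuration of the original types of bundles $x_1$ and $x_1+x_2$, exactly one of the two possible pinning values $b$ makes the merged $x_1$-bundle of $u$ an \emph{opposite} bundle, containing one type-$(-)$ variable $Y_1$ (inherited from the original $x_1$ bundle) and one type-$(+)$ variable $Y_2$ (inherited from the original $x_1+x_2$ bundle, whose type flips depending on $b$). I then form the gadget that connects $Y_1$ to $W_1$ via $=_2$; the external $Y_2$ appears as a type-$(+)$ singleton in bundle $x_1$ of the combined function, while bundles $x_2$ and $x_1+x_2$ of $f'$ are undisturbed. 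Non-membership in $\mathscr{F}$ is preserved by the standard $1/u$-argument of Lemma~\ref{lemma r2 useful even odd}, which uses property~(5): were the resulting gadget in $\mathscr{F}$, contracting it against $1/u \in \mathscr{F}$ would recover $f'$ in $\mathscr{F}$, a contradiction.

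Iterating this flipping for every $(-)$ bundle produces the desired $g = g(+)$ of rank~$2$, essential arity~$3$, with $g \not\in \mathscr{F}$. The main obstacle is confirming that the appropriate pinning value indeed yields the opposite-type merged bundle for every initial configuration of bundle types (a straightforward case analysis over the two pin values), and that the $1/u$ cancellation argument continues to work after repeated flipping; both are routine extensions of the techniques in the preceding regularization lemmas.
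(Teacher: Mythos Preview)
Your approach is correct and uses the same core gadget as the paper: pin to obtain a rank-$1$ function $u \in \mathscr{F}$ with an opposite bundle, then merge it into the offending bundle and invoke the $1/u$ cancellation to certify non-membership in $\mathscr{F}$.

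The one thing you overlook is a simplification that collapses your iteration to a single step. After collating $f$ to three singleton variables $V_1,V_2,V_3$, you are free to declare $V_1,V_2$ themselves to be the free variables; by definition a free variable sits in its own bundle with type $(+)$. Hence only the \emph{dependent} bundle (named $x_1+x_2$) can possibly have type $(-)$. The paper's proof then does exactly one flip: if the dependent variable equals $x_1+x_2+1$, pin $x_1=0$ to get $u(+-)$ and merge its $(-)$ variable with the dependent $(-)$ variable of $f'$. No case analysis over pin values and no repeated flipping is needed. Your more general iteration is valid (and the $1/u$ argument does survive composition, as you note), but it is working harder than necessary because you implicitly keep the bundle names tied to the original free variables of $f$ rather than re-choosing them after collation.
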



\begin{proof}
By being odd, all 3 bundles of  $f$ are non-empty.
By the collation bundle type operation we may
assume  each bundle of $f$ has only one variable,
and so $f$ has arity $3$.
We pick two variables as free variables.
If the dependent bundle of $f$ has type $-$,
i.e., the input variables of $f$ are $x_1,x_2,x_1+x_2+1$,
then we pin $x_1$ to $0$ to get a rank 1 function $u(+-)$.
Then we merge the bundle of $u(+-)$ to the $x_1+x_2+1$ bundle of $f$
 by equating  the two $-$ variables, This produces the
desired function
 $g(+) \not \in \mathscr{F}$.
\end{proof}

We go on to the second batch of lemmas about rank $3$ functions.

\begin{lemma}  \label{lemma r3 useful even odd}
Suppose  $f \not \in \mathscr{F}$ has rank 3 and pinning any variable of $f$
produces a rank 2 function in $\mathscr{F}$.
We can construct a rank 3 function $h$, such that $h \not \in
\mathscr{F}$, and either all its 7 bundles are odd, or all its 7 bundles are non-empty even, or it has exactly 3 non-empty even bundles.
\end{lemma}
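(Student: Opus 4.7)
The plan is to lift the rank-$2$ argument of Lemma~\ref{lemma r2 useful even odd} to rank~$3$. Index the (at most seven) non-empty bundles of $f$ by the non-zero vectors $\mathbf{d} \in \mathbb{Z}_2^3\setminus\{0\}$, and let $S$ denote the set of such indices. Since $f$ has rank~$3$, $S$ must span $\mathbb{Z}_2^3$, so $|S| \ge 3$. The goal is to construct $h \not\in \mathscr{F}$ whose bundle structure matches one of the three prescribed regular forms.

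The basic move is a \emph{parity-toggle} on a chosen bundle $B_{\mathbf{d}}$, exactly analogous to the construction in Lemma~\ref{lemma r2 useful even odd}. By hypothesis, pinning one variable of $f$ yields a rank-$2$ function in $\mathscr{F}$, and pinning one further variable of that function yields a rank-$1$ unary $u \in \mathscr{F}$ whose singleton bundle lies in the same $\mathbb{Z}_2$-linear combination as $B_{\mathbf{d}}$. Attach a copy of $(=_4)$ to one variable of $B_{\mathbf{d}}$ in $f$ and to the single variable of $u$, leaving two fresh wires as a new $({\rm ++})$ pair inside $B_{\mathbf{d}}$. The resulting function $h$ has the parity of $B_{\mathbf{d}}$ flipped and no other bundle affected. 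To see $h \not\in \mathscr{F}$, contract one of the new wires of $h$ against $\frac{1}{u} \in \mathscr{F}$ (which exists by property~(5)): the result is $f$ up to bundle-type collations, which by property~(6) preserve membership in $\mathscr{F}$, contradicting $f \not\in \mathscr{F}$. By varying which pair of free variables we pin and to what values, the singleton bundle of $u$ may be placed in any of the three coordinate directions of $\mathbb{Z}_2^3$; hence every one of the seven bundles $B_{\mathbf{d}}$ is reachable by some parity-toggle.

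Now perform a case split on $|S|$. If $|S|=3$, then $S$ is a basis of $\mathbb{Z}_2^3$ and, after relabeling, the three non-empty bundles are $B_{e_1}, B_{e_2}, B_{e_3}$; three parity-toggles (one per coordinate) make each of them non-empty even, yielding the third alternative. If $|S|=7$ and the seven bundles already share a common parity, we are done with the first or second alternative; otherwise, apply the toggle to each minority-parity bundle until all seven parities agree. For the intermediate cases $4 \le |S| \le 6$, either \emph{fill} the missing bundles by a $(=_4)$-merge of two existing bundles $B_{\mathbf{d}}, B_{\mathbf{d}'}$ (producing a new variable in $B_{\mathbf{d}+\mathbf{d}'}$) until $|S| = 7$, or further pin variables to reduce $|S|$ to~$3$; either route falls into a handled case.

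The principal obstacle is ensuring that in each parity-toggle the auxiliary unary $u$ is non-vanishing and has its singleton bundle in the desired direction, so that the $(=_4)$ contraction is non-trivial and the reciprocal reversal is well-defined. The freedom in choosing which variable to pin and to which value, combined with the rich linear structure of $\mathbb{Z}_2^3$, should suffice to produce a suitable $u$ for each of the seven target bundles. A secondary technical point is checking that each toggle and each fill leaves the remaining bundles intact, which is immediate from the bundle-index arithmetic in $\mathbb{Z}_2^3$; these verifications mirror the corresponding steps in the proof of Lemma~\ref{lemma r2 useful even odd}.
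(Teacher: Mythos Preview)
Your sketch has two genuine gaps.

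First, the parity-toggle requires a rank-$1$ function $u$ with an \emph{odd} bundle, and such a $u$ is not always obtainable by two pinnings. Consider the configuration where the bundles $x_1,x_2,x_3,x_1{+}x_2{+}x_3$ are even and $x_1{+}x_2,x_1{+}x_3,x_2{+}x_3$ are odd. Pinning any free variable $x_i$ merges the remaining six bundles into three, each the union of one even and one odd bundle, hence each odd; pinning a second time then merges two of those odd bundles into a single \emph{even} bundle. No choice of pin values changes this parity arithmetic, so no singleton unary is available and your toggle cannot be applied. The paper resolves exactly this case by taking the rank-$2$ function $g$ (all three bundles odd) produced by the first pinning and merging its three bundles \emph{simultaneously} with $f$'s three odd bundles $x_1{+}x_2,x_1{+}x_3,x_2{+}x_3$, using the fact that both triples satisfy the same linear dependence so that only two $(=_4)$ connections are needed and no extra constraint is imposed on ${\rm supp}(f)$. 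Your final paragraph flags availability of $u$ as the obstacle but asserts without argument that ``the freedom \ldots\ should suffice''; it does not.

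Second, your handling of $4\le|S|\le 6$ is broken in both branches. A ``$(=_4)$-merge of two existing bundles $B_{\mathbf d},B_{\mathbf d'}$'' of $f$ forces a variable in $B_{\mathbf d}$ equal to one in $B_{\mathbf d'}$, which is a new linear constraint $(\mathbf d+\mathbf d')\cdot x=c$ on the support and drops the rank to~$2$; it does not create a new dangling variable in $B_{\mathbf d+\mathbf d'}$. The paper fills empty bundles by instead merging the three bundles of an auxiliary rank-$2$ function $g\in\mathscr{F}$ (obtained by pinning $f$) into three bundles of $f$ whose names satisfy the same dependence as $g$'s, so that the third merge is automatic and no superfluous constraint is added. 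And ``further pin variables to reduce $|S|$ to $3$'' is not an option: pinning reduces the \emph{rank}, not merely the number of non-empty bundles, so you would leave the rank-$3$ setting altogether.
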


\begin{proof}
Because the rank of $f$ is 3, there are 3 non-empty independent bundles, named $x_1, x_2, x_3$ respectively.
We give a list, which covers all possibilities and each possibility ends with a gadget realizing a function as required by the conclusion.

\begin{enumerate}
\item All other 4 bundles named $x_1+x_2$, $x_1+x_3$, $x_2+x_3$
and $x_1+x_2+x_3$ are empty.

\begin{enumerate}
\item
All 3 non-empty bundles are even. Then we take $f$ itself.

\item
There is an odd bundle among  $x_1, x_2, x_3$.

We pick one odd bundle
and pin the variables in the other two bundles, and get a rank 1 function $u$. By the condition, $u \in \mathscr{F}$.
For any odd bundle of $f$, we merge $u$'s bundle with this bundle of $f$. Just as what we did in Lemma \ref{lemma r2 useful even odd}.
We get a function with exactly 3 non-empty even bundles.
\end{enumerate}

\item All  7 bundles are non-empty.

\begin{enumerate}
\item  All  7 bundles are even. We take $f$ itself.

\item  All  7 bundles are odd. We take $f$ itself.

\item At least 4 bundles are even, and at least one bundle is odd.

No 4 nonzero vectors of $\mathbb{Z}_2^3$ can be contained in a 
2 dimensional subspace. So there are 3 linearly independent vectors.
Whether $f \in \mathscr{F}$ is independent of the choice of 
free variables for its support.
So among the even bundles, we can pick 3 linearly
independent bundles and name them  $x_1, x_2, x_3$ respectively.
Under this renaming of the variables and bundles,
  $x_1, x_2, x_3$ are even bundles.

\begin{enumerate}
\item Bundle $x_1+x_2$ is even.

No matter which bundles of the rest 3 bundles are odd, 
we can always pin to get a rank 2 function $g$ in $\mathscr{F}$ 
containing 3 non-empty bundles of different parity types.
Indeed, if $x_1+x_3$ is an odd bundle, we can pin $x_1$,
and the bundles of $x_3$, $x_1+x_3$  are merged producing
an odd bundle,  and  the bundles of $x_2$, $x_1+x_2$ are merged producing
an even bundle. Similarly if $x_2+x_3$ is an odd bundle,  we can pin $x_2$.
If both $x_1+x_3$ and  $x_2+x_3$  are even bundles,
then $x_1+x_2+x_3$ is an odd bundle, then we pin $x_3$.
Go on to pin $g$ to get a rank 1 function $u$, which has one odd bundle.
Using $u$, we can change all bundles of $f$ into even bundles.

By symmetry, the proof is the same if either  bundles
$x_1+x_3$ or  $x_2+x_3$ is even.

\item Bundle $x_1+x_2+x_3$ is even.


We may assume the 3 bundles $x_1+x_2, x_1+x_3, x_2+x_3$ are odd bundles.
We pin $x_3$ to 0, to get a rank 2 function $g$.
 All 3 bundles $y_1,y_2, y_1+y_2$ of $g$ are odd.
We will
merge $g$'s bundles  $y_1,y_2, y_1+y_2$
 to $f$'s bundles $x_1+x_2, x_1+x_3, x_2+x_3$  respectively.
Notice that the same linear dependence holds for these
the respective three bundles.
To effect this merging
we make one variable from the bundle $x_1+x_2$ 
 equal to one variable from the bundle $y_1$ utilizing $(=_4)$. 
Then we make  one variable from  the bundle $x_1+x_3$ 
 equal to one variable from  the bundle $y_2$ utilizing another $(=_4)$.
The last pair of bundles are already merged automatically.
 To avoid introducing extra linear restriction on the support of $f$, 
we do not use any superfluous $(=_4)$ to merge this last pair of bundles.
 We get a function $h$ of rank 3 with all 7  bundles being non-empty and even.
\end{enumerate}

\item At least 4 bundles are odd, and at least one bundle is even.

The proof is parallel to the case 2 (c), except that in the last case it 
ends with a function $h$ of rank 3 with all 7  bundles being  odd.
\end{enumerate}

\item There is a non-empty bundle among $x_1+x_2, x_1+x_3, x_2+x_3, x_1+x_2+x_3$.
(This is logically the complement of case  1. We will use case 2
as a special subcase and reduce this case 3 to case 2.)

We can pin a 
bundle to get a rank 2 function $g(y_1,y_2,y_1+y_2)(*)$, whose 3 bundles are not empty and it is in $\mathscr{F}$. Similarly, we can merge the $x_1$ and $y_1$ bundles, and merge the $x_2$ and $y_2$ bundles, and then the $x_1+x_2$ and $y_1+y_2$ bundles are merged automatically, to make the $x_1+x_2$ bundle not empty, keeping the function outside of $\mathscr{F}$.
If the bundle $x_1+x_2+x_3$ is empty, we can merge 
the bundle $y_1$ to $x_1$, and the bundle  $y_2$ to $x_2+x_3$,
and then automatically  the bundle  $y_1+y_2$ to $x_1+x_2+x_3$.

We get a rank 3 function outside of $\mathscr{F}$, with 7 non-empty bundles. Then, we go to the proof in case 2.
\end{enumerate}

\end{proof}

\begin{lemma}  \label{lemma r3 useful opposite consistent}
Suppose $f \not \in \mathscr{F}$ has rank 3
and pinning any variable of $f$ produces  a rank 2 function in $\mathscr{F}$.

If each bundle of $f$ is either consistent or opposite,
 then we can construct a rank 3 function $h \not \in 
\mathscr{F}$, such that either all its 7 bundles are opposite, or  all its 7 bundles are non-empty consistent, or it has exactly 3 non-empty 
bundles (with linearly independent names)
  which are consistent bundles.
\end{lemma}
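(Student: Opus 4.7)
The plan is to mirror the proof of Lemma~\ref{lemma r3 useful even odd} almost verbatim, exploiting the fact that the consistent/opposite dichotomy behaves under merging of bundles exactly like the even/odd dichotomy (consistent $\cup$ opposite $=$ opposite, opposite $\cup$ opposite $=$ consistent, consistent $\cup$ consistent $=$ consistent). Since the hypothesis already tells us every bundle of $f$ is even, any bundle that is produced by pinning or by merging (via $(=_4)$) will again be even, and its consistent/opposite status is determined by the $\mathbb{Z}_2$-addition of types, exactly as parities were added in the even/odd case. Thus the same gadget constructions used there, with ``odd'' replaced by ``opposite'' and ``even'' replaced by ``consistent,'' will do the work.

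Concretely, I would organize the proof into the same three top-level cases, indexed by which of the seven possible bundles of $f$ are non-empty. Case 1: only the three ``coordinate'' bundles named $x_1$, $x_2$, $x_3$ are non-empty. If all three are consistent, take $h=f$; otherwise, some bundle is opposite, and pinning the variables in the other two bundles yields a rank-$1$ function $u\in\mathscr{F}$ with a non-empty opposite bundle. Using the same $(=_4)$-merging argument as in Lemma~\ref{lemma r2 useful even odd} (which preserves membership-versus-non-membership in $\mathscr{F}$ thanks to the availability of $\frac{1}{u}\in\mathscr{F}$), we promote each opposite bundle of $f$ into a consistent one, producing a rank-$3$ function outside $\mathscr{F}$ with exactly three non-empty consistent bundles whose names are linearly independent. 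Case 2: all seven bundles are non-empty. The uniform subcases (all consistent, all opposite) are handled by $h=f$. In the mixed subcases, using that no four non-zero vectors of $\mathbb{Z}_2^3$ lie in a $2$-dimensional subspace, we can relabel the free variables so that $x_1,x_2,x_3$ are of the majority type; then a careful pin, splitting on which of $x_1+x_2$, $x_1+x_3$, $x_2+x_3$, $x_1+x_2+x_3$ is of which type, produces a helper $u$ with a single minority-type bundle, and merging $u$ with each minority-type bundle of $f$ (using $(=_4)$) converts them all to the majority type. The delicate sub-subcase, parallel to case 2(c)(ii) of Lemma~\ref{lemma r3 useful even odd}, is when $x_1+x_2+x_3$ is the only ``off-type'' bundle among the dependent ones; here we pin $x_3$ to obtain a rank-$2$ function $g$ with all three bundles opposite, and merge its three bundles into $x_1+x_2$, $x_1+x_3$, $x_2+x_3$ respectively using only two copies of $(=_4)$ so as not to impose a spurious linear relation. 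Case 3: some but not all of the four dependent bundles are non-empty. A single pin of one of the non-empty dependent bundles produces a rank-$2$ function with all three bundles non-empty and in $\mathscr{F}$, which we merge back into $f$ to make the absent dependent bundles non-empty, reducing to Case 2.

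The main obstacle I anticipate is exactly the one present in Lemma~\ref{lemma r3 useful even odd}: ensuring that each merging step does \emph{not} inadvertently create extra $\mathbb{Z}_2$-linear dependencies among the bundles of the constructed gadget, which would drop its rank below $3$. This is the reason the ``$x_1+x_2+x_3$'' subcase uses two copies of $(=_4)$ and lets the third pair of bundles merge for free: introducing a third $(=_4)$ would force a spurious relation on $\mathrm{supp}(h)$. Once one verifies that each gadget retains rank $3$ and preserves the even-bundle property (which is automatic, since merging two even bundles yields an even bundle), the conclusion about consistent versus opposite types follows from the group-theoretic rules above, and the ``$h\notin\mathscr{F}$'' conclusion follows by the same reciprocal-$u$ argument as in Lemma~\ref{lemma r2 useful even odd}.
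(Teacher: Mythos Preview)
Your proposal is correct and follows exactly the approach the paper takes: the paper's proof consists of a single sentence instructing the reader to repeat the proof of Lemma~\ref{lemma r3 useful even odd} with ``odd'' replaced by ``opposite'' and ``even'' replaced by ``consistent,'' and you have spelled out precisely this substitution, including the correct handling of the delicate case 2(c)(ii) where only two copies of $(=_4)$ are used so as not to over-constrain the support.
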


\begin{proof}
The proof is similar to Lemma \ref{lemma r3 useful even odd}. We replace ``odd" by ``opposite", and ``even" by ``consistent".
\end{proof}

\begin{lemma}  \label{lemma r3 useful odd to +}
Suppose $f \not \in \mathscr{F}$ has rank 3
and pinning any variable of $f$ produces  a rank 2 function in $\mathscr{F}$.
If each bundle of $f$ is odd, 
then we can construct a rank 3 function $h(+)  \not \in \mathscr{F}$.
\end{lemma}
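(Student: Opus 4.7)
The plan is to follow the template of Lemma~\ref{lemma r2 useful odd to +}: flip every $(-)$ variable of $f$ to a $(+)$ variable by means of a local gadget built from the binary disequality $(=_2)(+-)$, which lies in $\mathscr{F}$ by assumption (3).

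Because every bundle is odd, each is non-empty, so the essential arity of $f$ attains the maximum $2^3 - 1 = 7$. After applying the collation type operation we may assume each bundle has exactly one variable, so $f$ has arity $7$. Choose three input variables to serve as the free variables $x_1, x_2, x_3$; their bundles are $(+)$ automatically. Each of the four remaining input variables belongs to one of the dependent bundles $x_1+x_2$, $x_1+x_3$, $x_2+x_3$, $x_1+x_2+x_3$ with some type in $\{+,-\}$.

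At every dependent variable $y$ whose bundle type is $(-)$, attach a fresh copy of $(=_2)(+-)$: identify $y$ with one of its inputs along an internal edge and expose the other input $y'$ as a new external wire. The disequality forces $y = y'+1$ on the support, so the new external variable $y'$ carries the complement of $y$, and hence its bundle type is flipped to $(+)$. Carrying out this local gadget in parallel at every $(-)$ variable (at most four of them) yields a rank-$3$ gate $h$ whose seven bundles are all of type $(+)$, so $h = h(+)$. Each flip is local — touching exactly one external wire of $f$ and producing exactly one new external wire in its place — so the flips commute and act coordinate-wise on $\mathrm{supp}(f)$, preserving its dimension.

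To see $h \not\in \mathscr{F}$, note that the flipping gadget is an involution on the affected coordinate: composing two disequalities along a wire sends $y \mapsto y+1 \mapsto y$. Hence reapplying the same disequality gadget at each previously flipped wire of $h$ reconstructs $f$ as a gate built from $h$ and copies of $(=_2)(+-)$. By closure of $\mathscr{F}$ under gadget construction (assumption (4)) together with $(=_2)(+-) \in \mathscr{F}$, the assumption $h \in \mathscr{F}$ would force $f \in \mathscr{F}$, contradicting the hypothesis. The main obstacle is merely the bookkeeping above: verifying that the four simultaneous flips neither merge bundles nor reduce the dimension of the support, which is automatic from the locality of each gadget.
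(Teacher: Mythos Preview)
Your argument has a genuine gap: you conflate membership in $\mathscr{F}$ with availability as a gadget component in $\#\mathrm{CSP}_2^c(\{f\})$. Assumption (3) says the function $(=_2)(+-)=[1,1](++--)$ belongs to the tractable class $\mathscr{F}$ (which stands for $\mathscr{A}$, $\mathscr{A}^\alpha$, or $\mathscr{L}$). It does \emph{not} say that this function --- or the binary disequality $[0,1,0]$ --- is realizable from $f$, $\Delta_0$, $\Delta_1$, and the even equalities. In fact, from $\{\Delta_0,\Delta_1,=_{2k}\}$ alone every realizable signature is a product of pinnings and equalities on disjoint variable blocks, so no disequality-type constraint can be produced; the only source of a $(+-)$-type auxiliary must be $f$ itself. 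Your construction of $h$ therefore uses a building block you do not have. (Your \emph{analysis} that $h\notin\mathscr{F}$, by undoing the flips inside $\mathscr{F}$, would be fine --- there it is legitimate to invoke $(=_2)(+-)\in\mathscr{F}$ and closure --- but that is moot if $h$ was never constructed.)

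The paper's proof addresses exactly this obstacle. It obtains a $(+-)$-type helper \emph{by pinning $f$}: if condition ({\bf F}) holds (some face $x_j=1$ of the cube carries an odd number of $(-)$ types), pinning the other two free variables to $0$ merges four singleton bundles into one bundle of type with an odd count of minuses, which collates to a rank-$1$ function $u(+-)$; this $u(+-)$ is then used to flip each $(-)$ of $f$. When ({\bf F}) fails, a short parity argument forces the configuration where only the three weight-two bundles are $(-)$; pinning one free variable yields a rank-$2$ helper $g(+-)$ with the correct linear dependence to merge onto those three bundles simultaneously. The case split is not bookkeeping but the substantive content: it is precisely what guarantees that a realizable $(+-)$ auxiliary exists.
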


\begin{proof}
By being odd, all 7 bundles of  $f$ are non-empty.
Using the collation operation on the bundle types, we can assume all bundles of $f$ are singletons.
We pick 3 independent bundles of $f$  as free variables, 
so they are given type $x_1(+), x_2(+), x_3(+)$.

We define condition ({\bf F}): 
\begin{quote}
There are four bundles which contain a common free variable $x_j$
 in their names and an odd number of them are of the
$(-)$ type.
\end{quote}

Suppose  condition ({\bf F}) holds.
Such  four bundles correspond to a face (subcube) $\{0, 1\}^2$ of the form
$x_j=1$ in  the cube  $\{0, 1\}^3$.
If we pin the other two free variables to $0$, 
these  four bundles are merged
into a single bundle of  4 variables, and
$(+)$ type (respectively, $(-)$ type)
 variables in these four bundles remain $(+)$ type (respectively, $(-)$ type). 
So there is an odd number of $(-)$ type variables among the 4 variables.
After collation, we get a rank 1 function $u(+-)$ in 
$\mathscr{F}$.
Using $u(+-)$, we can change all $(-)$ type
variables of $f$ to $(+)$ type, keeping it  outside of $\mathscr{F}$.
This proves the lemma under condition ({\bf F}).

If the bundle $x_1+x_2+x_3$ has type $(-)$,
consider the three faces (subcubes) $\{0, 1\}^2$ of the form
$x_j=1$ in  the cube  $\{0, 1\}^3$.
If we assign a number $1 \in \mathbb{Z}_2$ for a $(-)$ type at a vertex
of $\{0, 1\}^3$, and $0  \in \mathbb{Z}_2$ for a $(+)$ type,
and let $s_j$ be the sum in $\mathbb{Z}_2$ over the face corresponding to
$x_j=1$ and let $s = \sum_{j=1}^3 s_j$ in $\mathbb{Z}_2$,
then the value 1 at $x_1+x_2+x_3$ contributes 3 times $\bmod 2$ to $s$,
each value at a point of Hamming weight two contributes 2 times $\bmod 2$
(thus 0, regardless of its value), and the value from 
 $x_1(+), x_2(+), x_3(+)$ are all 0. Hence $s \equiv 1 \bmod 2$,
and therefore  some $s_j \equiv 1 \bmod 2$. Thus condition ({\bf F}) holds.
 The lemma has been proved in this  case. 

In the following we can assume the bundle $x_1+x_2+x_3$ has type $(+)$,
and condition ({\bf F}) does not hold.
Then $s_1 \equiv s_1 \equiv s_3 \equiv 0  \bmod 2$
 implies that all three bundles at 
$x_1+x_2, x_1+x_3, x_2+x_3$
are of the same type, all $(+)$ or all $(-)$.
If they are all of type $(+)$ then we are done.
Suppose all three bundles at 
$x_1+x_2, x_1+x_3, x_2+x_3$
 have type $(-)$.
We can pin one free variable to 0, and get a function
of  rank 2 and essential arity 3. 
This function has type $(+-)$ in all three bundles $y_1$, $y_2$, $y_1+y_2$,
and we will denote it as  $g(+-)$.
 Now we merge the 3 bundles $y_1$, $y_2$, $y_1+y_2$
of $g(+-)$ to the
 3 bundles  $x_1+x_2, x_1+x_3, x_2+x_3$ of $f$
 respectively, by equating the $(-)$ variables in each bundle pair 
(utilizing a copy of $(=_4)$ as  by now the standard way).
Notice that the three bundles of $g$
satisfy the same linear dependence as
the bundles $x_1+x_2, x_1+x_3, x_2+x_3$ of $f$.
This changes the types of these three bundles of $f$
 from $(-)$ to $(+--)$, and then we can further change them to $(+)$
 by collation.
\end{proof}

\subsection{$\overline{\mathscr{P}}$}

In this subsection, we assume $\mathscr{F} \not \subseteq \mathscr{P}$.
Then there is a function $f \in \mathscr{F} - \mathscr{P}$.
Utilizing this $f$, we construct some function $h({\rm ++})$ having
the property
that its
 essential function $h \not \in \mathscr{P}$.
Formally we have the following lemma.

\begin{lemma}\label{lemma not product type}
Suppose $\mathscr{F} \not \subseteq \mathscr{P}$. Then we can construct a function $h{\rm ++} \in \overline{\mathscr{P}}{\rm ++}$, such that
\[\#{\rm CSP}_2^c(\{h{\rm ++}\} \cup \mathscr{F}) \leq_{\rm T}  \#{\rm CSP}_2^c(\mathscr{F}).\]
\end{lemma}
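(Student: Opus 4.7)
The plan is to take some $f\in\mathscr{F}\setminus\mathscr{P}$ and first try the direct route: Lemma~\ref{lemma:f2++} realizes $f^2{\rm ++}$ from $f$ in $\#\mathrm{CSP}_2^c$, so if $f^2\not\in\mathscr{P}$ one simply sets $h:=f^2$ and is done. The subtle point is that this can fail: it is possible to have $f\not\in\mathscr{P}$ while $f^2\in\mathscr{P}$, the prototypical obstruction being the Hadamard matrix $\bigl[\begin{smallmatrix}1&1\\1&-1\end{smallmatrix}\bigr]$, whose pointwise square is the all-ones matrix, which lies in $\mathscr{P}$.

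To reduce to a manageable situation in this edge case I would use the pinning characterization of $\overline{\mathscr{P}}$: from $f\not\in\mathscr{P}$ one obtains, by pinning all but two suitably chosen variables with $\Delta_0,\Delta_1\in\mathrm{CSP}_2^c$, a binary $g\not\in\mathscr{P}$. A direct case check shows that whenever $g$ has any zero entry one automatically has $g^2\not\in\mathscr{P}$, and then $h:=g^2$ via Lemma~\ref{lemma:f2++} works. Otherwise $g$ has all four entries nonzero, and $g^2\in\mathscr{P}$ then forces $g^2$ to be rank one, which in turn forces $g_{00}g_{11}+g_{01}g_{10}=0$ (together with $\det g\ne 0$), the rescaled-Hadamard form.

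For this residual Hadamard case I would take the ternary $h(e_1,e_2,e_3):=g(e_1,e_2)\,g(e_1,e_3)$; pinning $e_3$ to a constant yields $g$ column-scaled by a nonzero unary in $e_1$, a rank-$2$ binary without zero entries, so $h\not\in\mathscr{P}$. To realize $h{\rm ++}$ in $\mathrm{CSP}_2^c$, I would use two copies $g(a_1,a_2)$ and $g(a_1',a_3)$: a single $(=_4)$ on $(a_1,a_1',e_1,e_1')$ creates the doubled $e_1$-bundle, and for each $i\in\{2,3\}$ an auxiliary $(=_4)$ on $(a_i,e_i,e_i',w_i)$ whose dangling half-edge $w_i$ is consumed by a unary $[1,r]$. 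That unary is itself realizable from $g$: pinning one input of $g$ to $0$ produces $[g(0,0),g(1,0)]$, both entries nonzero, hence equal to some $[1,r]$ with $r\ne 0$ up to an overall scalar. Summing out each $w_i$ converts the auxiliary $(=_4)$ into the weighted equality $[a_i=e_i=e_i']\cdot r^{a_i}$, and the external function of the full gadget becomes $\tilde h(e_1,e_2,e_3)\prod_{i}[e_i=e_i']$ with $\tilde h=g(e_1,e_2)\,g(e_1,e_3)\,r^{e_2+e_3}$. Column-scaling $g$ by the nonzero unary $y\mapsto r^y$ preserves $\mathscr{P}$-non-membership, so $\tilde h\not\in\mathscr{P}$, and $h:=\tilde h$ provides the required $h{\rm ++}$.

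The main obstacle will be precisely this Hadamard case: the CSP$_2^c$ toolbox contains only even-arity equalities, so there is no direct $(=_3)$ with which to triple a half-edge for doubling. The key trick is to synthesize a ``soft $(=_3)$'' by attaching the unary $[1,r]$ extracted from $g$ itself; one then has to verify that the parasitic factor $r^{e_2+e_3}$ produced by the synthesis does not accidentally push the essential function back into $\mathscr{P}$, which is exactly the column-scaling observation that closes the construction.
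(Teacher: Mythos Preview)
There is a genuine gap in the pinning step. You claim that from $f\notin\mathscr{P}$ (with $f^2\in\mathscr{P}$) one can pin all but two variables and land on an honest arity-$2$ function $g\notin\mathscr{P}$, calling this ``the pinning characterization of $\overline{\mathscr{P}}$.'' This is false. Take $f(x_1,x_2,x_3)=[x_1{=}x_3]\cdot M(x_1,x_2)$ with $M=\bigl(\begin{smallmatrix}1&a\\b&-ab\end{smallmatrix}\bigr)$, $ab\neq 0$: here $f\notin\mathscr{P}$ (since $\det M=-2ab\neq 0$ and $M$ is neither diagonal nor antidiagonal) and $f^2\in\mathscr{P}$ (since $M^2$ has rank~$1$), yet every pinning of $f$ down to two variables lies in $\mathscr{P}$: pinning $x_2$ leaves a generalized equality on $(x_1,x_3)$, while pinning $x_1$ or $x_3$ either kills the support or leaves a degenerate binary. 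What actually holds, and what the paper establishes as the auxiliary Lemma~\ref{lemma arity reducing of not p}, is that one can pin to a function of \emph{rank~$2$ and essential arity~$2$}, whose compressed function is $\bigl(\begin{smallmatrix}1&a\\b&-ab\end{smallmatrix}\bigr)$; but the two bundles may carry arbitrary types $\sigma,\tau$, so the true arity may exceed $2$.

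Your soft-$({=}_3)$ gadget is built around a genuine binary $g$, and once bundle types enter the picture the unary $[1,r]$ you extract by pinning is itself a rank-$1$ function carrying a bundle type, so the $(=_4)$ wiring has to be redone type by type. The paper handles this by keeping $\sigma,\tau$ explicit throughout and splitting on whether $a^8=b^8=1$: if not, it connects two copies of $h$ along the $y$-bundle and checks by a determinant that the resulting $g^2{\rm ++}$ is out of $\mathscr{P}$; if $a^8=b^8=1$, it merges one copy of $h$ with seven copies of $[1,b](\sigma)\otimes[1,a](\tau)$ via $(=_{10})$ on each bundle to force the essential function to $\bigl(\begin{smallmatrix}1&1\\1&-1\end{smallmatrix}\bigr)$. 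Your ternary construction $\tilde h(e_1,e_2,e_3)=g(e_1,e_2)\,g(e_1,e_3)\,r^{e_2+e_3}$ is a pleasant idea that would sidestep the $a^8$ case split when $g$ really is binary, but as written the argument does not cover the bundled situation that actually arises.
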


Part of the proof of Lemma \ref{lemma not product type} can be stated as the following arity reduction
lemma about $\mathscr{P}$.

Every function $g \in \mathscr{P}$ has a decomposition as a product of
functions over disjoint subsets of variables, where each factor has support
contained in a pair of antipodal points:
There exists a partition $X = \{x_1,\ldots, x_n\} = \bigcup_{j=1}^k X_j$, and functions
$g_j$ on $X_j$ such that
$g(X) = g(X_1, \ldots, X_k) = \prod_{j=1}^k g_j(X_j)$, and
for all $1 \le j \le k$,
${\rm supp}(g_j) \subseteq \{\alpha_j, \bar{\alpha_j}\}$ for some
$\alpha_j \in \{0, 1\}^{|X_j|}$.

\begin{lemma}\label{lemma arity reducing of not p}
If $f \not \in \mathscr{P}$, but $f^{2} \in \mathscr{P}$, then we can pin $f$
 to a rank 2 function $h$
such that its essential arity is 2
and its compressed function $\underline{h}$ is a binary function with
$\underline{h} \not \in \mathscr{P}$, and $\underline{h}^{2} \in \mathscr{P}$. Furthermore,
all 4 values of $\underline{h}$ are nonzero.
\end{lemma}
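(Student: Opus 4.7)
The plan is to use $f^2\in\mathscr{P}$ to expose the antipodal‑pair structure of $\operatorname{supp}(f)$, then to extract from the ``sign discrepancy'' between $f$ and $f^2$ a single $2\times 2$ obstruction to $f\in\mathscr{P}$, and finally to pin every other variable away.

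First I invoke the decomposition recalled just before the lemma: since $f^2\in\mathscr{P}$, the variables partition as $X=X_1\sqcup\cdots\sqcup X_k$ with $\operatorname{supp}(f)=\operatorname{supp}(f^2)=\prod_j\{\alpha_j,\overline{\alpha_j}\}$ and $f^2=\prod_j g_j$ for some $g_j$ supported on $\{\alpha_j,\overline{\alpha_j}\}$. Parametrize $\operatorname{supp}(f)$ by $s=(s_1,\dots,s_k)\in\{0,1\}^k$ (with $s_j=0$ picking $\alpha_j$ and $s_j=1$ picking $\overline{\alpha_j}$) and denote $f(s)=\phi(s)$. Choose any square root $u_j(s_j)\in\mathbb{C}^*$ of $g_j$ at $s_j$; then $(\phi(s)/\prod_j u_j(s_j))^2=1$, so there exists $\epsilon:\{0,1\}^k\to\{\pm 1\}$ with
\[
\phi(s)=\epsilon(s)\prod_{j=1}^{k}u_j(s_j).
\]
A short check (using that the $u_j$'s are nonzero and $\epsilon$ is $\pm 1$-valued) shows that $f\in\mathscr{P}$ iff $\epsilon$ factors as $\prod_j\epsilon_j(s_j)$.

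Since $f\notin\mathscr{P}$, $\epsilon$ does not factor. Writing $\epsilon(s)=(-1)^{p(s)}$ for the unique multilinear polynomial $p\in\mathbb{Z}_2[s_1,\dots,s_k]$, non‑factorization means $\deg p\ge 2$. Pick a monomial of $p$ of minimum degree $\ge 2$ and choose two of its variables to call $s_i,s_j$; set $s_m=1$ for the other indices appearing in that monomial and $s_m=0$ for every remaining index outside $\{i,j\}$; call this assignment $\tau$. By minimality, $p(s_i,s_j;\tau)$ still contains the term $s_is_j$, and therefore
\[
\epsilon(0,0;\tau)\,\epsilon(1,1;\tau)=-\,\epsilon(0,1;\tau)\,\epsilon(1,0;\tau).
\]

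Now I pin: for every variable $x$ in a group $X_m$ with $m\ne i,j$, attach $\Delta_0$ or $\Delta_1$ to $x$ so as to force $x$ to take the value of $\alpha_m$ (if $\tau_m=0$) or of $\overline{\alpha_m}$ (if $\tau_m=1$) at that coordinate, which is legal in $\#\mathrm{CSP}_2^c$ since each pinned variable then appears exactly twice. The resulting $h$ has support $\{\alpha_i,\overline{\alpha_i}\}\times\{\alpha_j,\overline{\alpha_j}\}$, so it has rank $2$ and essential arity $2$, and
\[
\underline{h}(s_i,s_j)=C\,u_i(s_i)\,u_j(s_j)\,\epsilon(s_i,s_j;\tau),\qquad C=\prod_{m\ne i,j}u_m(\tau_m)\ne 0,
\]
so all four values of $\underline{h}$ are nonzero. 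Then $\underline{h}^2=C^2\,u_i(s_i)^2\,u_j(s_j)^2$ factors, hence $\underline{h}^2\in\mathscr{P}$. On the other hand, any would‑be factorization of $\underline{h}$ would, after dividing by the already factored $C\,u_iu_j$, produce a factorization of $\epsilon(\cdot,\cdot;\tau)$, contradicting the choice of $(i,j,\tau)$; hence $\underline{h}\notin\mathscr{P}$.

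The main obstacle is the algebraic equivalence ``$f\in\mathscr{P}$ iff $\epsilon$ factors''; once that is in place, the existence of a $2$-dimensional obstruction reduces to the elementary fact that a $\mathbb{Z}_2$ multilinear polynomial fails to be affine precisely when some $2$‑variable restriction contains the product of those variables, and the pinning step then delivers $h$ with all four required properties directly.
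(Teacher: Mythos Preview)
Your approach matches the paper's: decompose $f^2$ into antipodal-pair factors, introduce a $\pm 1$ sign function $\epsilon$, write $\epsilon=(-1)^p$ for a multilinear polynomial over $\mathbb{Z}_2$, and isolate a monomial of degree $\ge 2$ by pinning all other indices. However, there is a gap at the very first step: you assert that $\operatorname{supp}(f)=\prod_j\{\alpha_j,\overline{\alpha_j}\}$, but the $\mathscr{P}$-decomposition recalled before the lemma only guarantees $\operatorname{supp}(g_j)\subseteq\{\alpha_j,\overline{\alpha_j}\}$. If some $g_j$ vanishes at one of its two antipodal points, your parametrization by $s\in\{0,1\}^k$ is not a bijection onto $\operatorname{supp}(f)$, and more concretely your choice $u_j(s_j)\in\mathbb{C}^*$ fails (it would have to be $0$), so the quotient defining $\epsilon$ is undefined.

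The paper fills this gap with a short preliminary reduction: if $g_1(\overline{\alpha_1})=0$ (so necessarily $g_1(\alpha_1)\ne 0$, else $f\equiv 0$), pin $X_1$ to $\alpha_1$ and set $f'=f^{X_1=\alpha_1}$. Then $(f')^2=g_1(\alpha_1)\prod_{j\ge 2}g_j\in\mathscr{P}$, and $f'\notin\mathscr{P}$ because otherwise $f=\chi_{X_1=\alpha_1}\cdot f'\in\mathscr{P}$. Iterating this, one may assume every $g_j$ has full support $\{\alpha_j,\overline{\alpha_j}\}$, after which your argument and the paper's coincide verbatim.
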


\begin{proof}
Since ${\rm supp}(f) = {\rm supp}(f^2)$, $f$ has affine support.
Let $g = f^{2} \in \mathscr{P}$, then there is a decomposition
$g(X) = \prod_{j=1}^k g_j(X_j)$, where each $g_j$ evaluates to zero except at
 possibly $\alpha_j$ and $\bar{\alpha_j}$.
A consequence of this decomposition is that
${\rm supp}(g)$ is a direct product of affine spaces $S_j$ with the
special property that, each $S_j$ has at most one
free variable, and if there is one free variable in $X_j$
then all variables $X_j$ are in the same bundle. So there are no
bundles that correspond to sums of two or more free variables.

Clearly  $g_j(\alpha_j)$ and $g_j(\bar{\alpha_j})$ cannot
both be 0, for otherwise $g$ is identically 0, and so is $f$.
Then  $f \in \mathscr{P}$, a contradiction.
Suppose for some $j$, one of  $g_j(\alpha_j)$ or $g_j(\bar{\alpha_j})$ is 0.
Without loss of generality suppose $j=1$,
$g_1(\bar{\alpha_1}) =0$ and $g_1(\alpha_1) \not =0$.
 Then
the function $f'(X_2, \ldots, X_k) = f^{X_1 = \alpha_1}(X_2, \ldots, X_k)$
has the property that $f' \not \in \mathscr{P}$ but $(f')^2 \in \mathscr{P}$.
The latter claim is obvious since $(f')^2 = (f^2)^{X_1 = \alpha_1}
= g_1(\alpha_1) \cdot \prod_{2 \le \ell \le k} g_\ell$.
For the former claim,
if $f' \in \mathscr{P}$, we can define $f_1(X_1) =1$
at $X_1 = \alpha_1$ and 0 otherwise,
then $f(X_1, \ldots, X_k) = f_1(X_1) \cdot f^{X_1 = \alpha_1} = f_1(X_1) \cdot f'$,
because $f$ is zero for all assignments unless $X_1 = \alpha_1$.
This shows that  $f \in \mathscr{P}$, a contradiction. Hence we can
continue the proof inductively on the function $f'$.

Therefore we can assume that
each $g_j$ has support ${\rm supp}(g_j) = \{\alpha_j, \bar{\alpha_j}\}$.

For each $1 \le j \le k$, define $f_j(X_j) = \sqrt{g_j(X_j)}$.
This is a pointwise definition by taking a square root value (of arbitrary sign).
Then $\left( \prod_{j=1}^k f_j \right)^2 = g = f^2$.

Now we define a sign function $S: \{0, 1\}^k \rightarrow \{+1, -1\}$.
For $(y_1, \ldots, y_k) \in \{0, 1\}^k$, let
\begin{equation}\label{eqn:defn-of-S}
S(y_1, \ldots, y_k) = \frac{f(X_1, \ldots, X_k)}{ \prod_{j=1}^k f_j(X_j)}
\end{equation}
where $X_j = \alpha_j$ if $y_j = 1$ and $X_j = \bar{\alpha_j}$ if $y_j = 0$.
Note that since ${\rm supp}(g_j) = \{\alpha_j, \bar{\alpha_j}\}$,
there is no division by zero and $S$ is well-defined.
Then
\begin{equation}\label{eqn:f-is-S-times-prod}
f(X_1, \ldots, X_k) = S(y_1(X_1), \ldots, y_k(X_k)) \cdot  \prod_{j=1}^k f_j(X_j)
\end{equation}
for all $X = (X_1, \ldots, X_k)$, where $y_j(\cdot)$ is
a function which is defined as $y_j(X_j) = 1$ if $X_j = \alpha_j$
and $y_j(X_j) = 0$ otherwise.
Note that in (\ref{eqn:f-is-S-times-prod})
both sides are zero unless for all $1 \le j \le k$,
$X_j = \alpha_j$ or $\bar{\alpha_j}$,
and in that case (\ref{eqn:f-is-S-times-prod})
follows from (\ref{eqn:defn-of-S}).

Because $S$ is a $\pm 1$ valued function, there is a
multilinear
polynomial $p(y_1, \ldots, y_k) \in \mathbb{Z}_2[y_1, \ldots, y_k]$
such that $S(y_1, \ldots, y_k) = (-1)^{p(y_1, \ldots, y_k)}$.
If $\deg(p) \le 1$ then $S$ is factorizable as functions on
each $y_j$ separately, and consequently $f \in \mathscr{P}$ by (\ref{eqn:f-is-S-times-prod}),
a contradiction.
Hence $\deg(p) \ge 2$.

Consider a monomial with minimum degree among all
monomials of degree at least 2. Without loss of generality let
it be $y_1 y_2 \ldots y_\ell$, where $\ell \ge 2$.
Now, for all $\ell < j \le k$,  pin $X_j$ to $\bar{\alpha_j}$,
which corresponds to setting  $y_j = 0$.
Any monomial in $p$ that has a factor $y_j$ for some $j >\ell$
is annihilated.
Any monomial that is a subproduct of $y_1 y_2 \ldots y_\ell$
(including  $y_1 y_2 \ldots y_\ell$ itself) is unaffected.
By the minimality of  $y_1 y_2 \ldots y_\ell$,
all other remaining monomials must have  degree at most 1.
Now, for all $2 < j \le \ell$,
 further pin $X_j$ to $\alpha_j$,
which corresponds to setting  $y_j = 1$,
we reduce $p$ to $c_0 + c_1y_1 + c_2 y_2 + y_1y_2$ for some
$c_0, c_1, c_2 \in \mathbb{Z}_2$.
The compressed function $\underline{h}$
of the corresponding rank 2 function $h$ obtained from $f$ by pinning
has matrix
 $\lambda \left ( \begin{matrix}1 & a\\
b & -ab \end{matrix} \right )$, with nonzero $\lambda, a, b$.
As noted earlier, the bundle named $x_1 + x_2$ is empty; this is
a property of ${\rm supp}(f) = {\rm supp}(g)$. Hence $h$ has
essential arity 2.
Clearly $\underline{h} \not \in \mathscr{P}$,
but $\underline{h}^2 \in \mathscr{P}$.
\end{proof}

\begin{proof}[Proof of Lemma \ref{lemma not product type}]
Starting from any $f \in \mathscr{F} - \mathscr{P}$,
if $f^2 \not \in \mathscr{P}$, then we can just realize $f^2{\rm ++}$ by Lemma \ref{lemma:f2++}.

Now we assume $f^2 \in \mathscr{P}$.
By Lemma \ref{lemma arity reducing of not p},
we can get a function $h$ of rank 2 and essential arity 2, such that
 its compressed function $\underline{h} \not \in \mathscr{P}$,
and $\underline{h}^{2} \in \mathscr{P}$.

Ignore a nonzero constant
we may assume $\underline{h} = \left ( \begin{matrix}1 & a\\
b & c \end{matrix} \right )$, with nonzero $a,b,c$.
From the pointwise square
function $\underline{h}^{2} = \left ( \begin{matrix}1 & a^2 \\
b^2 & c^2 \end{matrix} \right ) \in \mathscr{P}$, we
get $c^2 = ab$, and thus $c= - ab$ because $\underline{h} \not \in \mathscr{P}$.

There are two cases.
One case is that $a^8 \neq 1$ or $b^8 \neq 1$.
Without loss of generality,
assume $a^8 \neq 1$. We can construct a gadget by taking two copies of
$h$, and connect their respective variables within the bundle $x_2$.
 A variable with a $(+)$ (respectively, a $(-)$) label is
connected to the corresponding variable in the other copy of $h$ with the
same  $(+)$ (respectively, $(-)$) label. This produces a function with
two bundles
(corresponding to the bundles both named $x_1$ in
each copy of $h$). This function is denoted as $g$.
The compressed function of $g$ is
$\left ( \begin{matrix}1 & a\\
b & -ab \end{matrix} \right )\left ( \begin{matrix}1 & b\\
a & -ab \end{matrix} \right )=\left ( \begin{matrix}1+a^2 & b(1-a^2)\\
b(1-a^2) & b^2(1+a^2) \end{matrix} \right )$.
By Lemma~\ref{lemma:f2++} we
have $\#{\rm CSP}_2^c(\{g^2 {\rm ++}\} \cup \mathscr{F}) \leq_{\rm T}  \#{\rm CSP}_2^c(\mathscr{F})$.
The compressed function of $g^2 {\rm ++}$ is
$\left ( \begin{matrix}(1+a^2)^2 & b^2(1-a^2)^2\\
b^2(1-a^2)^2 & b^4(1+a^2)^2 \end{matrix} \right )$.
By checking its determinant we conclude that if $a^8 \neq 1$
then this binary function does not belong to $\mathscr{P}$.
Hence $g^2{\rm ++}  \in
\overline{\mathscr{P}}{\rm ++}$.

The other case is that $a^8=b^8=1$. Suppose $h$ is $h(x(\sigma),y(\tau))$, where $\sigma$  (resp. $\tau$) is the type of bundle $x$ (resp. $y$).
 We pin the input bundle $x$ to get $[1,a](\tau)$
 (note that the bundle $x_1 + x_2$ is empty.)
We pin the input bundle $y$ to get $[1,b](\sigma)$.
Put them together we can get a function $s=s(x(\sigma),y(\tau))$,
with essential and compressed function $\tilde{s}=\underline{s}=[1,b] \otimes [1,a] = \left ( \begin{matrix}1 & a\\ b & ab \end{matrix} \right )$.

We put one copy of $h$ and 7
copies of $s$ together, to  get a function whose inputs are $x_j(\sigma),y_j(\tau)$, $j=1,2,\ldots,8$.
For each element in the type $\sigma$, say $+$, we connect the 8 variables $x_j$, $j=1,2,\ldots,8$, by an {\sc Equality}  $(=_{10})$ of arity 10. The 8 variables become 2 variables, and the eight $x_j$ bundles are merged into a bundle $x(\sigma \cup \sigma)$. We handle $y_j(\tau)$, $j=1,2,\ldots,8$, similarly.
At last we get a function in inputs $(x(\sigma \cup \sigma), y(\tau \cup \tau))$. It can be expressed as $t{\rm ++}$, where neither
$t(x(\sigma), y(\tau))$ nor its compressed function $\left ( \begin{matrix}1 & a^8 \\ b^8 & -a^8b^8 \end{matrix} \right )
=\left ( \begin{matrix}1 & 1 \\ 1 & -1 \end{matrix} \right )$ is in $\mathscr{P}$.
\end{proof}

%
%
%

%
%
%

\subsection{$\overline{\mathscr{A}}$}
\begin{lemma}\label{lemma not A pin to 321}
Let $f \not \in \mathscr{A}$. In $\#{\rm CSP}_2^c(\{f\})$, we can realize a
 function in one of the following sets:
\begin{itemize}
  \item $\overline{\mathscr{A}}{\rm ++}$,
  \item $F_1^\alpha$, $F_1^\alpha(+-)$,
  \item $F_3^\alpha(+)$, $F_3^\alpha(+-)$,
  \item $F_7^\alpha(+)$, $F_7^\alpha(+-)$.
\end{itemize}
\end{lemma}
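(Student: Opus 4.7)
The plan is to split on whether $f^2 \in \mathscr{A}$. If $f^2 \notin \mathscr{A}$, Lemma~\ref{lemma:f2++} realizes $f^2{\rm ++} \in \overline{\mathscr{A}}\,{\rm ++}$ and we are done, so assume from now on that $f^2 \in \mathscr{A}$. Then $f$ has affine support, and by the canonical form (\ref{generic-form-f-sq-in-A-detailed}) we may write
\[ f = \lambda \cdot \chi_{{\rm supp}(f)} \cdot \alpha^{L(x)+2Q(x)+4H(x)}, \]
with $L$ linear, $Q$ degree-$2$ multilinear, and $H$ degree-$\ge 3$ multilinear in the free variables $x_1,\ldots,x_r$. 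Unpacking the definition of $\mathscr{A}$ shows that $f \in \mathscr{A}$ if and only if every linear coefficient $c_j$ is even, every quadratic coefficient $c_{jk}$ is divisible by $4$, and $H \equiv 0$. Hence $f \notin \mathscr{A}$ is witnessed by a multilinear monomial $\prod_{j \in T} x_j$ of minimum degree $d = |T| \in \{1,2,3\}$ whose coefficient has the forbidden residue.

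I would next isolate this bad monomial by pinning. For each free variable $x_j$ with $j \notin T$, attach $\Delta_0$ to one variable in each bundle whose name involves $x_j$ (using copies of $(=_4)$ inside a bundle when more than one variable must be pinned consistently). The result is a function $f'$ of rank at most $d$ with $(f')^2 \in \mathscr{A}$ whose exponent polynomial still carries the minimum-degree bad monomial, so $f' \notin \mathscr{A}$. If the number of non-empty bundles of $f'$ is strictly less than $2^d - 1$, then some pair of free variables never appears summed in any bundle name, which means the essential function decouples into a tensor product of lower-rank factors. Since $\mathscr{A}$ is closed under tensor products, at least one factor lies outside $\mathscr{A}$; iterating, we arrive at a function $g \notin \mathscr{A}$ with $g^2 \in \mathscr{A}$, rank $r_g \in \{1,2,3\}$, and exactly $2^{r_g}-1$ non-empty bundles.

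Now I would invoke the regularization lemmas of Subsection~\ref{subsec:RegularizationLemmas} with the tractable family set to $\mathscr{F} \leftarrow \mathscr{A}$; the hypotheses (1)--(6) there are satisfied by $\mathscr{A}$. Lemmas~\ref{lemma r2 useful even odd}--\ref{lemma r3 useful odd to +} (together with the trivial rank-$1$ case handled by collation) allow us to regularize $g$ so that every bundle has the same type, either all $(+)$ or all $(+-)$, while keeping $g \notin \mathscr{A}$. Since $g^2 \in \mathscr{A}$ passes to $\widetilde{g}^{\,2} \in \mathscr{A}$, and since the bad monomial together with the maximal bundle count $2^{r_g}-1$ forces the linear part of the essential $\alpha$-exponent $\widetilde{P}$ to satisfy $\widetilde{P}(y) - \sum_b y_b \equiv 0 \pmod 2$ in every coefficient, the transformed function $M_{\alpha^{-1}}^{\otimes (2^{r_g}-1)}\widetilde{g}$ lies in $\mathscr{A}$, i.e.\ $\widetilde{g} \in \mathscr{A}^{\alpha}$. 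This places $g$ into exactly one of $F_1^\alpha$, $F_1^\alpha(+-)$, $F_3^\alpha(+)$, $F_3^\alpha(+-)$, $F_7^\alpha(+)$, $F_7^\alpha(+-)$.

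The hardest step will be the final verification: tracking the $\alpha$-exponent through every construction (the initial pinning, the bundle-merging via copies of $(=_4)$, and the auxiliary unary gadgets used inside the regularization lemmas) to confirm that the minimum-degree bad monomial is never cancelled, and that the essential exponent really becomes an $\mathscr{A}$-exponent after the uniform $M_{\alpha^{-1}}$-shift. Both facts reduce to a careful accounting against the coefficient relations of Theorem~\ref{thm: Local affine}, where the bundle count $2^{r_g}-1$ is precisely what makes the linear shift $\sum_b y_b$ restrict correctly to the essential support.
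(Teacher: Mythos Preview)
Your overall architecture matches the paper's (split on $f^2\in\mathscr{A}$, use the canonical form, pin to isolate a bad monomial, regularize the bundle types, verify membership in $\mathscr{A}^\alpha$), but several steps contain genuine gaps.

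First, the claim that the minimum-degree bad monomial has $d\in\{1,2,3\}$ is unjustified: $H$ may have its lowest odd-coefficient monomial at degree $\ge 4$. The paper handles this by pinning the free variables outside the monomial to $0$ \emph{and} all but three of the variables inside the monomial to $1$; your ``pin the complement to $0$'' alone does not reduce the rank below $|T|$.

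Second, and more seriously, the decoupling step is wrong. Having fewer than $2^d-1$ non-empty bundles does not force the essential function to be a tensor product. For instance with $d=3$ and only the bundles $x_1,x_2,x_3$ non-empty, the compressed function $\alpha^{4x_1x_2x_3}$ is not decomposable. The paper does not argue this way; instead, the regularization lemmas themselves (e.g.\ case~3 of Lemma~\ref{lemma r3 useful even odd}) actively \emph{build} the missing bundles by merging in auxiliary rank-$2$ gadgets.

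Third, the regularization lemmas carry the hypothesis ``pinning any variable of $f$ produces a function in $\mathscr{F}$''. You never establish this. The paper sets it up recursively: if some pinning of the rank-$2$ (resp.\ rank-$3$) function still lies outside $\mathscr{A}$, one falls back to the rank-$1$ (resp.\ rank-$2$) case already treated; only after exhausting that fallback may one invoke the lemmas. Relatedly, those lemmas do \emph{not} always output ``all $(+)$'' or ``all $(+-)$'': the all-consistent outcomes (the $(cc)$ cases, including the ``exactly $2$'' or ``exactly $3$ non-empty consistent bundles'' branches) are separate and land in $\overline{\mathscr{A}}\,{\rm ++}$, not in any $F^\alpha$ class. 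Your write-up suppresses these branches.

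Finally, the assertion that $\widetilde{g}\in\mathscr{A}^\alpha$ because ``$\widetilde{P}(y)-\sum_b y_b\equiv 0\pmod 2$'' is not a proof; the linear shift by $\sum_b y_b$ is only part of the effect of $M_\alpha^{\otimes(2^{r_g}-1)}$, and the quadratic and cubic coefficients must be checked separately (for $r_g=3$ one needs $(x_1+x_2+x_3)^4\bmod 8$, not a cube). The paper carries out these computations explicitly in each rank, after first using the recursive fallback to force $c_j\equiv 0$ and $c_{jk}\equiv 0$ so that only the top coefficient remains odd. The appeal to Theorem~\ref{thm: Local affine} is misplaced: that theorem characterizes $\mathscr{L}$, not $\mathscr{A}^\alpha$.
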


\begin{proof}
If $f^2 \not \in \mathscr{A}$, we can realize $f^2{\rm ++}$, which is a function in $\overline{\mathscr{A}}{\rm ++}$.

Now suppose $f^2 \in \mathscr{A}$.
Ignoring a nonzero constant factor, we can write $f$ more explicitly as
\begin{equation}\label{generaic-form-f-sq-in-A-detailed}
f= {\rm supp}(f) \cdot  \alpha^{L(x) + 2 Q(x) + 4 H(x)}
\end{equation}
where  $L(x) = \sum_{j=1}^r c_j x_j$ is a linear function,
$Q(x) = \sum_{1 \le j < k \le r} c_{jk} x_j x_k$
 is a quadratic (multilinear)  polynomial,
and $H(x) = \sum_{1 \le j < k < \ell \le r} c_{jk\ell}  x_j x_k x_{\ell}
+ \cdots$
is a  (multilinear)  polynomial with all monomials  of
degree at least 3, where $r$ is its rank.

By the uniqueness (in the sense that all
coefficients $c_j, c_{jk}$ and $c_{jk\ell}$ are integers mod 2)
 of this polynomial expression in the
exponent of $\alpha$, any $f$ defined by the expression in
(\ref{generaic-form-f-sq-in-A-detailed}) is in $\mathscr{A}$
 iff all the coefficients of the 3 polynomials $L,Q$ and $H$ are even.
Because $f \not \in \mathscr{A}$, there is an odd coefficient.

If one coefficient of $L$ is odd, say $c_1$, we pin all free variables to 0 except $x_1$, to get a rank 1 function $\ell$ not in $\mathscr{A}$. It is not 
hard to see, after a collation if necessary,
 $\ell \in F_1^\alpha \cup F_1^\alpha(+-) \cup F_1^\alpha(cc)$, 
while  $F_1^\alpha(cc) \subseteq \overline{\mathscr{A}}{\rm ++}$.

If one coefficient of $Q$ is odd, say $c_{12}$, we pin all free variables to 0 except $x_1,x_2$, to get a rank 2 function $q_1$ not in $\mathscr{A}$.
If we can pin $q_1$ to get a rank 1 function not in $\mathscr{A}$, we fall into the previous case. Hence, we can assume the conditions of Lemma \ref{lemma r2 useful even odd}, \ref{lemma r2 useful  consistent opposite} and \ref{lemma r2 useful odd to +} are satisfied.
By Lemma \ref{lemma r2 useful even odd}, from $q_1$ we can construct a function $q_2$, all bundles of $q_2$ are even or all bundles of $q_2$ are  odd.

If all bundles of $q_2$ are odd, by Lemma \ref{lemma r2 useful odd to +}, we 
go on to get a function $q(+) \not \in \mathscr{A}$.
If the linear terms $L_{q}$ in the
corresponding polynomial for $q$ of $q(+)$ contains an odd coefficient, we can pin to get a rank 1 function and fall into lower rank case. Hence, we assume all coefficients of $L_q$ are even.
Suppose the compressed function $\underline{q}$ of $q(+)$ is $\alpha^{c_1 x_1 +c_2 x_2 + 2c_{12} x_1x_2}$, Then $c_1 \equiv c_2 \equiv 0$ and $c_{12} \equiv 1$.
If we apply a $M_\alpha$ transformation to $q(+)$, the compressed function becomes
$\alpha ^{x_1+x_2+(x_1+x_2)^3}  \cdot 
\alpha^{c_1 x_1 +c_2 x_2 + 2c_{12} x_1x_2}=
 \alpha^{(c_1+2) x_1 +(c_2+2) x_2 + 2(c_{12}-1) x_1x_2}$, which is a function in $\mathscr{A}$. Note that,
for $x_i = 0, 1 \in \mathbb{Z}$, the value $x_1 + x_2 \bmod 2$ 
cannot be calculated as a linear term on the exponent of $\alpha$,
but can be calculated mod 8, and thus
we used the expression $(x_1 + x_2)^3$, since
$x_1 + x_2 \equiv 0, 1 \bmod 2$ iff $(x_1 + x_2)^3 \equiv 0, 1 \bmod 8$.
Hence, $q(+)$ belongs to the set $F_3^{\alpha}$.

If all bundles of $q_2$ are even, we apply Lemma \ref{lemma r2 useful  consistent opposite} to make all bundles either consistent or opposite.
If all bundles are opposite, by the same analysis of the compressed function,  we get a $q(+-) \in F_3^{\alpha}(+-)$.
If all bundles are consistent,  by the same analysis of the compressed function, we get a $q(cc)$ of essential arity 2 or 3, with $q \not \in \mathscr{A}$. Hence, $q(cc) \in \overline{\mathscr{A}}{\rm ++}$.

The last case is that there is an odd coefficient in $H$
in (\ref{generaic-form-f-sq-in-A-detailed}).
 Suppose the monomial $M$ has the minimum degree, among all
 monomials in $H$ with  odd coefficient. We pin all free variables which are not in $M$ to 0, and pin the variables in $M$ to 1, except 3 of them, to get a rank 3 function $h_1$.

Similarly, by Lemma \ref{lemma r3 useful even odd}, \ref{lemma r3 useful odd to +} and \ref{lemma r3 useful opposite consistent}, from $h_1$,
either we get functions not in $\mathscr{A}$ of smaller rank and fall into the solved two cases, or we get one of the following rank 3 functions:
 an essential arity 7 function $h(+)$, or an essential arity 7 function $h(+-)$, or an essential arity 3 function $h(cc)$, or  an essential arity 7 function $h(cc)$.
For all cases the analysis of the compressed function
$\underline{h}=\alpha^{c_1 x_1 +c_2 x_2 + c_3 x_3 + 2c_{12} x_1x_2+ 2c_{13}x_1x_3 + 2 c_{23} x_2 x_3 + 4 c_{123}x_1x_2x_3}$ is the same.
(The fact that $\underline{h}$ has such an expression,
namely the coefficients of degree 2 terms are all even and
the coefficient of degree 3 terms is divisible by 4 ultimately follows
from the expression (\ref{generaic-form-f-sq-in-A-detailed}) for $f$.)
 Similar to the above proof, we can assume all coefficients in $\underline{h}$ are even, except that $c_{123}$ is odd.
$\underline{h} \not \in \mathscr{A}$, so in the last two cases, we get a function in $\overline{\mathscr{A}}{\rm ++}$.
For the first two cases, we only need to prove a $M_{\alpha }$ transformation applied to the essential function $h$, will change $h$ to a function in $F_7^{\mathscr{A}}$.
The compressed function of $M_{\alpha }^{\otimes 7} \cdot \tilde{h}$ is
\begin{eqnarray*}
&&\alpha^{x_1+x_2+x_3+(x_1+x_2)^3+(x_1+x_3)^3+(x_2+x_3)^3+(x_1+x_2+x_3)^4} \cdot \underline{h}\\
&=&\alpha^{(c_1+4) x_1 +(c_2+4) x_2 + (c_3+4) x_3 + 2 (c_{12}+2) x_1x_2+ 2(c_{13}+2) x_1x_3 + 2 (c_{23}+2) x_2 x_3 + 4 (c_{123}+1)x_1x_2x_3},
\end{eqnarray*}
which is a function in $\mathscr{A}$, since the exponent
has the form $L' + 2Q' + 4H'$,  and all coefficients of $L', Q', H'$
are  even.
(As in these two case, the essential arity 7 function is either $h(+)$
or $h(+-)$, the holographic transformation is $M_{\alpha }^{\otimes 7}$
on the essential function $\tilde{h}$.)
\end{proof}
\noindent
{\bf Remark:}
We remark that this form for 
the  compressed function of $M_{\alpha }^{\otimes 7} \cdot \tilde{h}$
can be derived as follows: The values of $x_1, x_2, x_3$ are all
0-1 integers, and the transformation produces a factor $\alpha$ 
for each variable iff
the variable takes value 1.
For a variable such as $x_1+x_2$ or $x_1+x_2+x_3$, one has to be careful
to remember that such a linear expression is in the sense of $\mathbb{Z}_2$;
it is illegitimate to simply substitute the  linear expression
on the exponent of $\alpha$, which can only be computed
as an integer mod 8. For an expression such as $x_1+x_2$
the integer value can be only 0, 1 or 2, in which case
$(x_1+x_2)^3 \bmod 8$ keeps the meaning of
the 0-1 value of $x_1+x_2 \bmod 2$.
For $x_1+x_2+x_3$, the value could be
0, 1, 2 or 3, then
we must use the expression
$(x_1+x_2+x_3)^4 \bmod 8$ to keep the
 meaning of
the 0-1 value of $x_1+x_2 + x_3 \bmod 2$.
One can calculate the end result such as the
coefficients of $x_1x_2$ or $x_1x_2x_3$
by noticing that the modifier expression is
symmetric in $x_1, x_2, x_3$
and, e.g., the modifier coefficient for $x_1x_2x_3$
is ${4 \choose 2} 3! = 36 \equiv 4 \bmod 8$.
%

\subsection{$\overline{\mathscr{A}^\alpha}$}

By definition,  a function $f \not \in \mathscr{A}^\alpha$
iff $M_{\alpha^{-1}} ^ {\otimes n} \cdot f \not \in \mathscr{A}$,
 where $n = n(f)$ is the arity of $f$.
 Let $f'=M_{\alpha^{-1}} ^ {\otimes  n} \cdot f$.
To derive the corresponding results for
$\overline{\mathscr{A}^\alpha}$ in Lemma~\ref{lemma not A-alpha  pin to 321}
 we use $f'$ to repeat the proof of Lemma~\ref{lemma not A pin to 321}.
However we should be careful in justifying
the steps in gadget constructions. 

One primitive of gadget construction
 is pinning. Because $M_{\alpha}$ is diagonal,
  $f'^{x_j=\epsilon} \in \mathscr{A}$ iff $f^{x_j=\epsilon} 
\in \mathscr{A}^\alpha$, for $\epsilon =0, 1$.
Thus pinning to $f'$ can be replaced by pinning directly to $f$.

The other primitive of gadget construction
 is merging two bundles, where the basic operation is 
to connect by $(=_4)$
 two inputs (one is from say $f'$ and the other is from some $q'$
 which may be obtained from $f'$ by pinning).
In this gadget, $(=_4)$ is separated by two $M_{\alpha^{-1}}$ 
from touching $f$ and $q$ directly. But if we consider
there is a $(=_2)$ on the edge, then the new function
on the edge is a symmetric binary function with matrix
 $(M_{\alpha^{-1}})^{\tt T} M_{\alpha^{-1}}$ which represents
a function in $\mathscr{A}$.
 Therefore we can directly argue
whether the gadget using transformed function $f'$
results in a function in $\mathscr{A}^\alpha$
iff the same  gadget using the untransformed function $f$
results in a function in $\mathscr{A}$.

We conclude that using $f'$  to repeat the proof of Lemma \ref{lemma not A pin to 321}, we get 
some function through some gadgets composed of $f'$, $(=_4)$ 
and pinning functions. 
In the end
we get some equivalent gadgets which are new gadgets composed of 
$f$, $(=_4)$ and pinning functions, which are transfored versions
under  $M_{\alpha^{-1}}$.
To get the form of these functions of the new gadgets, 
we just do $M_{\alpha}$ transformations to the outcomes of Lemma \ref{lemma not A pin to 321}.


\begin{lemma}\label{lemma not A-alpha  pin to 321}
Let $f \not \in \mathscr{A}^\alpha$. In $\#{\rm CSP}_2^c(\{f\})$, we can realize a function in one of the following sets:
\begin{itemize}
  \item $\overline{\mathscr{A}}{\rm ++}$,
  \item $F_1^\mathscr{A}$, $F_1^\alpha(+-)$,
  \item $F_3^\mathscr{A}(+)$, $F_3^\alpha(+-)$,
  \item $F_7^\mathscr{A}(+)$, $F_7^\alpha(+-)$.
\end{itemize}
\end{lemma}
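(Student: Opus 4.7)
The plan is to reduce this directly to Lemma \ref{lemma not A pin to 321} via the holographic transformation. Setting $f' := M_{\alpha^{-1}}^{\otimes n} f$, the assumption $f\not\in\mathscr{A}^\alpha$ gives $f'\not\in\mathscr{A}$. Lemma \ref{lemma not A pin to 321} applied to $f'$ produces some gadget $G$ built from $f'$, $(=_4)$, and $\Delta_0,\Delta_1$ realizing a function $g'$ in one of its seven output classes.

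The first key step, already sketched informally in the paragraphs preceding the statement, is to argue that running the \emph{same} gadget $G$ on $f$ in place of $f'$ yields a function $g''$ which equals $M_\alpha^{\otimes m} g'$ up to a nonzero scalar, where $m$ is the arity of $g'$. Pinning commutes with the transformation up to scalars since $M_{\alpha^{-1}}$ is diagonal. For each $(=_4)$-merger, replacing $f'$ by $f$ introduces an $\alpha^{x_i}$ factor on each incident edge; two such factors meeting at a merged pair contribute $\alpha^{2x_i}$ on the support (where the two variables agree), which then distributes as $\alpha^{x_i}$ to each of the two dangling inputs of $(=_4)$, while the remaining dangling edges of the $f$-copies pick up their own $\alpha^{x_i}$ directly. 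Summing over all output edges yields exactly the claimed $M_\alpha^{\otimes m}$ factor. Equivalently, one may absorb the transformation into the binary function $M_{\alpha^{-1}}^{\tt T} M_{\alpha^{-1}}\in\mathscr{A}$ placed on each internal edge, as the paper suggests.

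The second key step is to trace each of the seven output classes of Lemma \ref{lemma not A pin to 321} through $M_\alpha^{\otimes m}$. For $\overline{\mathscr{A}}{\rm ++}$ with $g'=h'{\rm ++}$ and $h'\not\in\mathscr{A}$, the two equal variables per $({\rm ++})$ bundle combine to $\alpha^{2x_i}=i^{x_i}$, yielding $(M_i^{\otimes n} h'){\rm ++}$; since $\mathscr{A}$ is closed under $M_i$-transformation, this remains in $\overline{\mathscr{A}}{\rm ++}$. For each $F_k^\alpha(+-)$, the opposite pair $x_i,\bar x_i$ per bundle contributes $\alpha^{x_i+\bar x_i}=\alpha$, a scalar per bundle, so $g''$ remains in $F_k^\alpha(+-)$. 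For each $F_k^\alpha(+)$, writing the essential function as $\tilde g' = M_\alpha^{\otimes k}\tilde v$ with $\tilde v\in\mathscr{A}$, a further $M_\alpha^{\otimes k}$ composes to $M_i^{\otimes k}\tilde v\in\mathscr{A}$, placing $g''$ in $F_k^\mathscr{A}(+)$ for $k=1,3,7$.

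The main obstacle is the rigorous verification of the gadget-correspondence claim in the second paragraph: once it is in hand, the class-by-class tracing of the third paragraph is essentially mechanical, determined entirely by how $M_\alpha^{\otimes 2}$ acts on a $({\rm ++})$ bundle (giving $M_i$), a $(+-)$ bundle (giving a global scalar), and on a single-variable $(+)$ bundle (composing with the existing $M_\alpha^{\otimes k}$ factor of $\mathscr{A}^\alpha$ to land in $\mathscr{A}$). The sign and support bookkeeping across internal $(=_4)$-mergers—ensuring no stray $\alpha$-factors remain internally—is the subtle part that requires care.
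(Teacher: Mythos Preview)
Your approach is the same as the paper's: pull Lemma~\ref{lemma not A pin to 321} through the holographic transformation $M_\alpha$, justify the gadget correspondence, and trace the seven output classes. The class-by-class tracing in your third paragraph matches the paper's post-statement discussion essentially verbatim.

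One caveat: your claim that $g''$ equals $M_\alpha^{\otimes m} g'$ ``up to a nonzero scalar'' is not quite right in general. Your redistribution argument works cleanly for the 2-in-2-out use of $(=_4)$, but the constructions in Lemma~\ref{lemma not A pin to 321} (via the regularization lemmas and the bundle-type normalizations) also use \emph{collation}---connecting two same-sign variables in one bundle by $(=_2)$. There the two $\alpha^{x}$ factors combine to $i^{x}$ on an internal variable with no dangling edges to absorb it; on the support $x$ is an affine function of the external variables, so the residual is a genuine non-constant $\mathscr{A}$ factor, not a scalar. Your alternative framing via $(M_{\alpha^{-1}})^{\tt T} M_{\alpha^{-1}}\in\mathscr{A}$ on internal edges is the correct and robust version (and is exactly what the paper uses): it gives that $g''$ and $M_\alpha^{\otimes m}g'$ differ by a function in $\mathscr{A}$, which is all the class-tracing needs, since each target class $\overline{\mathscr{A}}{\rm ++}$, $F_k^\alpha(+-)$, $F_k^{\mathscr{A}}(+)$ is stable under multiplication by an $\mathscr{A}$ function compatible with the bundle structure. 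So drop the ``up to scalar'' claim and lean on the $\mathscr{A}$-modifier formulation throughout; the two framings you present are not in fact equivalent.
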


The the expressions in Lemma~\ref{lemma not A-alpha  pin to 321}
are those expressions in Lemma~\ref{lemma not A pin to 321}
under the transformation by $M_\alpha$. 
For $\overline{\mathscr{A}}{\rm ++}$, the two copies of $M_\alpha$
produce a modification by a factor 1 or $\alpha^2 = i$
 (if the variable for a bundle name is 0 or 1, which
appears twice as equal variables in the bundle by the $(++)$ type).
The unary function $[1,i]$ is in $\mathscr{A}$.
Therefore this modification does not affect the (non)membership 
for its essential function in
$\mathscr{A}$.
As well,
the expressions 
$F_1^\alpha(+-)$,  $F_3^\alpha(+-)$ and $F_7^\alpha(+-)$
from Lemma~\ref{lemma not A pin to 321} are not changed 
to the corrresponding expressions in Lemma~\ref{lemma not A-alpha  pin to 321}
because for the $(+-)$ type the aggregate modification on the two variables
in a bundle is always $\alpha$, which becomes a constant factor.
The expressions 
$F_1^\alpha$,  $F_3^\alpha(+)$ and  $F_7^\alpha(+)$
do get changed to
$F_1^\mathscr{A}$, $F_3^\mathscr{A}(+)$ and $F_7^\mathscr{A}(+)$
respectively.

\subsection{$\overline{\mathscr{L}}$}

\begin{lemma} \label{lemma L 1}
If we have a rank 1 function $f \not \in \mathscr{L}$, in $\#{\rm CSP}_2^c(\{f\})$ we can realize a function in one of the following sets: $\overline{\mathscr{A}}{\rm ++}$, $F_1^\alpha$, $F_1^\mathscr{A}$, $F_1^\mathscr{A}(+-)$.
\end{lemma}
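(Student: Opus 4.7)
The plan is to exploit how drastically the characterization in Theorem~\ref{thm: Local affine} collapses at rank $1$: with a single free variable $x_1$, both the quadratic part $Q$ and the higher part $H$ in (\ref{generic-form-f-sq-in-A-detailed}) vanish identically, and the only non-trivial instance of each of the homogeneous and inhomogeneous systems is $\sum_{i} a_{i1} \equiv 0$ and $\sum_{i} a_{i1} b_i \equiv c_1 \pmod 2$. After absorbing any variable that is identically constant on ${\rm supp}(f)$ (which simply factors out through $\Delta_0$ or $\Delta_1$), every variable of $f$ lies in the unique bundle named $x_1$, so $a_{i1}=1$ for all $i$. The criterion $f\in\mathscr{L}$ then reduces to three conditions: (i) $f^2 \in \mathscr{A}$; (ii) the arity $n$ is even; (iii) $c_1 \equiv q \pmod 2$, where $q$ is the number of $(-)$ labels in the bundle type.

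First I would dispose of the case $f^2 \notin \mathscr{A}$ by invoking Lemma~\ref{lemma:f2++} to realize $f^2 {\rm ++} \in \overline{\mathscr{A}}{\rm ++}$. From now on assume $f^2 \in \mathscr{A}$, so the compressed function takes the form $\underline{f}(x_1) = \lambda \cdot \alpha^{c_1 x_1}$ for some $\lambda \neq 0$ and $c_1 \in \mathbb{Z}_8$, and the hypothesis $f \notin \mathscr{L}$ forces either $n$ odd, or $n$ even with $c_1 \not\equiv q \pmod 2$.

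Next I would apply the collation operation, which is available because $(=_4)$ is in $\#{\rm CSP}_2^c$ and, by property~(6) of Subsection~\ref{subsec:RegularizationLemmas}, does not alter $\mathscr{L}$-membership. Collation preserves the parities of $p$ (the number of $(+)$'s) and $q$ while shrinking the single bundle to a minimal non-empty representative, leaving one of three shapes: $(+)$ when $n$ is odd (using the free-variable renaming $x_1 \mapsto \bar{x_1}$ to convert a $(-)$ into a $(+)$ if necessary), $(+-)$ when both $p$ and $q$ are odd, or $(++)$ when both $p$ and $q$ are even (the $(--)$ alternative is equivalent under the same renaming).

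Finally the parity of $c_1$ decides the landing set. In the $(+)$ case, $f=\underline{f}$ is unary and lies in $\mathscr{A}$ exactly when $c_1$ is even, placing $f$ in $F_1^{\mathscr{A}}$ or in $F_1^{\alpha}$ accordingly. In the $(+-)$ case, $q$ is odd and the failed congruence forces $c_1$ even, so $\underline{f}\in\mathscr{A}$ and $f \in F_1^{\mathscr{A}}(+-)$. In the $(++)$ case, $q$ is even and the failed congruence forces $c_1$ odd, so $\underline{f}\notin\mathscr{A}$ and $f = \underline{f}\, {\rm ++} \in \overline{\mathscr{A}}{\rm ++}$. The main care needed is in the faithful specialization of Theorem~\ref{thm: Local affine} at rank $1$ and in checking that collation does not destroy whichever $\mathscr{L}$-defect we are exploiting; both are routine given the closure properties recorded in Subsection~\ref{subsec:RegularizationLemmas}.
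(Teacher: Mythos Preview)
Your proposal is correct and follows essentially the same route as the paper's proof: both first dispose of $f^2\notin\mathscr{A}$ via Lemma~\ref{lemma:f2++}, then specialize the rank-$1$ instance of the $\mathscr{L}$ characterization (Theorem~\ref{thm: Local affine}), and finally do a case split on whether the unique bundle is odd, consistent, or opposite, using collation to reach a minimal bundle type and reading off the parity of $c_1$. Your write-up is arguably a bit more careful than the paper's in two minor respects---you explicitly pin away any constant variables before asserting $a_{i1}=1$ for all $i$, and you note the free-variable flip that converts a $(-)$ or $(--)$ minimal type into $(+)$ or $(++)$---but these are details the paper leaves implicit rather than genuine differences in approach.
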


\begin{proof}
If $f^2 \not \in \mathscr{A}$, we can realize $f^2{\rm ++}$, which is a function in $\overline{\mathscr{A}}{\rm ++}$.
Now assume $f^2 \in \mathscr{A}$.

We discuss the cases according to the type of the
unique bundle of $f$ being odd, or consistent, or opposite.

Suppose the unique bundle of $f$ is odd, (by equation (\ref{eqn-homogeneous-deg-1}), $f \not \in \mathscr{L}$ no matter what is the integer coefficient $c_1$), we change this bundle type to a singleton by the collation operation, to get a $[1, \alpha^{c_1}] \in F_1^\alpha \cup F_1^\mathscr{A}$.

Suppose the unique bundle is consistent. By being of rank 1,
the bundle is named for a free variable and thus non-empty.
Because  $f \not \in \mathscr{L}$, and equation (\ref{eqn-homogeneous-deg-1})
is satisfied, it follows that 
 equation (\ref{eqn-inhomogeneous-deg-1}) must have been violated.
Being consistent, the left hand side
of equation (\ref{eqn-inhomogeneous-deg-1})
 is $0 \bmod 2$. 
Hence, $c_1 \equiv 1$. In this case, after some collation operations we get $g=[1, \alpha^{c_1}](cc)$ and $c_1 \equiv 1$.
Because $[1, \alpha^{c_1}] \not \in \mathscr{A}$, $g \in \overline{\mathscr{A}}{\rm ++}$.

Suppose the unique bundle is opposite.
In particular the bundle is even and
equation (\ref{eqn-homogeneous-deg-1})
is satisfied. 
Because  $f \not \in \mathscr{L}$, 
equation (\ref{eqn-inhomogeneous-deg-1}) must have been violated.
Being opposite, the left hand side is $1$.
Hence,  $c_1 \equiv 0$. In this case, we get $[1, \alpha^{c_1}](+-)$.
Since $c_1 \equiv 0$, we have  $[1, \alpha^{c_1}](+-) \in F_1^\mathscr{A}(+-)$.
\end{proof}

\begin{lemma} \label{lemma L 2}
If we have a rank 2 function $f \not \in \mathscr{L}$,
in $\#{\rm CSP}_2^c(\{f\})$, either we can pin to get a rank 1 function not in $\mathscr{L}$, or
we can realize a function in  the  sets
$\overline{\mathscr{A}}{\rm ++}$ or $F_3 ^ \mathscr{A} (+-)$.
\end{lemma}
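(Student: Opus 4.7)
The plan is to combine the rank-1 analysis of Lemma \ref{lemma L 1} with the rank-2 regularization machinery of Subsection \ref{subsec:RegularizationLemmas} to reduce $f$ to one of four canonical shapes, and then read off the conclusion directly from the characterization Theorem \ref{thm: Local affine}. First I would dispose of the case $f^2\notin\mathscr{A}$: here Lemma \ref{lemma:f2++} already produces $f^2{\rm ++}\in\overline{\mathscr{A}}{\rm ++}$. So I assume $f^2\in\mathscr{A}$, giving the canonical form $f=\lambda\cdot{\rm supp}(f)\cdot\alpha^{L(x)+2Q(x)}$ from (\ref{generic-form-f-sq-in-A-detailed}) (rank 2 kills the cubic $H$). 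I also assume that no single-variable pinning of $f$ already yields a rank 1 function outside $\mathscr{L}$, otherwise the first alternative of the conclusion holds. Under this hypothesis Lemmas \ref{lemma r2 useful even odd}, \ref{lemma r2 useful  consistent opposite}, \ref{lemma r2 useful odd to +} chain together to produce a rank 2 function $h\notin\mathscr{L}$ in one of four regularized shapes: $h(+)$ or $h(+-)$ of essential arity 3, or $h(cc)$ of essential arity 3 or 2.

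For the two $h(cc)$ cases, I collate each bundle to $({\rm ++})$, so all $b_i=0$. Plugging the bundle data into Theorem \ref{thm: Local affine}, all homogeneous conditions are automatic from the symmetric $({\rm ++})$ structure and the inhomogeneous conditions (\ref{eqn-inhomogeneous-deg-1}), (\ref{eqn-inhomogeneous-deg-2}) collapse to ``$c_1,c_2,c_{12}$ all even'', which is exactly the condition for $\tilde h\in\mathscr{A}$. Hence $h\notin\mathscr{L}$ forces $\tilde h\notin\mathscr{A}$, producing a function in $\overline{\mathscr{A}}{\rm ++}$.

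For $h(+)$, a direct count shows $\sum_i a_{i1}a_{i2}=0+0+1\equiv 1\pmod 2$, so (\ref{eqn-homogeneous-deg-2}) fails and $h(+)\notin\mathscr{L}$ holds automatically, regardless of the coefficients. Pinning the single $(+)$-variable of the bundle $y_1$ to $0$ produces a rank 1 function with a $({\rm ++})$-bundle and compressed $\alpha^{c_2 y_2}$, while pinning it to $1$ produces a rank 1 function with a $(+-)$-bundle and compressed $\alpha^{c_1}\cdot\alpha^{(c_2+2c_{12})y_2}$. By the case analysis inside the proof of Lemma \ref{lemma L 1}, the first is in $\mathscr{L}$ iff $c_2$ is even and the second iff $c_2$ is odd. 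Hence exactly one of the two pinnings lands outside $\mathscr{L}$, invoking the first alternative.

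The delicate case is $h(+-)$, and this is the main obstacle. Theorem \ref{thm: Local affine} gives $h(+-)\in\mathscr{L}$ iff $c_1\equiv 0$, $c_2\equiv 0$, and $c_{12}\equiv 1\pmod 2$. Pinning one variable of any of the three $(+-)$ bundles kills that bundle to a constant and merges the remaining two $(+-)$ bundles into a single rank 1 bundle of type $({\rm ++}--)$, which is \emph{consistent} (not opposite, and note that collation cannot convert $({\rm ++}--)$ into $(+-)$: this parity distinction is crucial). By Lemma \ref{lemma L 1} the pinned rank 1 function lies in $\mathscr{L}$ iff its compressed coefficient is even; the three possible pinnings yield coefficients congruent mod 2 to $c_2$, $c_1$, and $c_1+c_2$ respectively. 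So if any of these three is odd we pin to land outside $\mathscr{L}$ (first alternative); otherwise $c_1\equiv c_2\equiv 0$, and combining with $h\notin\mathscr{L}$ forces $c_{12}\equiv 0$. Then $\tilde h=\alpha^{c_1 y_1+c_2 y_2+2c_{12} y_1 y_2}$ is expressible as $i^{L'+2Q'}$ with integer $L',Q'$, i.e.\ $\tilde h\in\mathscr{A}$, producing $h(+-)\in F_3^{\mathscr{A}}(+-)$. The bookkeeping subtlety --- that a $(+-)$ bundle degenerates under pinning into a $({\rm ++}--)$ (consistent) residue rather than $(+-)$ --- is what flips the parity of the $\mathscr{L}$-membership test and makes the two alternatives of the conclusion fit together.
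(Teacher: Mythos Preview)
Your proof is correct and follows essentially the same approach as the paper's. Both dispose of $f^2\notin\mathscr{A}$ via Lemma~\ref{lemma:f2++}, invoke the regularization Lemmas~\ref{lemma r2 useful even odd}--\ref{lemma r2 useful odd to +} under the ``all pinnings stay in $\mathscr{L}$'' hypothesis, and then perform the same three-case split on the regularized shape. Your treatment of the $(+-)$ case is in fact more explicit than the paper's: you spell out why pinning a $(+-)$ bundle merges the remaining two into a \emph{consistent} $({\rm ++}--)$ residue (so the $\mathscr{L}$-test from Lemma~\ref{lemma L 1} reads off the parity of $c_1,c_2,c_1+c_2$), whereas the paper just asserts ``we can pin to get a function not in $\mathscr{L}$'' when some $c_j$ is odd. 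One small imprecision: ``collate each bundle to $({\rm ++})$'' is not literally achievable for a $(--)$ bundle by collation alone; however your computation goes through unchanged since a $(--)$ bundle also has $\sum_i a_{ij}b_i\equiv 0$, and in any case the paper handles this via the remark that $f(cc)\in\overline{\mathscr{A}}{\rm ++}$ whenever $f\notin\mathscr{A}$ (using Disequality flips).
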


\begin{proof}
If $f^2 \not \in \mathscr{A}$, we can realize $f^2{\rm ++}$, which is a function in $\overline{\mathscr{A}}{\rm ++}$.
Now we may assume $f^2 \in \mathscr{A}$.
If any pinning of $f$ always gives a function in $\mathscr{L}$, we can apply Lemma \ref{lemma r2 useful even odd}, \ref{lemma r2 useful odd to +} and \ref{lemma r2 useful  consistent opposite}, to get a rank 2 function
not in  $\mathscr{L}$, such that all 3 bundles are $(+)$, or all 3 bundles are opposite, or all 3 non-empty bundles are consistent,  or there are exactly  2 non-empty bundles which are consistent.

Let
the compressed function be $\alpha^{c_1 x_1 +c_2 x_2 + 2c_{12} x_1x_2}$.
We consider the following cases.

\begin{enumerate}
\item
All 3 bundles are $(+)$.

The matrix $(a_{ij})$ in equations 
 (\ref{eqn-homogeneous-combined})
and (\ref{eqn-inhomogeneous-combined})
is $\left[\begin{smallmatrix}
1 & 0 \\
0 & 1 \\
1 & 1  \end{smallmatrix}\right]$, and all $b_i =0$ since they are all of 
type  $(+)$.
If $c_1$ is odd, then we can pin $x_2 =0$.
The resulting rank 1 function violates equation (\ref{eqn-inhomogeneous-deg-1})
since 
the corresponding matrix is just $\left[\begin{smallmatrix}
1 \\ 1   \end{smallmatrix}\right]$, and it has bundle type $(++)$
and so both $b_i =0$.
If  $c_1$ is even, then we can pin $x_2 =1$.
The resulting rank 1 function also 
violates equation (\ref{eqn-inhomogeneous-deg-1})
since 
it has the same matrix $\left[\begin{smallmatrix}
1 \\ 1   \end{smallmatrix}\right]$,  but  it has bundle type $(+-)$
and so the $b$ vector is $(0, 1)^{\tt T}$. 
Hence we get  a rank 1 function not in $\mathscr{L}$.
 

\item
All 3 bundles are opposite.

The matrix $(a_{ij})$ in equations  (\ref{eqn-homogeneous-combined})
and (\ref{eqn-inhomogeneous-combined})
is $\left[\begin{smallmatrix}
1 & 0 \\ 
1 & 0 \\
0 & 1 \\
0 & 1 \\  
1 & 1 \\
1 & 1  \end{smallmatrix}\right]$. So 
equations (\ref{eqn-homogeneous-deg-1}) and (\ref{eqn-homogeneous-deg-2})
are satisfied. The six $b_i$ are alternately 0's and 1's due to type $(+-)$.
If  $c_1$ or $c_2 \equiv 1$, then we 
can pin to get a function not in $\mathscr{L}$.
So suppose  $c_1 \equiv c_2 \equiv 0$. Then
 equation (\ref{eqn-inhomogeneous-deg-1}) holds. 
Since $f  \not \in \mathscr{L}$, equation (\ref{eqn-inhomogeneous-deg-2})
must have been violated, and we get $c_{12} \equiv 0$.
This means that we have a function in $F_3 ^ \mathscr{A} (+-)$.
%

\item
All the bundles are consistent.

We have a function $q(cc)$,
where $q \not \in \mathscr{L}$.
 By a similar analysis of  the compressed function of $q(cc)$, we know
that $q \not \in \mathscr{A}$, and so we get $q(cc) \in \overline{\mathscr{A}}{\rm ++}$.
Indeed, if $q \in \mathscr{A}$, then $c_1 \equiv c_2 \equiv c_{12} \equiv 0$.
All left hand sides of  (\ref{eqn-homogeneous-deg-1}),
(\ref{eqn-homogeneous-deg-2}),
(\ref{eqn-inhomogeneous-deg-1}) and
(\ref{eqn-inhomogeneous-deg-2}) are 0, and this would imply that
$q \in \mathscr{L}$.

\end{enumerate}

\end{proof}

\begin{lemma} \label{lemma L 3}
If we have a rank 3 function $f \not \in \mathscr{L}$,
in $\#{\rm CSP}_2^c(\{f\})$, either we can pin to get a rank 2 function not in $\mathscr{L}$, or
we can realize a function in  the  sets
$\overline{\mathscr{A}}{\rm ++}$ or $F_7 ^ \mathscr{A} (+-)$.
\end{lemma}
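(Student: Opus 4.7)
The plan is to mirror the proof of Lemma \ref{lemma L 2}, one rank higher. First, if $f^2 \not\in \mathscr{A}$, Lemma \ref{lemma:f2++} realizes $f^2{\rm ++} \in \overline{\mathscr{A}}{\rm ++}$ and we are done. So assume $f^2 \in \mathscr{A}$ and write $f$ in the normal form (\ref{generic-form-f-sq-in-A-detailed}). If some pinning of a single variable of $f$ produces a rank 2 function outside $\mathscr{L}$, the first conclusion of the lemma holds. Otherwise the hypotheses of the rank 3 regularization Lemmas \ref{lemma r3 useful even odd}, \ref{lemma r3 useful opposite consistent} and \ref{lemma r3 useful odd to +} are met, so we can construct from $f$ a rank 3 function $h \not\in \mathscr{L}$ whose seven bundles are either uniformly $(+)$, uniformly $(+-)$, uniformly consistent and non-empty, or reduced to exactly three non-empty consistent bundles with linearly independent names.

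The case analysis proceeds via Theorem \ref{thm: Local affine}. The support matrix $A = (a_{ij})$ has rows equal to the seven non-zero vectors of $\mathbb{Z}_2^3$, each repeated according to the bundle type, and the elementary identities $\sum_{v \neq 0} v_j = 4$, $\sum_{v \neq 0} v_j v_k = 2$, $\sum_{v \neq 0} v_j v_k v_\ell = 1$ drive the left-hand sides of equations (\ref{eqn-homogeneous-combined}) and (\ref{eqn-inhomogeneous-combined}). In the all $(+-)$ case, the $b_i$ alternate $0,1$ within each bundle; all homogeneous sums are even and the inhomogeneous sums are $(4,2,1) \bmod 2$, so $h \in \mathscr{L}$ iff $c_j \equiv 0$, $c_{jk} \equiv 0$ and $c_{jk\ell} \equiv 1 \bmod 2$. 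If some pinning of $h$ yields a rank 2 function not in $\mathscr{L}$, we are done. Otherwise every such pinning (which produces a rank 2 function with three bundles of type $(+-+-)$) stays in $\mathscr{L}$, and the corresponding degree 1 and 2 equations force $c_j$ and $c_{jk}$ all even; hence the only possible source of $h \not\in \mathscr{L}$ is $c_{123} \equiv 0 \bmod 2$, but then $4c_{123} \equiv 0 \bmod 8$ and the cubic term in (\ref{generic-form-f-sq-in-A-detailed}) vanishes on the exponent of $\alpha$, so $\underline{h} \in \mathscr{A}$ and $h \in F_7^\mathscr{A}(+-)$.

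The all $(+)$ case has $b_i = 0$ throughout, so the degree 3 homogeneous equation fails automatically, confirming $h \not\in \mathscr{L}$; pinning the free variable $x_3$ to $0$ or $1$ yields a rank 2 function with bundles of type $(++)$ or $(+-)$ respectively, and a parity check shows the induced $\mathscr{L}$-conditions on $c_{12}$ have opposite parities in the two cases, so at least one of the two pinnings lands outside $\mathscr{L}$. The consistent cases are the easiest: the essential function $q$ satisfies $h = q{\rm ++}$ (up to bundle collation), the $\mathscr{L}$-equations for $h$ unwind to the $\mathscr{A}$-equations for $q$, so $q \not\in \mathscr{A}$ and $h \in \overline{\mathscr{A}}{\rm ++}$. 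A small sub-case in the three-bundle situation arises when only some $c_{jk}$ is odd (with $q \in \mathscr{A}$): pinning $x_\ell$ for $\ell \notin \{j,k\}$ produces a rank 2 function whose degree 2 inhomogeneous equation fails, so we land in the pin-to-rank-2 outcome instead.

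The main obstacle is the $(+-)$ case. The conclusion $\underline{h} \in \mathscr{A}$ rests on precisely ruling out violations of (\ref{eqn-inhomogeneous-combined}) in degrees 1 and 2, which in turn depends on a careful analysis of the bundle structure after pinning (in particular the $(+-+-)$ pattern that arises when two $(+-)$ bundles merge) and on relating its $\mathscr{L}$-equations back to the parity of $c_j$ and $c_{jk}$. Once those parities are controlled, the remaining possibility that $c_{123}$ is the unique violator delivers the desired membership in $F_7^\mathscr{A}(+-)$.
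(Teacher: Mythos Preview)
Your proposal is correct and follows essentially the same route as the paper: reduce to $f^2\in\mathscr{A}$, invoke the rank~3 regularization lemmas, and then dispatch the four bundle-shape cases via the parity equations of Theorem~\ref{thm: Local affine}; in particular your $(+-)$ and all-$(+)$ analyses match the paper's pinning arguments, and your consistent-case observation that the $\mathscr{L}$-equations for $h$ collapse to the $\mathscr{A}$-conditions on $\tilde h$ is a slightly more direct version of what the paper does. The one blemish is the ``small sub-case'' sentence: the parenthetical ``with $q\in\mathscr{A}$'' contradicts ``some $c_{jk}$ is odd,'' and in any event this sub-case is already covered by your preceding sentence, so you should simply delete it.
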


\begin{proof}
If $f^2 \not \in \mathscr{A}$, we can realize $f^2{\rm ++}$, which is a function in $\overline{\mathscr{A}}{\rm ++}$.
Now we assume $f^2 \in \mathscr{A}$.
If any pinning of $f$ always gives a function in $\mathscr{L}$, we can apply Lemma \ref{lemma r3 useful even odd}, \ref{lemma r3 useful odd to +} and \ref{lemma r3 useful opposite consistent}, to get a rank 3 function
not in $\mathscr{L}$, such that all 7 bundles are $(+)$, or all 7 bundles are opposite, or all 7 bundles are non-empty consistent,  or there are exactly 3 non-empty bundles and 
they are all non-empty consistent.

We consider the following cases.

\begin{enumerate}
\item
All 7 bundles are $(+)$.

Suppose the rank 3 function is $h(+)$.
The
matrix $(a_{ij})$  in  equations  (\ref{eqn-homogeneous-combined})
and (\ref{eqn-inhomogeneous-combined})
is $\left[\begin{smallmatrix}
1 & 0 & 0 \\
0 & 1 & 0 \\
0 & 0 & 1 \\
1 & 1 & 0 \\
1 & 0 & 1 \\
0 & 1 & 1 \\
1 & 1 & 1  \end{smallmatrix}\right]$, with all $b_i =0$.

Let the compressed function of $h$ be
 \[\alpha^{c_1 x_1 +c_2 x_2 + c_3 x_3+ 2c_{12} x_1x_2 + 2c_{13} x_1x_3 + 2c_{23} x_2x_3 + 4c_{123} x_1x_2x_3}.\]

We show that in this case we can pin $x_3$ to get a rank 2 function not
in $\mathscr{L}$.

If we pin $x_3=0$, the bundle $x_3$ disappears, and the remaining
six bundles are merged into three bundles of type $(++)$, 
and the new matrix $(a_{ij})$
for the rank 2 function  in  equations  (\ref{eqn-homogeneous-combined})
and (\ref{eqn-inhomogeneous-combined})
is $\left[\begin{smallmatrix}
1 & 0  \\
1 & 0 \\
0 & 1 \\
0 & 1 \\
1 & 1 \\
1 & 1  \end{smallmatrix}\right]$, with all six $b_i =0$.
The new expression for the compressed function is
$\alpha^{c_1 x_1 +c_2 x_2 + 2c_{12} x_1x_2}$.
If any of $c_1, c_2, c_{12}$ is odd, then
some equation in (\ref{eqn-inhomogeneous-deg-1}) 
or (\ref{eqn-inhomogeneous-deg-2}) is violated, thus
we get a rank 2 function  not in $\mathscr{L}$.

Suppose $c_1 \equiv c_2 \equiv c_{12} \equiv 0$.
Now we pin $x_3=1$, the bundle $x_3$ disappears, and the remaining
six bundles are merged into three bundles of type $(+-)$, 
and the new matrix $(a_{ij})$
for the rank 2 function is the same as above, but
the new vector $b = (0, 1, 0, 1, 0, 1)^{\tt T}$.
The new expression for the compressed function is
$\alpha^{(c_1 + 2 c_{13}) x_1 + (c_2 + 2 c_{23}) x_2 + 2(c_{12} +
2c_{123}) x_1x_2}$. 
Now  equation (\ref{eqn-inhomogeneous-deg-2}) is violated,
and we get a rank 2 function  not in $\mathscr{L}$.
%
%
%

\item
All 7 bundles are opposite.

Suppose the rank 3 function is $h(+-)$. 
Let its compressed function be
\[ \alpha^{c_1 x_1 +c_2 x_2 + c_3 x_3+ 2c_{12} x_1x_2 + 2c_{13} x_1x_3 + 2c_{23} x_2x_3 + 4c_{123} x_1x_2x_3}.\]
Similar to the proof above,
if it is not the case that  
$c_1 \equiv c_2 \equiv c_3 \equiv c_{12} \equiv c_{13}\equiv c_{23} \equiv 0$, 
we can pin to get a rank 2 function not in $\mathscr{L}$.
%
But if all these values are even, then
because $f \not \in \mathscr{L}$, by equation (\ref{eqn-inhomogeneous-deg-3}),
we get  $c_{123} \equiv 0$. So we have
 a function $h(+-) \in F_7 ^ \mathscr{A} (+-)$.

\item
All the bundles are consistent.

Suppose the rank 3 function is $h(cc)$.
A similar analysis about the compressed function gives
$c_1 \equiv c_2 \equiv c_3 \equiv c_{12} \equiv c_{13}\equiv c_{23} \equiv 0$ and $c_{123} \equiv 1$. This tells us that
 the essential function $h$ is not in  $\mathscr{A}$, regardless of whether
 the essential arity is 3 or 7.
\end{enumerate}
\end{proof}

\begin{lemma}\label{lemma: rank>3 to at most 3}
Suppose $f \not \in \mathscr{L}$. In $\#{\rm CSP}_2^c(\{f\})$, either we can get a
function in $\overline{\mathscr{A}}{\rm ++}$, or we can realize a function of rank at most 3 not in $\mathscr{L}$.
\end{lemma}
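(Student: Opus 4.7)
The plan is to induct on the rank $r$ of $f$. The base cases $r\le 3$ are trivial (take $f$ itself). For the inductive step, first dispose of the easy case: if $f^2\notin \mathscr{A}$ then Lemma \ref{lemma:f2++} realizes $f^2{\rm ++}\in \overline{\mathscr{A}}{\rm ++}$ and we are done. So assume $f^2\in\mathscr{A}$, in which case (\ref{generic-form-f-sq-in-A-detailed}) gives, after normalization,
\[ f \;=\; \mathrm{supp}(f)\cdot \alpha^{L(x)+2Q(x)+4H(x)}, \]
where $x_1,\ldots,x_r$ are free variables and the support is encoded by data $(A,b)$ as in Section~\ref{sec:local-affine}.

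Because $f\notin\mathscr{L}$, Theorem \ref{thm: Local affine} supplies a concrete witness for non-membership: either (A) $H$ contains a monomial $\prod_{j\in M}x_j$ of degree $|M|\ge 4$ with odd coefficient, or (B) one of the equations (\ref{eqn-homogeneous-combined}) or (\ref{eqn-inhomogeneous-combined}) is violated at some subset $S\subseteq[r]$ of cardinality at most~4. I choose a witness with $|M|$ (resp.\ $|S|$) minimal and let $T\subseteq[r]$ denote the set of free-variable indices involved: $T=M$ in case (A), $T=S$ in case (B).

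If $r>|T|$, I pin some free variable $x_{j_0}\notin T$ to $0$ using $\Delta_0$. The substitution $x_{j_0}\mapsto 0$ leaves every monomial not containing $x_{j_0}$ untouched and merely deletes column $j_0$ from $A$ (keeping the remaining rows intact), so the same witness shows the pinned rank-$(r-1)$ function is still outside $\mathscr{L}$; apply the inductive hypothesis. Otherwise $r=|T|$; in case (B) this forces $r=4$, and in case (A) it forces $r$ to equal the minimum degree $d$ of an odd monomial in $H$. In this subcase I pin some $x_{j_0}\in T$ to $1$ using $\Delta_1$: the substitution drops each monomial of $P=L+2Q+4H$ that contains $x_{j_0}$ by one degree, and rewrites $(A,b)$ as $(A',b')$ with column $j_0$ removed and $b_i'\equiv b_i+a_{i,j_0}\pmod 2$. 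Using the minimality of the chosen witness, I verify that the new coefficients together with the new data still violate at least one condition in Theorem \ref{thm: Local affine}: for example in case (A) with $|M|=d$, the shifted degree-$(d{-}1)$ monomial inherits an odd coefficient in $H'$ (when $d\ge 5$) or in the inhomogeneous equation (\ref{eqn-inhomogeneous-deg-3}) for the reduced function (when $d=4$); analogous parity tracking handles case (B) with $|S|=4$. The pinned rank-$(r-1)$ function is thus outside $\mathscr{L}$, and the induction continues.

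The main obstacle is the subcase $r=|T|$: pinning a variable inside the witness can in principle cancel the obstruction through mixing between $L,Q,H$ and the simultaneous update of $(A,b)$. Ruling out this cancellation requires coefficient bookkeeping of the same style already performed in Lemmas \ref{lemma L 2} and \ref{lemma L 3} for ranks~2 and~3, and it is precisely the minimality of the chosen witness together with the explicit form of the equations in Theorem \ref{thm: Local affine} that guarantees a shifted coefficient or a freshly-violated equation persists after the pinning.
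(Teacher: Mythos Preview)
Your inductive framework and the handling of the cases $r>|T|$ and $d\ge 5$ are fine, but the hand-waved step at rank~4 is actually false: single-variable pinning (to $0$ or $1$) does \emph{not} always produce a rank-$3$ function outside~$\mathscr{L}$.

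Here is a concrete counterexample. Take $f$ of arity~$15$ whose support is the $f_{15}$ structure (rows of $A$ are all nonzero vectors of $\mathbb{Z}_2^4$, and $b=0$), with compressed function $(-1)^{x_1x_2x_3x_4}$. Then $f^2\in\mathscr{A}$, and $f\notin\mathscr{L}$ since $H=x_1x_2x_3x_4$ is not homogeneous of degree~$3$ (and simultaneously equation~(\ref{eqn-homogeneous-deg-4}) fails, as $\sum_i a_{i1}a_{i2}a_{i3}a_{i4}=1$). Your minimal witness has $|T|=4=r$. But $f^{x_4=0}$ has $H'=0$ and all $b'_i=0$, so it lies in $\mathscr{A}\subseteq\mathscr{L}$; and $f^{x_4=1}$ has each of the seven nonzero bundles of type $(+-)$ with $c'_{123}=1$, and one checks directly that every equation in Theorem~\ref{thm: Local affine} holds (this function is exactly $f_7^\alpha(+-)\in\mathscr{L}$ from the introduction). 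By the symmetry of $f$ under permuting the free variables, \emph{every} single pinning $f^{x_j=\epsilon}$ lands in $\mathscr{L}$. The ``minimality'' you invoke cannot save you here: the obstruction $c_{1234}\equiv 1$ and the violated homogeneous equation $\sum_i a_{i1}a_{i2}a_{i3}a_{i4}\equiv 1$ cancel each other after pinning, since the new equation~(\ref{eqn-inhomogeneous-deg-3}) reads $c_{123}+\sum_i a_{i1}a_{i2}a_{i3}a_{i4}\equiv c_{123}+c_{1234}$, which holds precisely when both are odd.

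The paper's proof avoids this trap by going beyond $\Delta_0,\Delta_1$: at rank~$4$ with $c_{1234}$ odd it also constructs $f^{x_3=x_4}$ using a copy of $(=_4)$ to identify two free variables, and then a three-way parity argument shows that at least one of $f^{x_4=0}$, $f^{x_3=0}$, $f^{x_3=x_4}$ must lie outside~$\mathscr{L}$. (The second case in the paper, where equation~(\ref{eqn-homogeneous-deg-4}) fails but $H$ \emph{is} homogeneous of degree~$3$, is handled by pinning alone---and your sketch would work there---precisely because $c_{1234}\equiv 0$ prevents the cancellation above.) You need an analogous extra gadget; pinning alone is provably insufficient.
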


\begin{proof}
If $f^2 \not \in \mathscr{A}$, we can realize $f^2{\rm ++}$, which is a function in $\overline{\mathscr{A}}{\rm ++}$.
Now assume $f^2 \in \mathscr{A}$.
Hence, $f$ has the form (\ref{generaic-form-f-sq-in-A-detailed}). Suppose $f={\rm supp}(f) \cdot  \alpha^{L(x) + 2 Q(x) + 4 H(x)}$. If a coefficient of $H$ is even, we remove the corresponding monomial since $\alpha^8=1$.  According to Theorem \ref{thm: Local affine}, we need to consider two cases.

The first case is that $H$ is not homogeneous of degree 3.
Let $M$ be  a monomial having the  minimum degree among all
monomials in $H$ of degree at least $4$.
Of course the degree of $M$ is at least $4$. We pin the free variables of $f$ which are outside of $M$ to $0$, and pin the variables in $M$ to $1$ except $4$ of them, to get a new function of rank 4.
 The new $H$ polynomial has a unique monomial $c_{1234}x_1x_2x_3x_4$ of degree $4$ where $c_{1234} \equiv 1$.
By Theorem \ref{thm: Local affine}, it is still not in $\mathscr{L}$. We denote this new function still by $f$.

From $f$ we construct 3 functions of rank $3$; if they are all in $\mathscr{L}$, we will get a contradiction.

Consider $f^{x_4=0}$. Substitute $x_4=0$ into the compressed function of $f$, we find in the
$H$ polynomial of $f^{x_4=0}$, the coefficient of $x_1x_2x_3$ is $c_{123}$.
If $f^{x_4=0} \in \mathscr{L}$,
 according to condition (\ref{eqn-inhomogeneous-deg-3}),
this coefficient has the same parity as
 the  number of variables in the bundle $x_1+x_2+x_3$ in  $f^{x_4=0}$
that are labeled as $(-)$, i.e., those variables that are
equal to $x_1+x_2+x_3+1$ on the support.
These variables come from the union of two sets
of variables of $f$.
We denote by $v_{123}$ the number of variables 
in the bundle $x_1+x_2+x_3$ in  $f$  that are  labeled as $(-)$, i.e., 
those variables that are
equal to $x_1+x_2+x_3+1$ on the support. Similarly denote
$v_{124}$ and  $v_{1234}$ the numbers of
variables that are
equal to $x_1+x_2+x_4+1$ and $x_1+x_2+x_3+x_4+1$,
respectively, in $f$ on its support.
We have $c_{123} \equiv v_{123} + v_{1234}$.

If we  similarly consider $f^{x_3=0}$, we get 
$c_{124} \equiv v_{124} + v_{1234}$.

The third rank 3  function we construct is
 $f^{x_3=x_4}$.  To do so,
we connect one $x_3$ variable and one $x_4$ variable by a $(=_4)$ function,
and pin any variable labeled $x_3+x_4(+)$ to 0
and any variable labeled $x_3+x_4(-)$ to 1.
We get one extra condition that $x_3=x_4$, which narrows the support. That is, we get $\chi_{x_3=x_4} \cdot f$.  The 15 bundles of $f$,
namely $x_1, x_2, \ldots, x_1+x_2+x_3+x_4$, turn into 7 bundles. The new bundle $x_1+x_2+x_3$ is the union of the original bundles $x_1+x_2+x_3$ and $x_1+x_2+x_4$. 
Hence, if $f^{x_3=x_4} \in \mathscr{L}$,
its corresponding coefficient $c'_{123} \equiv v_{123} + v_{124}$,
 according to condition (\ref{eqn-inhomogeneous-deg-3})
on $f^{x_3=x_4}$.
Hence $c'_{123} \equiv c_{123} + c_{124}$.

Substitute $x_3=x_4$ into the compressed function of $f$.
We get $c'_{123} \equiv c_{123}+c_{124}+c_{1234}$. 
Thus we reach a contradiction $c_{1234} \equiv 0$. 


We conclude that
 at least one of the three rank 3 functions $f^{x_4=0}$, $f^{x_3=0}$ and $f^{x_3=x_4}$, is not in $\mathscr{L}$.

Now, we can assume that $H$ is homogeneous of degree 3, 
and consider the second case that one of the equations 
(\ref{eqn-homogeneous-deg-1}), 
(\ref{eqn-homogeneous-deg-2}), 
(\ref{eqn-homogeneous-deg-3}), 
(\ref{eqn-homogeneous-deg-4}), 
(\ref{eqn-inhomogeneous-deg-1}), 
(\ref{eqn-inhomogeneous-deg-2}) and 
(\ref{eqn-inhomogeneous-deg-3}) does not hold.
If one of the 6 equations 
(\ref{eqn-homogeneous-deg-1}), 
(\ref{eqn-homogeneous-deg-2}), 
(\ref{eqn-homogeneous-deg-3}), 
(\ref{eqn-inhomogeneous-deg-1}), 
(\ref{eqn-inhomogeneous-deg-2}) and 
(\ref{eqn-inhomogeneous-deg-3}) does not hold,
then we can pin to get a function of rank at most 3 not in $\mathscr{L}$.
 For example, if equation (\ref{eqn-inhomogeneous-deg-2}) 
does not hold for $j<k$, we can keep $x_j$ and $x_k$, 
and pin other free variables to 0. 
Now, we can assume all the 6 equations 
(\ref{eqn-homogeneous-deg-1}), 
(\ref{eqn-homogeneous-deg-2}), 
(\ref{eqn-homogeneous-deg-3}), 
(\ref{eqn-inhomogeneous-deg-1}), 
(\ref{eqn-inhomogeneous-deg-2}) and 
(\ref{eqn-inhomogeneous-deg-3}) hold,
 and equation (\ref{eqn-homogeneous-deg-4}) does not hold for $j<k<l<m$.

Firstly, keep $x_j,x_k,x_l$ and $x_m$ and pin other free variables to 0, 
to get a function $h$.
The left hand side of  (\ref{eqn-homogeneous-deg-4})
is the number of variables in all bundles in $f$ with a name that contains
$x_j, x_k, x_l, x_m$, i.e., any name that is of the form
$x_j+x_k+x_l+x_m+ \mbox{any affine linear form of other $x$'s}$.
This number is precisely the number of variables
in the bundle named $x_j+x_k+x_l+x_m$ in $h$, 
i.e., the number of variables named
$x_j+x_k+x_l+x_m$ or $x_j+x_k+x_l+x_m+1$ in $h$.
Because $f$ fails (\ref{eqn-homogeneous-deg-4}), this number is odd,
and so $h$ still fails equation (\ref{eqn-homogeneous-deg-4}).
Consider $h^{x_m=0}$ and $h^{x_m=1}$. Because the $H$ polynomial of $f$
 is homogeneous of degree 3, when we set $x_m=0$ or $x_m=1$,
there are no new cubic terms formed, and thus
 the $c_{jkl}$ coefficients of $h^{x_m=0}$ and $h^{x_m=1}$
are the same as the $c_{jkl}$ coefficient of $f$.
 The $x_j+x_k+x_l+1$ variables of $h^{x_m=0}$ come
 from the $x_j+x_k+x_l+1$ variables and
the $x_j+x_k+x_l+x_m+1$ variables of $h$.
Meanwhile, the $x_j+x_k+x_l+1$ variables of $h^{x_m=1}$ come 
from the $x_j+x_k+x_l+1$ variables and
the $x_j+x_k+x_l+x_m$ variables of $h$. 
Hence, they have opposite parities,
as their sum is odd. But they are respectively
the left hand sides of the equation (\ref{eqn-inhomogeneous-deg-3})
for  $h^{x_m=0}$ and $h^{x_m=1}$, whose right hand sides are 
the same $c_{jkl}$. It follows that
one of the two rank 3 functions $h^{x_m=0}$ and $h^{x_m=1}$ must
fail equation (\ref{eqn-inhomogeneous-deg-3}), and thus
 not in $\mathscr{L}$.
\end{proof}

Putting Lemma \ref{lemma: rank>3 to at most 3}, \ref{lemma L 3}, \ref{lemma L 2} and \ref{lemma L 1} together, we get the following lemma.

\begin{lemma} \label{lemma reduce L}
If we have a  function $f \not \in \mathscr{L}$, then in $\#{\rm CSP}_2^c(\{f\})$, we can realize a function of the form:
\begin{itemize}
  \item $\overline{\mathscr{A}}{\rm ++}$,
\item  $F_1^\alpha$, $F_1^\mathscr{A}$,  $F_1^\mathscr{A}(+-)$,
\item $F_3 ^ \mathscr{A} (+-)$,
\item $F_7 ^ \mathscr{A} (+-)$.
\end{itemize}
\end{lemma}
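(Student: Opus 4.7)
The plan is to chain the four preceding lemmas together in a straightforward descending induction on the rank. Given $f \notin \mathscr{L}$, I would first invoke Lemma~\ref{lemma: rank>3 to at most 3}: this either immediately produces a function in $\overline{\mathscr{A}}{\rm ++}$ (in which case we are done), or yields a function $g$ of rank at most $3$ with $g \notin \mathscr{L}$. So without loss of generality we reduce to the case where $\mathrm{rank}(f) \le 3$, and we may branch on the value of the rank.

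Next I would treat the three remaining ranks in turn. If $\mathrm{rank}(g) = 3$, apply Lemma~\ref{lemma L 3}: either we directly realize a function in $\overline{\mathscr{A}}{\rm ++}$ or in $F_7^{\mathscr{A}}(+-)$ (which are both in the target list), or we pin a variable to obtain a rank $2$ function still outside $\mathscr{L}$. In the latter case, apply Lemma~\ref{lemma L 2}: we either land in $\overline{\mathscr{A}}{\rm ++}$ or $F_3^{\mathscr{A}}(+-)$, or we descend further by pinning to a rank $1$ function outside $\mathscr{L}$. Finally Lemma~\ref{lemma L 1} handles the rank $1$ case, producing a function in one of $\overline{\mathscr{A}}{\rm ++}$, $F_1^{\alpha}$, $F_1^{\mathscr{A}}$, or $F_1^{\mathscr{A}}(+-)$. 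Since every branch terminates in one of the sets listed in the conclusion, the lemma follows.

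There is essentially no obstacle here beyond bookkeeping: the genuine work has already been done in the four feeder lemmas. The one thing I would make explicit is that the descent is well-founded---each application of Lemma~\ref{lemma L 3} or Lemma~\ref{lemma L 2} in its ``pin'' branch strictly decreases the rank while preserving both $\#\text{CSP}_2^c$-realizability and non-membership in $\mathscr{L}$---so the chain must terminate after at most three steps in a function of the required form. I would also note that all constructions used (pinning via $\Delta_0,\Delta_1$ and merging via $(=_{2k})$) are legitimate inside $\#\text{CSP}_2^c$, so realizability is transitive across the chain.
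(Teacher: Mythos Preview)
Your proposal is correct and is exactly the paper's approach: the paper's own proof is the single sentence ``Putting Lemma~\ref{lemma: rank>3 to at most 3}, \ref{lemma L 3}, \ref{lemma L 2} and \ref{lemma L 1} together, we get the following lemma.'' Your write-up simply spells out the descending-rank case analysis that this sentence summarizes.
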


\subsection{Putting Things Together}

\begin{lemma} \label{lemma: together two +-}
For any $j,k \in \{1,3,7\}$, there is some  $s \in \{1,3,7\}$,  
such that we can realize in the setting
\#{\rm CSP}$_2^c (F_j^\mathscr{A}(+-),F_k^\alpha(+-))$,
 a function in the set
 $F_{s}^\alpha({\rm ++})$.
\end{lemma}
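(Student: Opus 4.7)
The plan is to generalize the gadget of Figure~\ref{Fig f3f7}, which already handles the case $(j,k)=(3,7)$. Given $f \in F_j^{\mathscr{A}}(+-)$ of rank $r_f$ (so $j = 2^{r_f}-1$) and $g \in F_k^\alpha(+-)$ of rank $r_g$ (so $k = 2^{r_g}-1$), I will aim for $s = \max(j,k)$, using the function of larger rank as the ``scaffold'' and attaching copies of the other to convert every $(+-)$ bundle into an external $(++)$ bundle. I describe the main case $r_g \ge r_f$; the opposite case is symmetric.

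A copy of $f$ attached to $g$ corresponds to an injective linear map $\phi:\mathbb{Z}_2^{r_f}\to\mathbb{Z}_2^{r_g}$, with the bundle of $f$ indexed by a nonzero $\sigma$ identified with $g$'s bundle indexed by $\phi(\sigma)$ via a $(=_2)$ or $(=_4)$ connecting the two $(-)$ variables. I pick a family of maps $\phi_1,\ldots,\phi_m$ whose images cover $\mathbb{Z}_2^{r_g}\setminus\{0\}$: for $(r_f,r_g)=(2,3)$ take the three $2$-subspaces of $\mathbb{Z}_2^3$ all containing $\mathbf{1}_3$, as in Figure~\ref{Fig f3f7}; the remaining small cases ($r_f = 1$ with $r_g \in \{1,2,3\}$ and $r_f = r_g$) admit obvious covers by inspection. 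When a bundle of $g$ is covered $t$ times, the joint support forces the $t$ $(+)$-side orphans, and the $t$ $(-)$-side orphans, from the copies to agree with $g$'s corresponding variables. I then pair orphans internally via $(=_2)$ connections so that zero $(-)$ variables and exactly two $(+)$ variables remain external per bundle, yielding $2^{r_g}-1$ external bundles all of type $(++)$.

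Next I identify the essential function of the resulting gadget as the pointwise product of $\tilde g$ with the $m$ pullbacks $\tilde f\circ\phi_\ell$ on $g$'s $r_g$ free variables. Each pullback is in $\mathscr{A}$ (since $\mathscr{A}$ is closed under affine substitution), products of $\mathscr{A}$ functions stay in $\mathscr{A}$, and $\mathscr{A}\cdot\mathscr{A}^\alpha\subseteq\mathscr{A}^\alpha$ because the defining factor $\alpha^{\sum y_j}$ of $\mathscr{A}^\alpha$ survives multiplication by any $\mathscr{A}$ function. Hence the essential function lies in $\mathscr{A}^\alpha$ and the output is in $F_s^\alpha(++)$ with $s = k$. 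In the swapped case $r_g < r_f$, the analogous cover (images of injections $\mathbb{Z}_2^{r_g}\to\mathbb{Z}_2^{r_f}$) uses an odd number of copies of $g$ for every relevant pair $(r_g,r_f)\in\{(1,2),(1,3),(2,3)\}$, and an odd product of $\mathscr{A}^\alpha$ functions (times $\tilde f \in \mathscr{A}$) remains in $\mathscr{A}^\alpha$, yielding $s = j$.

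The main obstacle will be the parity bookkeeping in the orphan pairings: arranging the covers and internal connections so that every external bundle ends up of type exactly $(++)$ and no $(-)$ variable escapes, simultaneously for every $(j,k)$. A subtler point is verifying that the internal connections do not impose any new linear constraint on the scaffold's free variables beyond those encoded by the $\phi_\ell$'s; this is handled by the observation that the ``forced equalities on support'' promised by the $(+-)$ structure are precisely the equalities already implied by the supports of $f$ and $g$, so the connections merely identify quantities already equal on the joint support and do not lower the rank of the scaffold.
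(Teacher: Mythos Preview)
Your approach is essentially the paper's: the same scaffold-plus-copies construction, the same case split, and for $(j,k)\in\{(3,7),(7,3)\}$ the same three $2$-dimensional subspaces through $\mathbf{1}_3$ as in Figure~\ref{Fig f3f7}. The orphan-pairing and rank-preservation concerns you flag are exactly the points the paper handles by writing out the nine equations and checking they reduce to the six independent identifications.

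There is, however, a real gap in your justification of the swapped case. The claim ``an odd product of $\mathscr{A}^\alpha$ functions (times $\tilde f\in\mathscr{A}$) remains in $\mathscr{A}^\alpha$'' is false when the factors live on different, overlapping subsets of variables. What actually matters is the covering multiplicity: if the $\ell$-th copy of $\tilde g$ sits on the bundle variables $\mathrm{im}(\phi_\ell)\setminus\{0\}$, then the aggregate $M_\alpha$ factor on bundle $v$ is $M_\alpha^{c(v)}$ where $c(v)=|\{\ell:v\in\mathrm{im}(\phi_\ell)\}|$. You need $c(v)$ odd for \emph{every} $v$, so that $\prod_v M_\alpha^{c(v)} = M_\alpha^{\otimes s}\cdot(\text{power of }M_i)$ with $M_i\in\mathscr{A}$. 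This does hold for the specific covers you chose (each element covered once, except $\mathbf{1}_3$ covered three times in the $(2,3)$ case), but the oddness of $m$ is neither sufficient nor the reason. The paper makes exactly this computation explicit, isolating the extra factor $M_\alpha(x_1+x_2+x_3)^2$ and noting $M_\alpha^2\in\mathscr{A}$. Replace your parity-of-$m$ argument with a parity-of-$c(v)$ check and the proof goes through.
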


\begin{proof}
Let ${\rm Min}=\min \{j,k\}$ and ${\rm Max}=\max \{j,k\}$.

If ${\rm Min}={\rm Max}$, we overlay two functions 
from $F_k^\mathscr{A}(+-)$ and $F_k^\alpha(+-)$ by bundles, and connect the
 variable labeled $(-)$ in each bundle of one function
to the  variable labeled $(-)$  in the
 corresponding bundle  of the other function, to get
a function in  $F_k^\alpha(++)$.

If ${\rm Min}=1$, say ${\rm Min}=j$.
Then for each bundle of
a function in $F_{k}^\alpha(+-)$, we merge a $F_j^\mathscr{A}(+-)$ 
function with it by connecting the corresponding variables 
 labeled $(-)$, to get
a function in $F_{k}^\alpha({\rm ++})$.
If ${\rm Min}=k$ just switch $j$ and $k$.

The remaining case is ${\rm Min}=3$ and ${\rm Max}=7$. Suppose we have $g \in F_3^\mathscr{A}(+-), h \in F_7^\alpha(+-)$.

We take one copy of $\tilde{h}(x_1,x_2,x_3, x_2+x_3, x_1+x_3, x_1+x_2, x_1+x_2+x_3)(+-)$, and three copies of $g$: $\tilde{g}(u_1,u_2,u_1+u_2)(+-)$, $\tilde{g}(v_1,v_2,v_1+v_2)(+-)$, $\tilde{g}(w_1,w_2,w_1+w_2)(+-)$ to 
construct a function realizing a function
 $f$ in $(x_1, u_1,  x_2, v_1,  x_3, w_1, 
x_2+x_3,u_2, x_1+x_3, v_2,  x_1+x_2, w_2, x_1+x_2+x_3, w_1+w_2)$.
(See Figure~\ref{Fig f3f7} for an illustration.). We merge the $u_1$ bundle of
one copy of $g$ with the $x_1$ bundle of $h$, by equating the 
variables labeled  $u_1(-)$ and d $x_1(-)$. 
 That is, set $u_1+1=x_1+1$.
 Similarly, we merge the bundles $u_2$ with $x_2+x_3$, merge $v_1$ with $x_2$, merge $v_2$ with $x_1+x_3$, merge $w_1$ with $x_3$ and merge $w_2$ with $x_1+x_2$. These 5 merging operations are accomplished by
similarly connecting 5 pairs of variables labeled $(-)$
as illustrated in Figure~\ref{Fig f3f7}. 
After these mergings, the remaining
 four bundles $x_1+x_2+x_3$, $u_1+u_2$, $v_1+v_2$, $w_1+w_2$ are already merged into one bundle of type $(++++----)$ automatically, which can become $(++)$ by 
3 collations. Including these 3 collations there are a total of
9 pair of equating variables all labeled $(-)$
except the pair $u_1 + u_2(+)$ and $v_1 + v_2(+)$.
The 3 collations are algebraically
 $x_1+x_2+x_3+1=u_1+u_2+1$, $u_1+u_2=v_1+v_2$, $v_1+v_2+1=w_1+w_2+1$, and
we leave $x_1+x_2+x_3$ and $w_1+w_2$ in this bundle.
The equations from these  3 collations are algebraic consequences
of the previous 6 merging operations.
To summarize the above description,  the 18 variables among 32 variables
 of the 4 functions are matched by 9 edges in this gadget,
 we list them by the following equations.
\begin{displaymath}
\left\{ \begin{array}{rcl}
u_1+1& \equiv &x_1+1   \\
u_2+1& \equiv &x_2+x_3+1  \\
 v_1+1& \equiv &x_2+1   \\
v_2+1& \equiv &x_1+x_3+1  \\
w_1+1& \equiv &x_3+1   \\
w_2+1& \equiv &x_1+x_2+1  \\
u_1+u_2+1 &  \equiv & x_1+x_2+x_3+1 \\
u_1+u_2& \equiv &v_1+v_2 \\
v_1+v_2+1& \equiv &w_1+w_2+1
\end{array} \right.
\end{displaymath}

Removing algebraic redundancy, this  system of equations is
equivalent to
\begin{displaymath}
\left\{ \begin{array}{rcl}
u_1&  \equiv  &x_1   \\
u_2&  \equiv  &x_2+x_3  \\
 v_1&  \equiv  &x_2   \\
v_2&  \equiv  &x_1+x_3  \\
w_1&  \equiv  &x_3   \\
w_2&  \equiv  &x_1+x_2  
\end{array} \right.  .
\end{displaymath}
with 
consequences $u_1+u_2  \equiv v_1+v_2  \equiv w_1+w_2  \equiv x_1+x_2+x_3$.

The external variables of this gadget
has 7 bundles $x_1, x_2, x_3, x_1+x_2, x_1+x_3, x_2+x_3$ and
$x_1+x_2+x_3$ and are all of the type $(++)$.
It is not hard to verify that, the above system of linear equations are all the new introduced linear constraints on the 14 external variables, besides the natural linear constraints of the support of $h$ already shown by the names of variables. So $f$ has 7 bundles, such that it 
has the form $\tilde{f}(x_1, x_2,x_3,x_2+x_3, x_1+x_3,x_1+x_2, x_1+x_2+x_3)(++)$.  Denote the input variables
 of $\tilde{f}$ by $X$.
 We calculate $f$ on a general input $X(++)$ on the support. Every
such assignment has a unique extension to the  $9$ internal edges
so that the 4 functions give no zero values. We
  get $\tilde{f}(X)=\tilde{h}(X)
\tilde{g}(x_1, x_2+x_3, x_1+x_2+x_3)
\tilde{g}(x_2, x_1+x_3, x_1+x_2+x_3)
\tilde{g}(x_3, x_1+x_2, x_1+x_2+x_3)$. 
Let $\tilde{h} = M_\alpha^{\otimes 7}  q$, where $q \in F_7^\mathscr{A}$,
since $h \in F_7^\alpha(+-)$ by assumption.
 We have $\tilde{h}(X)=M_\alpha^{\otimes 7} (X) \cdot q(X)$,
where we view $M_\alpha$ as a generalized binary equality function
(which modifies each external variable). 
 Now, we see that $\tilde{f}$ is a product of $M_\alpha^{\otimes 7}$ with 
4 functions in $\mathscr{A}$: $q$, $\tilde{g}$, $\tilde{g}$ and $\tilde{g}$.
The product of 4 functions in $\mathscr{A}$ is in
$\mathscr{A}$.  Hence, $\tilde{f} \in F_7^\alpha$ and $f \in  F_7^\alpha(++)$.

Suppose we have $g \in F_3^\alpha(+-), h \in F_7^\mathscr{A}(+-)$.
 The construction of the gadget $f$ and the analysis of the support 
of $f$ are the same.
In the last step, we calculate $\tilde{f}$. Let $\tilde{g} = M_\alpha^{\otimes 3}  p$, where $p \in F_3^\mathscr{A}$,
since $g \in F_3^\alpha(+-)$ by assumption.
 Then we have  
\begin{eqnarray*}
\tilde{f}(X)
&=&
\tilde{h}(X)\tilde{g}(x_1, x_2+x_3, x_1+x_2+x_3)\tilde{g}(x_2, x_1+x_3, x_1+x_2+x_3)\tilde{g}(x_3, x_1+x_2, x_1+x_2+x_3)\\
&=&
\tilde{h}(X)p(x_1, x_2+x_3, x_1+x_2+x_3)p(x_2, x_1+x_3, x_1+x_2+x_3)p(x_3, x_1+x_2, x_1+x_2+x_3) \\
& & \cdot M_\alpha^{\otimes 7}(X) 
M_\alpha(x_1+x_2+x_3)M_\alpha(x_1+x_2+x_3).
\end{eqnarray*}
The second equality holds, because from each $\tilde{g}$ we get a $p$ 
and a $M_\alpha^{\otimes 3}$ applied to the 3 variables.
The modifier factor $M_\alpha^{\otimes 7}(X)$ is obtained by
collecting one factor $M_\alpha$ on each of the inputs
$x_1, x_2, x_3,  x_1+x_2, x_1+x_3, x_2+x_3, x_1+x_2+x_3$,
and there are still two extra factors of  $M_\alpha$ on 
$x_1+x_2+x_3$.

Because $M^2_\alpha ( = M_{\alpha^2} = M_i )$ is in $\mathscr{A}$,
 $\tilde{h}$ and $p$ both belong to $\mathscr{A}$,
and $\tilde{f} \in F_7^\alpha$,
we have  $f \in  F_7^\alpha(++)$.
\end{proof}

\begin{lemma} \label{lemma: together two +}
For any $j,k \in \{1,3,7\}$, there is some  $s \in \{1,3,7\}$, 
such that  we can realize in the setting
 \#{\rm CSP}$_2^c (F_j^\mathscr{A}(+),F_k^\alpha(+))$,
a function in the set
 $F_{s}^\alpha({\rm ++})$.
\end{lemma}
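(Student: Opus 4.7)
The plan is to mimic the proof of Lemma~\ref{lemma: together two +-} closely, replacing each direct $(-)$-$(-)$ identification used there by a $(=_4)$-merging of two corresponding $(+)$ variables. To identify one bundle of $g\in F_j^\mathscr{A}(+)$ with one bundle of $h\in F_k^\alpha(+)$, I feed the single $(+)$ variable from each into two inputs of a copy of $(=_4)$ and leave the other two inputs as external; this enforces the same bundle identification as a $(-)$-$(-)$ connection while producing an external $(++)$ bundle of two $(+)$ variables. I then split into the three cases used in the previous lemma.

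For $j=k$, I use $r=\log_2(k+1)$ copies of $(=_4)$ to pairwise identify the $r$ free-variable bundles of $g$ and $h$; the remaining $k-r$ dependent bundles are automatically paired on the support. The gadget has $k$ bundles of type $(++)$, and its essential function is the pointwise product $\tilde{g}\cdot \tilde{h}$, which lies in $\mathscr{A}^\alpha$ since the product of an $\mathscr{A}$-function with an $\mathscr{A}^\alpha$-function is in $\mathscr{A}^\alpha$. So $s=k$ works. For $\min(j,k)=1$ and $\max(j,k)\in\{3,7\}$, I attach a copy of the unary function to each bundle of the other function via a $(=_4)$, producing $s=\max(j,k)$ bundles of type $(++)$ and essential function $\tilde{M}(Z)\cdot \prod_{i=1}^s \tilde{u}(Z_i)$, where $\tilde{M}$ is the essential function of the larger function. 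If the unary carries the $\mathscr{A}$-label, the correction $\prod \tilde{u}(Z_i)$ lies in $\mathscr{A}$ and the product stays in $\mathscr{A}^\alpha$; if instead the unary carries the $\alpha$-label, $\tilde{u}(Z_i)=\alpha^{Z_i}\cdot(\mbox{in }\mathscr{A})$ and $\prod \tilde{u}(Z_i)=\alpha^{\sum Z_i}\cdot(\mbox{in }\mathscr{A})$, so applying $M_{\alpha^{-1}}^{\otimes s}$ to the essential function cancels the $\alpha^{\sum Z_i}$ factor and leaves an $\mathscr{A}$-function. In either sub-case the essential function lies in $\mathscr{A}^\alpha$, so $s=\max(j,k)$ works.

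For $\min(j,k)=3$, $\max(j,k)=7$, I adapt the $f_3 f_7$ gadget of Figure~\ref{Fig f3f7}: one copy of the essential-arity-$7$ function and three copies of the essential-arity-$3$ function, with the six bundle-mergings prescribed there, each implemented via $(=_4)$ connecting the two $(+)$ variables. After these mergings, the three dependent bundles $u_1+u_2$, $v_1+v_2$, $w_1+w_2$ are forced on the support to equal $x_1+x_2+x_3$ and merge with the top bundle of the arity-7 function into a $(++++)$ bundle, which I reduce to $(++)$ by one collation via $(=_2)$. The resulting gadget has $7$ bundles of type $(++)$, and its essential function is $\tilde{h}\cdot \tilde{g_u}\cdot \tilde{g_v}\cdot \tilde{g_w}$ at the appropriate bundle-value arguments. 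When $\tilde{h}\in \mathscr{A}^\alpha$ the product is visibly in $\mathscr{A}^\alpha$; when instead $\tilde{h}\in\mathscr{A}$ and each $\tilde{g}_i\in\mathscr{A}^\alpha$, the three copies collectively contribute a factor $\alpha^{Z_1+\cdots+Z_6+3Z_7}$, and applying $M_{\alpha^{-1}}^{\otimes 7}$ leaves the residual $\alpha^{2Z_7}=i^{Z_7}\in\mathscr{A}$, so the essential function is again in $\mathscr{A}^\alpha$. Thus $s=7$ works.

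The main obstacle is the exponent bookkeeping in the asymmetric sub-cases (where the $\alpha$-label sits on the function with fewer bundles): one must verify that the $\alpha$-contributions from the multiple attached copies combine, up to an $\mathscr{A}$-factor, to match the Hamming-weight modifier $\alpha^{-\sum Z_i}$ of $M_{\alpha^{-1}}^{\otimes s}$ so that the essential function of the gadget lands in $\mathscr{A}^\alpha$ rather than merely in $\mathscr{A}$. A secondary care is to confirm that the $(=_4)$-mergings preserve exactly the same bundle-identification and support structure as the $(-)$-$(-)$ connections of the previous lemma, so that the 7-bundle dependent structure of the $f_3 f_7$ gadget is retained and no spurious linear constraint is introduced on the external variables.
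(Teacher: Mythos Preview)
Your proposal is correct and follows essentially the same approach as the paper: the paper's own proof simply states that the argument is identical to Lemma~\ref{lemma: together two +-}, with the sole modification that bundle mergings are performed via a copy of $(=_4)$ (since each bundle now carries a single $(+)$ variable), and that in the $\{3,7\}$ case the resulting $({+}{+}{+}{+})$ top bundle is collated to $({+}{+})$. You have in fact written out more detail than the paper provides---in particular the exponent bookkeeping showing the essential function lands in $\mathscr{A}^\alpha$ when the $\alpha$-label sits on the smaller-arity function---which the paper leaves implicit by reference to the previous lemma.
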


\begin{proof}
The proof is similar to the proof of
Lemma~\ref{lemma: together two +-}. 
The only difference is a slight modification in the gadget construction
when ${\rm Min}=3$ and ${\rm Max}=7$.
Now each bundle of the constituent functions from $F_j^\mathscr{A}(+)$ and
$F_k^\alpha(+)$ has a single variable labeled $(+)$.
  When we merge two bundles, we connect these two variables 
by a copy of  $(=_4)$.
After the six
connection steps have been made,
 the $x_1+x_2+x_3$ bundle includes
$u_1+u_2$, $v_1+v_2$ and $w_1+w_2$, and has type
$({\rm ++}{\rm ++})$.
 We turn it to $({\rm ++})$ type using collation.
\end{proof}

\begin{lemma} \label{lemma: together one alpha +- one unary}
For any $j \in \{1,3,7\}$, we can realize a
function in the set $F_j^\alpha({\rm ++})$, 
from either \#{\rm CSP}$_2^c (F_j^\alpha(+-),F_1^\mathscr{A})$ 
or \#{\rm CSP}$_2^c (F_j^\alpha(+-),F_1^\alpha)$.
\end{lemma}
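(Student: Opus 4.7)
The plan is to reduce to Lemma~\ref{lemma: together two +-} by first realizing an auxiliary function $g \in F_j^\mathscr{A}(+-)$ from $h$ and $u$, and then applying that lemma with $j = k$ to $g$ and a fresh copy of $h$ to produce a function in $F_j^\alpha(++)$. The key observation is that overlaying two copies of $h$ at their $(+)$-variables bundle-by-bundle yields an essential function proportional to $\tilde{h}^2$, and the $\alpha^{\sum t_i}$ factor carried by $\tilde{h} \in \mathscr{A}^\alpha$ squares to $i^{\sum t_i}$, which moves the product into $\mathscr{A}$.

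Concretely, take two copies $h_A, h_B$ of $h$ and identify the $(+)$-variable of each corresponding bundle, so that each $y_i^+$ has gadget-degree two (appearing as input to both $h_A$ and $h_B$). The $(-)$-variable of bundle $i$ in $h_B$, call it $w_i^-$, is made internal by attaching a copy of $u$, giving it even degree; the $(-)$-variable of bundle $i$ in $h_A$, call it $y_i^-$, remains external. Summing out $\vec{w^-}$, the realized function has $j$ external bundles of type $(+-)$ formed by the pairs $(y_i^+, y_i^-)$, with essential function
\[\tilde{g}(\vec{t}) \;=\; \tilde{h}(\vec{t})^2 \,\prod_i u(1 - t_i).\]
When $u \in F_1^\mathscr{A}$, the tensor $\prod_i u(1 - t_i)$ lies in $\mathscr{A}$, hence $\tilde{g} \in \mathscr{A}$ and $g \in F_j^\mathscr{A}(+-)$, as required.

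When $u \in F_1^\alpha$, a single $u$ per bundle would instead push $\tilde{g}$ into $\mathscr{A}^\alpha$; the remedy is to attach a second copy of $u$ to each external $y_i^-$ as well, producing a factor $u(1 - t_i)^2$ per bundle. Since $u \in F_1^\alpha$ means $u = [p, q]$ with $q/p = \alpha \cdot i^c$, the squared unary $u^2 = [p^2, q^2]$ has ratio $i^{2c+1}$, a power of $i$, so $u^2 \in \mathscr{A}$ and again $\tilde{g} \in \mathscr{A}$, giving $g \in F_j^\mathscr{A}(+-)$. In either case, invoking Lemma~\ref{lemma: together two +-} on $g \in F_j^\mathscr{A}(+-)$ and a fresh $h \in F_j^\alpha(+-)$ (the $j = k$ case) yields the desired function in $F_j^\alpha(++)$. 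The main subtlety is tracking the $\alpha$-parity of the essential function through the gadget identifications; the decisive cancellation is that squaring $\tilde{h}$ removes its $\alpha$ factor, which is precisely what allows the auxiliary $g$ to sit in the $\mathscr{A}$ class needed to apply the previous lemma, with the parity of $u$-copies per bundle chosen to match the class of $u$.
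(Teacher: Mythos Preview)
Your core algebraic observation --- that $\tilde{h}^2 \in \mathscr{A}$ because squaring the factor $\alpha^{\sum t_i}$ carried by $\tilde{h}\in\mathscr{A}^\alpha$ yields $i^{\sum t_i}\in\mathscr{A}$ --- is correct. The gap is in the gadget construction: the function $g$ you build does \emph{not} have bundle type $(+-)$, so the reduction to Lemma~\ref{lemma: together two +-} does not go through.

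In the Holant formulation of $\#\mathrm{CSP}_2^c$ used throughout (namely $\mathrm{Holant}(\mathscr{F}\cup\{\Delta_0,\Delta_1,=_2,=_4,\ldots\})$), an edge has exactly two endpoints. A variable that is ``input to both $h_A$ and $h_B$'' is an internal edge and gets summed out; your $y_i^+$ is therefore not external, and the only external variable left in bundle $i$ is $y_i^-$, giving type $(-)$. If you instead share $y_i^+$ through a copy of $(=_4)$, two dangling copies survive and the type becomes $(++-)$, which collates back to $(-)$. There is no $(=_3)$ available to leave a single dangling copy. The same problem bites your remedy in the $F_1^\alpha$ case: attaching $u$ directly to $y_i^-$ makes that variable internal as well. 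Since the triple and collation operations preserve the odd/even parity of a bundle, the odd type $(-)$ your gadget actually produces cannot be converted to the even type $(+-)$ that Lemma~\ref{lemma: together two +-} requires, and combining with a fresh $h\in F_j^\alpha(+-)$ only yields odd types such as $(+)$ rather than the target $(++)$.

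The paper's proof is both simpler and avoids the previous lemma entirely. Starting from $h\in F_j^\alpha(+-)$, triple the $(+)$-variable in each bundle via $(=_4)$ to obtain type $(+++-)$; then in each bundle cap one $(+)$-variable and the sole $(-)$-variable each with a copy of $u$, leaving type $(++)$. The essential function is multiplied only by $\prod_i u(t_i)\,u(1-t_i)=(u(0)u(1))^j$, a nonzero constant, because the capped $(+)$ and $(-)$ variables always take complementary values on the support. Hence the essential function is unchanged up to scaling and the result lies directly in $F_j^\alpha(++)$. This argument is uniform in $u$: it uses nothing about whether $u\in F_1^\mathscr{A}$ or $u\in F_1^\alpha$ beyond $u$ being a rank-$1$ function with both values nonzero.
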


\begin{proof}
Using $(=_4)$ we change $F_j^\alpha(+-)$ to $F_j^\alpha({\rm +++-})$.
 For each bundle, we can apply a function in
 $F_1^\mathscr{A}$ to one variable labeled $(+)$ and one
variable labeled $(-)$ to get a function in  $F_j^\alpha({\rm ++})$.
We can also apply a function in 
 $F_1^\alpha$ to one variable labeled $(+)$ and one
variable labeled $(-)$   in each bundle 
to get a $F_j^\alpha({\rm ++})$ function.
Note that in the latter case, the pair of variables labeled $(+)$ and
$(-)$ in a single bundle will always take opposite values
on the support and therefore the aggregate modifcation
by $F_1^\alpha$ is a constant factor, thus it does not change
the membership of its essential function in $F_j^\alpha$.
\end{proof}

\begin{lemma} \label{lem: remove cc}
For any set of constraint functions $\mathscr{F}$,
let $\mathscr{F}(++) = \{f(++) \mid f \in \mathscr{F}\}$. Then
\[
\#{\rm CSP}(\mathscr{F}) \leq_{\rm T} \#{\rm CSP}_2(\mathscr{F}(++))
 \leq_{\rm T}  \#{\rm CSP}_2^c(\mathscr{F}(++)).\]
\end{lemma}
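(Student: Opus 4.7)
The second inequality $\#{\rm CSP}_2(\mathscr{F}(++)) \leq_{\rm T} \#{\rm CSP}_2^c(\mathscr{F}(++))$ is immediate from the definitions. Since $\#{\rm CSP}_2^c$ is just $\#{\rm CSP}_2$ with the extra pinning functions $\Delta_0,\Delta_1$, any instance of the former is literally an instance of the latter that never invokes the pinning gadgets. I would dispose of this direction in one sentence.

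The substantive part is the first inequality, and my plan is to implement precisely the duplication trick already used informally in the paper (the reduction $\#{\rm CSP}(f_7^\alpha) \leq_T \#{\rm CSP}_2^c(f_{15}, f_7^\alpha(+-))$ described in the Techniques by Examples subsection). Given an instance $I$ of $\#{\rm CSP}(\mathscr{F})$ on variables $X$, I would build an instance $I'$ of $\#{\rm CSP}_2(\mathscr{F}(++))$ as follows: for every constraint $f \in \mathscr{F}$ of arity $n$ appearing in $I$, replace it by $f(++)$ of arity $2n$; at each input slot $j$ of $f$, which was originally occupied by some variable $x$, place the same variable $x$ twice, once in slot $2j-1$ and once in slot $2j$ of $f(++)$. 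Each variable $x$ that originally occurred $k$ times thus occurs $2k$ times in $I'$, so the parity requirement of $\#{\rm CSP}_2$ is satisfied.

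Correctness is a short calculation from the definition of $f(++)$. For any assignment $\sigma' : X \to \{0,1\}$ of $I'$, the paired slots at every constraint are forced to carry equal values (which they automatically do, being the same variable), and on such assignments $f(++)$ evaluates to $f$ of the common value. Conversely, any assignment that fails such a pair equality contributes zero. Hence
\[
\sum_{\sigma'} \prod_{i} f_i(++)\bigl(\sigma'|_{\cdot}\bigr)
\;=\;
\sum_{\sigma} \prod_{i} f_i\bigl(\sigma|_{\cdot}\bigr),
\]
so the two partition functions agree and $\#{\rm CSP}(\mathscr{F}) \leq_T \#{\rm CSP}_2(\mathscr{F}(++))$.

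I do not anticipate any genuine obstacle: the lemma is a bookkeeping statement asserting that the $(++)$-doubling encoding faithfully carries ordinary $\#{\rm CSP}$ into the $\#{\rm CSP}_2^c$ world. Its intended use downstream is to pull back hardness from Theorem \ref{thm old csp dichotomy}: once the regularization lemmas produce an $f$ with $f(++) \notin \mathscr{P} \cup \mathscr{A}$, the chain of inequalities established here lets us conclude $\#\text{P}$-hardness of $\#{\rm CSP}_2^c$ via the known dichotomy for plain $\#{\rm CSP}$.
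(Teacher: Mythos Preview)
Your proposal is correct and is essentially the same argument as the paper's: double every variable--constraint incidence and replace each $f$ by $f(++)$, so that each variable now appears an even number of times and the partition function is unchanged. The paper states this in one sentence (``replace each edge by two parallel edges''), and your write-up simply spells out the same construction with a bit more bookkeeping.
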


\begin{proof}
In an instance of \#{\rm CSP}$(\mathscr{F})$
if we  replace each edge by two parallel edges,
and replace each occurrence of any $f \in \mathscr{F}$
by $f(++) \in \mathscr{F}(++)$,
 we get an instance in  \#{\rm CSP}$_2(\mathscr{F}(++))$,
 and they have the same value.
\end{proof}

\vspace{.1in}

\begin{proof}[Proof of \#P-hardness part of Theorem \ref{thm csp2 theorem}:]~\\

We have $\mathscr{F} \not \subseteq \mathscr{P}$, $\mathscr{F} \not \subseteq \mathscr{A}$, $\mathscr{F} \not \subseteq  \mathscr{A}^\alpha$ and $\mathscr{F} \not \subseteq \mathscr{L}$.
By Lemma \ref{lemma not product type}, 
we can realize a function  $p{\rm ++}$ in 
  the setting 
\#{\rm CSP}$_2^c (\mathscr{F})$, such that $p \not \in \mathscr{P}$.
Now the idea is to obtain a function from $\overline{\mathscr{A}}{\rm ++}$,
then we can apply Lemma~\ref{lem: remove cc}. This will allow us to apply
Theorem~\ref{thm old csp dichotomy} to prove \#P-hardness.

 Lemmata~\ref{lemma not A pin to 321}, \ref{lemma not A-alpha  pin to 321} and \ref{lemma reduce L}  tell us  respectively
 what we can get from  
$\mathscr{F} \not \subseteq \mathscr{A}$, 
$\mathscr{F} \not \subseteq  \mathscr{A}^\alpha$ and 
$\mathscr{F} \not \subseteq \mathscr{L}$.
 If one of the lemmas brings us as one of the  direct outcomes
 a function $f{\rm ++}$ in 
 $\overline{\mathscr{A}}{\rm ++}$, together with $p{\rm ++}$, 
we have \#{\rm CSP}$(\{p,f\}) \leq_{\rm T}$ \#{\rm CSP}$_2^c (\mathscr{F})$ 
by Lemma \ref{lem: remove cc}. Then by Theorem~\ref{thm old csp dichotomy}, 
we have proved that \#{\rm CSP}$_2^c (\mathscr{F})$ is $\#$P-hard.

So we may assume 
the direct outcomes of the three lemmas contain
 no function in $\overline{\mathscr{A}}{\rm ++}$.
 We analyze the possible combinations of outcomes,
and still construct a function in  $\overline{\mathscr{A}}{\rm ++}$ 
 to finish the proof.

If the outcomes contain no functions
belonging to some $F_s^\alpha(+-)$ ($s \in \{1,3,7\}$), 
then by the outcomes of Lemma~\ref{lemma not A pin to 321} and
Lemma~\ref{lemma not A-alpha  pin to 321} for the cases of
reducing $\overline{\mathscr{A}}$ and 
$\overline{\mathscr{A}^\alpha}$ respectively,
 there must be both a function in
$F_k^\alpha(+)$ ($k \in \{1,3,7\}$) and a function in
 $F_j^\mathscr{A}(+)$ ($j \in \{1,3,7\}$).
By Lemma~\ref{lemma: together two +}, we can realize some
function in  $F_{s}^\alpha({\rm ++})$ ($s \in \{1,3,7\}$).
But any function from  $F_{s}^\alpha({\rm ++})$ ($s \in \{1,3,7\}$)
is from  $\overline{\mathscr{A}}{\rm ++}$, and so
\#{\rm CSP}$_2^c (\mathscr{F})$ is \#P-hard.

Now suppose the outcomes of
 Lemma~\ref{lemma not A pin to 321} and
Lemma~\ref{lemma not A-alpha  pin to 321} 
 contain a function in $F_j^\alpha(+-)$ ($j \in \{1,3,7\}$).
 By Lemma~\ref{lemma reduce L}, either we have a 
function in $F_k^\mathscr{A}(+-)$ ($k \in \{1,3,7\}$), 
or we have a function in  $F_1^\mathscr{A}$, or
  we have a function in  $F_1^\alpha$.
In the first case, by Lemma~\ref{lemma: together two +-} 
we realize some function in $F_{s}^\alpha({\rm ++})$ ($s \in \{1,3,7\}$).
In the second and third cases,
 by Lemma~\ref{lemma: together one alpha +- one unary}
we can also realize some function in $F_{s}^\alpha({\rm ++})$ ($s \in \{1,3,7\}$).
As noted above, any function in $F_{s}^\alpha({\rm ++})$ 
is from  $\overline{\mathscr{A}}{\rm ++}$.
Therefore \#{\rm CSP}$_2^c (\mathscr{F})$ is \#P-hard.
This copletes the proof of Theorem~\ref{thm csp2 theorem}.
%
%
\end{proof}

\section{Complexity dichotomy theorem of Holant$^c$}

We use a  $2\times 4$ matrix  to denote a function of arity $3$,
with rows indexed by $x_1 =0, 1$ and columns indexed by $x_2x_3 =00,01,10,11$,
thus
$f= \begin{pmatrix} f^{000} &  f^{001} & f^{010} & f^{011} \\
                 f^{100} &  f^{101} & f^{110} & f^{111}
\end{pmatrix}$.

We say a function is a generalized {\sc Equality} if its support
is a pair of antipodal points $\{{\bf x}, \overline{{\bf x}}\}$.
A binary  function is a  generalized {\sc Disequality}
if it has support $\{01, 10\}$, i.e.,  $f = (0,a,b,0)$ with $ab \not =0$.
\begin{lemma}\label{lemma-odd-equality}
Let $f\in \mathscr{F}$ be a  generalized {\sc Equality} of arity $3$.  Then {\rm Holant}$^c(\mathscr{F})$ is \#P-hard unless $\mathscr{F} \subseteq \mathscr{A}$, $\mathscr{F} \subseteq \mathscr{A}^\alpha$ or $\mathscr{F} \subseteq \mathscr{P}$. In
all three exceptional cases, the problem is in P (and belongs to the tractable families for \#{\rm CSP}$_2^c$).
\end{lemma}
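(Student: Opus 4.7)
The strategy is to reduce $\#\text{CSP}_2^c(\mathscr{F})$ to $\text{Holant}^c(\mathscr{F})$ by realizing every even-arity equality from $f$, $\Delta_0$ and $\Delta_1$, and then to invoke Theorem~\ref{thm csp2 theorem} after ruling out the $\mathscr{L}$ case. That the lemma lists only three exceptional classes, omitting $\mathscr{L}$, is the hint that this is the right route.

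First I would check that $f\notin\mathscr{L}$. Since $f$ has affine support of dimension $1$ on three inputs, no matter which of its variables is chosen as the free one, the $3\times 1$ incidence matrix $(a_{ij})$ that appears in Theorem~\ref{thm: Local affine} has every entry equal to $1$. Condition~\eqref{eqn-homogeneous-deg-1} then demands $\sum_{i=1}^{3} a_{i1}\equiv 0\pmod{2}$, which fails because the sum is $3$. So $f\notin\mathscr{L}$ and, in particular, $\mathscr{F}\not\subseteq\mathscr{L}$.

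Next I would realize every standard $(=_{2k})$ inside $\text{Holant}^c(\{f,\Delta_0,\Delta_1\})$. Write $\text{supp}(f)=\{\mathbf{x}^*,\overline{\mathbf{x}^*}\}$ with nonzero values $a,b$. The gadget $h(y_1,y_2)=\sum_{z,w} f(y_1,z,w)\, f(y_2,z,w)$ has support $\{00,11\}$ (after reading off which half of $\text{supp}(f)$ each external variable lies in) with values $(a^2,b^2)$. Viewing $h$ as the diagonal $2\times 2$ matrix $\mathrm{diag}(a^2,b^2)$, its $n$-fold iterate is $\mathrm{diag}(a^{2n},b^{2n})$. If $b/a$ is a root of unity of order $m$, the $m$-th iterate is already a scalar multiple of $(=_2)$; otherwise the values $\{(b/a)^{2n}\}_{n\ge 1}$ are distinct, and the standard Holant polynomial-interpolation argument---replacing copies of the gadget by its iterates and solving a Vandermonde system of instance-evaluations---recovers standard $(=_2)$ in polynomial time. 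For $k\ge 2$, wire together copies of $f$ by $(=_2)$-edges, with $f$ serving as the branching vertex, to obtain a generalized $(=_{2k})$ with values of the form $(a^N,b^N)$, and apply the same iteration-or-interpolation trick to produce standard $(=_{2k})$.

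Together these steps give $\#\text{CSP}_2^c(\mathscr{F})\leq_{\rm T} \text{Holant}^c(\mathscr{F})$. Theorem~\ref{thm csp2 theorem} then yields $\#$P-hardness unless $\mathscr{F}$ lies in one of $\mathscr{P},\mathscr{A},\mathscr{A}^\alpha,\mathscr{L}$, and the first step eliminates $\mathscr{L}$, leaving exactly the three exceptional cases claimed by the lemma. In each of those the $\#\text{CSP}_2^c$ polynomial-time algorithm also solves $\text{Holant}^c(\mathscr{F})$, because every $\text{Holant}^c$ instance is a $\#\text{CSP}_2^c$ instance with every edge-variable of degree two. The main obstacle I foresee is executing the interpolation when $b/a$ is not a root of unity: one must set up the Vandermonde-type system of iterated-gadget evaluations so that any target weighted $(=_{2k})$ is actually recovered in polynomial time. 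Everything else is gadget bookkeeping together with an appeal to the already-established $\#\text{CSP}_2^c$ dichotomy.
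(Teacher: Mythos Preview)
Your approach is correct and genuinely different from the paper's proof.

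The paper does not go through \#CSP$_2^c$ at all. Instead it first normalizes $f$ to a symmetric $[1,0,0,b]$ (using a self-loop and a generalized {\sc Disequality} if the support is not $\{000,111\}$), then produces $[1,b]$, $[1,0,b^2]$, $[1,0,0,b^3]$, and applies the holographic transformation $T=\mathrm{diag}(1,b)$ so that the left side becomes $(=_1),(=_2),(=_3)$. It then invokes the external dichotomy for \#CSP with each variable appearing at most three times~\cite{CaiLX14} on $T\mathscr{F}\cup\{[1,0,b^2]\}$ and does a short case analysis on $b^2\in\{\pm 1,\pm i\}$ to translate the $\mathscr{P}/\mathscr{A}$ conditions back to $\mathscr{P}$, $\mathscr{A}$, or $\mathscr{A}^\alpha$. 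No interpolation and no appeal to Theorem~\ref{thm csp2 theorem} are used.

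Your route---build all even equalities from $f$ (by chaining and the root-of-unity/interpolation dichotomy), reduce to \#CSP$_2^c(\mathscr{F})$, apply Theorem~\ref{thm csp2 theorem}, and kill the $\mathscr{L}$ case by observing that the $3\times 1$ matrix $(a_{ij})=(1,1,1)^{\tt T}$ violates~\eqref{eqn-homogeneous-deg-1}---is self-contained within the paper and avoids the external bounded-degree \#CSP dichotomy. The details you flag as ``gadget bookkeeping'' are indeed routine: connecting two copies of $f$ along one input gives a generalized $(=_4)$ with values $(a^2,b^2)$, and from there the argument is essentially that of Lemma~\ref{lemma-even-equality}. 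One small remark: interpolating $(=_2)$ is vacuous (the edge already is $(=_2)$); what matters is realizing or interpolating $(=_{2k})$ for $k\ge 2$, which your chain-then-interpolate description does cover.

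One thing the paper's proof buys that yours does not: after obtaining $[1,0,0,b]$, the paper's argument no longer uses $\Delta_0,\Delta_1$. This stronger fact is explicitly invoked later in the proof of Lemma~\ref{lemma-ab-a-b}, where the conclusion of Lemma~\ref{lemma-odd-equality} is applied to a plain Holant (not Holant$^c$) problem. Your reduction to \#CSP$_2^c$ genuinely needs $\Delta_0,\Delta_1$ on the Holant side, so it proves the lemma exactly as stated but not that strengthened form.
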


\begin{proof}
Let ${\rm supp}(f)$  be $\{(a_1,a_2,a_3), (1-a_1,1-a_2,1-a_3)\}$.
 If they are $000$ and $111$, then $f$ is a symmetric function $[a,0,0,b]$,
with $ab \not =0$. Otherwise, there are both $0$ and $1$ among $(a_1,a_2,a_3)$.
By renaming variables, without loss of generality, we assume that
they are $001$ and $110$.
By connecting $x_1$ and $x_2$ with a self loop, we get a unary function $[a,b]$ with $ab \neq 0$.
By connecting this unary function to $x_1$ of $f$, we get a
 generalized {\sc Disequality} function $(0,c,d,0)$,
with $cd \not =0$. By connecting this generalized {\sc Disequality}
 to $x_3$ of $f$, we obtain a symmetric generalized {\sc Equality}.
After a scaling,  in both cases, we may assume to have $[1,0,0,b]$.

Taking a self loop on $[1,0,0,b]$ we get $[1,b]$.
Connecting one $[1,b]$ back to $[1,0,0,b]$, we get  $[1,0,b^2]$.
Connection one $[1,0,b^2]$ back to $[1,0,0,b]$, we get $[1,0,0,b^3]$.
Then we have
\[
{\rm Holant}([1,b],[1,0,b^2], [1,0,0,b^3] ~\mid~ \mathscr{F} \cup \{[1,0,1]\})\leq_{\rm T}
  {\rm Holant}^c( \mathscr{F}).\]
After a holographic reduction by
$T=\begin{pmatrix} 1 &  0  \\
                0  & b
\end{pmatrix}$, the left hand side becomes $\{(=_1), (=_2), (=_3)\}$
 and the right hand side becomes $T \mathscr{F}  \cup \{[1,0,b^2]\}$.
By the dichotomy theorem for \#CSP with each variable appearing
 at most three times~\cite{CaiLX14},
we know that the problem is   \#P-hard unless $T\mathscr{F}
  \cup \{[1,0,b^2]\} \subseteq \mathscr{P}$
 or $T\mathscr{F}  \cup \{[1,0,b^2]\} \subseteq \mathscr{A}$.
 A diagonal holographic reduction keeps the class  $\mathscr{P}$
invariant, so  $T\mathscr{F}  \cup \{[1,0,b^2]\} \subseteq \mathscr{P}$
  iff $\mathscr{F} \subseteq \mathscr{P}$, as $[1,0,b^2]\in \mathscr{P}$.
If $T\mathscr{F}  \cup \{[1,0,b^2]\} \subseteq \mathscr{A}$, we have $b^2 \in \{\pm 1, \pm i \}$.
If $b^2 =\pm 1 $, the holographic transformation $T$
 also keeps the class   $\mathscr{A}$ invariant,
and so  $T\mathscr{F} \cup \{[1,0,b^2]\} \subseteq \mathscr{A}$
 iff $\mathscr{F} \subseteq \mathscr{A}$.
 If $b^2 =\pm i $, the holographic transformation is the
 $\alpha$ transformation (followed by a transformation that keeps
$\mathscr{A}$ invariant),  and so  $T\mathscr{F}  \cup \{[1,0,b^2]\}
 \subseteq \mathscr{A}$  iff $\mathscr{F} \subseteq \mathscr{A}^\alpha$.
 This completes the proof.
\end{proof}


In the above proof, once we have a symmetric generalized {\sc Equality}  $[1,0,0,b]$, we no longer need the two unary pinning functions
$\Delta_0$ and $\Delta_1$. We will use this fact later.


\begin{lemma}\label{lemma-even-equality}
Suppose $\mathscr{F}$ contains a generalized {\sc Equality} $f$ of arity $4$,
 then {\rm Holant}$^c(\mathscr{F})\equiv_{\rm T}${\rm CSP}$^2(\mathscr{F})$.
\end{lemma}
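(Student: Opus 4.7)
The direction ${\rm Holant}^c(\mathscr{F}) \leq_{\rm T} {\rm CSP}^2(\mathscr{F})$ is immediate, since every ${\rm Holant}^c(\mathscr{F})$ instance uses only constraints from $\mathscr{F} \cup \{\Delta_0, \Delta_1\}$, all of which are already permitted in ${\rm CSP}^2(\mathscr{F})$. For the converse, $\Delta_0$ and $\Delta_1$ are already present in ${\rm Holant}^c$, so it suffices to simulate every even equality $(=_{2k})$. My plan is to realize $c \cdot (=_4)$ in ${\rm Holant}^c(\mathscr{F})$ for some known nonzero scalar $c$; self-looping two inputs of $(=_4)$ then gives a scaled $(=_2)$, and the standard chaining of copies of $(=_4)$ produces a scaled $(=_{2k})$ for every $k \ge 1$. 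Every multiplicative factor picked up along the way is known and can be divided out at the end.

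Write the support of $f$ as the antipodal pair $\{{\bf x}, \overline{{\bf x}}\}$ with nonzero values $a = f({\bf x})$ and $b = f(\overline{{\bf x}})$, and case-split on the Hamming weight $w$ of ${\bf x}$. If $w = 2$ (WLOG ${\bf x} = 0011$), identifying positions $3,4$ of one copy of $f$ with positions $1,2$ of a second copy pairs each support point of one copy with the \emph{opposite} support point of the other, and a direct calculation gives $ab \cdot (=_4)$ exactly. If $w \in \{1, 3\}$ (WLOG ${\bf x} = 0001$), self-looping two positions that share a bit value in ${\bf x}$ yields the binary generalized disequality $(0, a, b, 0)$, and using this to flip the lone odd position of $f$ produces a function supported on $\{0000, 1111\}$ with values $a^2, b^2$, reducing to the next case. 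If $w \in \{0, 4\}$, then $f = [a, 0, 0, 0, b]$ is already symmetric; if $a = b$ we have $a \cdot (=_4)$ directly, otherwise we appeal to interpolation.

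For $f = [a, 0, 0, 0, b]$ with $a \ne b$, chaining $k$ copies of $f$ and self-looping the overflow variables yields a family $f_k = [a^k, 0, 0, 0, b^k]$ indexed by $k \ge 1$. In any ${\rm CSP}^2(\mathscr{F})$ instance $\Omega$ with $n$ occurrences of $(=_4)$, replacing each with $f_k$ yields an instance $\Omega_k$ whose Holant value equals $a^{kn} P(t^k)$, where $t = b/a$ and $P(x) = \sum_{j=0}^n C_j x^{n-j}$ is a polynomial of degree at most $n$ whose coefficients $C_j$ depend only on $\Omega$. When $t$ is not a root of unity, the points $t, t^2, \ldots, t^{n+1}$ are distinct, so $n+1$ oracle calls plus Lagrange interpolation recover $P(1) = {\rm Holant}(\Omega)$. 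When $t$ has finite order $m$, the single oracle call at $k = m$ yields $a^{mn} \cdot {\rm Holant}(\Omega)$ directly, since $t^m = 1$.

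The main obstacle is this interpolation step: one must detect at the outset whether $t = b/a$ is a root of unity and, if so, use the $k = m$ shortcut, because otherwise the evaluation points $t^k$ collapse and generic Lagrange interpolation fails. Since $a$ and $b$ are fixed algebraic numbers determined by $f$, this detection is a one-time computation. Everything else is standard Valiant-style interpolation together with routine bookkeeping of the (always known) scalar factors introduced by the gadget constructions in the three weight cases.
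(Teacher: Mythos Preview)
Your proof is correct and follows essentially the same route as the paper: reduce the three Hamming-weight cases to a symmetric $[a,0,0,0,b]$, then use the powers $[a^k,0,0,0,b^k]$ to either realize $(=_4)$ directly (when $b/a$ is a root of unity) or interpolate it; once $(=_4)$ is available, chain to obtain all even equalities. The only noteworthy difference is cosmetic: in the weight-$2$ case you connect positions $3,4$ of one copy to positions $1,2$ of another and get $ab\cdot(=_4)$ outright, whereas the paper connects $3,4$ to $3,4$ and lands back in the symmetric case $[a^2,0,0,0,b^2]$, so your gadget is slightly cleaner there. One phrasing to tighten: your interpolation paragraph speaks of ``any ${\rm CSP}^2(\mathscr{F})$ instance with $n$ occurrences of $(=_4)$,'' but such an instance may also contain $(=_6),(=_8),\ldots$; make explicit that you first replace every $(=_{2k})$ by a chain of $(=_4)$'s (as your opening paragraph already promises), so that the interpolation target genuinely has only $(=_4)$ among the equality gates.
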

\begin{proof}
Let ${\rm supp}(f)$  be $\{(a_1,a_2,a_3,a_4), (1-a_1,1-a_2,1-a_3, 1-a_4)\}$.
 By renaming variables,  we only need to consider
three possibilities for  $(a_1,a_2,a_3,a_4)$ : $0000,0001$ or $0011$.
 For $0000$, the function is already symmetric $[a,0,0,0, b]$,
with $ab \neq 0$. For $0001$, by connecting $x_1$ and $x_2$ with a self loop,
  we get a generalized {\sc Disequality} function.
By connecting this generalized {\sc Disequality} to $x_4$ of $f$,
we get a symmetric function of the form $[a,0,0,0, b]$ ($ab \neq 0$).
  For  $0011$, we take two copies of $f$ and connect their
variables $x_3$ and $x_4$ respectively. From this, we
 also get a symmetric function of the form $[a,0,0,0, b]$  ($ab \neq 0$).
So, after a scaling,  in all cases we may assume to have $[1,0,0,0,b]$
 ($b \neq 0$).
 By a self loop we get $[1,0,b]$. Using $k-1$ copies of $[1,0,b]$
 to connect back to  $[1,0,0,0,b]$ we get  $[1,0,0,0,b^k]$.
If  $b$ is a root of unity, we can directly
 realize $[1,0,0,0,1]$; otherwise, we can interpolate $[1,0,0,0,1]$.
 From that we can get all {\sc Equalities} of even arity.
This completes the proof.
\end{proof}

\begin{lemma}\label{lemma-pairty}
Let $f\in \mathscr{F}$ be a non-decomposable function of arity $3$
satisfying the parity condition, namely it has
 the form
$\begin{pmatrix} a &  0 & 0 & b \\
                 0 &  c & d & 0
\end{pmatrix}$ or
$\begin{pmatrix} 0 & a & b & 0 \\
                 c & 0 & 0 & d
\end{pmatrix}$.  Then {\rm Holant}$^c(\mathscr{F})$ is \#P-hard 
unless {\rm Holant}$^*(\mathscr{F})$ is tractable or
\#{\rm CSP}$_2^c(\mathscr{F})$ is tractable.
In both exceptional cases, the problem 
 {\rm Holant}$^c(\mathscr{F})$  is in P.
\end{lemma}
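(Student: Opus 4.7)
We focus on the even-parity normal form $f = \begin{pmatrix} a & 0 & 0 & b \\ 0 & c & d & 0 \end{pmatrix}$; the odd-parity case reduces to this one by composing with a generalized disequality (obtained below) on a single variable. The overall plan is to combine $f$ with the pinning functions $\Delta_0, \Delta_1$ via gadgets to realize a generalized {\sc Equality} of arity $3$ or $4$, after which Lemma \ref{lemma-odd-equality} or Lemma \ref{lemma-even-equality} completes the argument; if no such generalized {\sc Equality} can be produced, we expect to force $f$ into the matchgate family, placing $\mathscr{F}$ into the tractable class $Z\mathscr{M}$ of Theorem \ref{thm old holant-star-dichotomy}.

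First, pinning a single input of $f$ yields the binary generalized {\sc Equality} $[a,0,b]$ (pin $x_1=0$) and the generalized {\sc Disequality} $(0,c,d,0)$ (pin $x_1=1$), together with analogues obtained by permuting the inputs. Non-decomposability of $f$ ensures that enough of these auxiliary signatures are nontrivial. Chaining a generalized {\sc Disequality} onto a single input of $f$ flips that variable's parity, so the two parity cases in the statement are freely interchangeable, and from now on we work with the even-parity form.

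Next, we attempt to build a generalized {\sc Equality} of arity $3$ or $4$. The basic gadget
\[ h(x_1,x_2,y_1,y_2) \;=\; \sum_{t} f(x_1,x_2,t)\,f(y_1,y_2,t) \]
has, when viewed as a $4\times 4$ matrix indexed by $(x_1x_2,\,y_1y_2)$, the form
\[ \begin{pmatrix} a^2 & 0 & 0 & ad \\ 0 & b^2 & bc & 0 \\ 0 & bc & c^2 & 0 \\ ad & 0 & 0 & d^2 \end{pmatrix}. \]
By judiciously pinning external variables, attaching the binary auxiliaries from the previous stage, taking self-loops of $h$, and considering analogous triangle- or path-gadgets built from two or three copies of $f$, we try to collapse the support of the resulting signature to a pair of antipodal points. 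If we succeed at arity $4$, Lemma \ref{lemma-even-equality} gives $\mathrm{Holant}^c(\mathscr{F})\equiv_T \#\mathrm{CSP}_2^c(\mathscr{F})$, to which Theorem \ref{thm csp2 theorem} applies; if we succeed at arity $3$, Lemma \ref{lemma-odd-equality} directly gives either $\#$P-hardness or the tractable cases $\mathscr{F}\subseteq \mathscr{P},\mathscr{A}$, or $\mathscr{A}^\alpha$.

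The main obstacle is the case in which none of these gadgets produces a generalized {\sc Equality}. The expected structural reason is that $f$ satisfies the matchgate identity, which in the even-parity form reads $ad = bc$ (with separate case-by-case treatment when some of $a,b,c,d$ vanish but $f$ is still non-decomposable). Under this identity, the holographic transformation by $Z^{-1}$ maps $f$ into $\mathscr{M}$, placing $f\in Z\mathscr{M}$. We then split into two sub-cases: if every non-decomposable function in $\mathscr{F}$ likewise lies in $Z\mathscr{M}$ (or in one of the other Holant$^*$-tractable families compatible with $f$, namely $\mathscr{T}$, $H\mathscr{P}$, or $Z\mathscr{P}$), then $\mathrm{Holant}^*(\mathscr{F})$ is tractable by Theorem \ref{thm old holant-star-dichotomy} and the lemma holds; otherwise, some $g\in \mathscr{F}\setminus Z\mathscr{M}$ supplies non-matchgate structure that, combined with $f$ through a further gadget (again reducing to the constructions of the previous paragraph), realizes the missing generalized {\sc Equality} and returns us to the hardness branch. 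Rigorously verifying this final step---in particular showing that the matchgate identity is precisely the obstruction to building a generalized {\sc Equality}, and handling the degenerate sub-cases with vanishing entries---is where the bulk of the technical work will be concentrated.
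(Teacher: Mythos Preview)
Your plan diverges substantially from the paper's argument, and the divergence hides a real gap.

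The paper does not try to build a generalized {\sc Equality} of arity $3$ or $4$ from $f$ at all. Instead it observes that at least three of $a,b,c,d$ are nonzero (non-decomposability), normalizes to $bcd\neq 0$ using a generalized {\sc Disequality}, and then applies the \emph{triangle gadget} (three copies of $f$ with the three pairs of external edges identified in three cyclic ways). This produces three \emph{symmetric} ternary signatures
\[
[a^3+b^3,\,0,\,bcd+acd,\,0],\quad [a^3+c^3,\,0,\,bcd+abd,\,0],\quad [a^3+d^3,\,0,\,bcd+abc,\,0].
\]
Either one of the middle entries is nonzero, giving a symmetric $[z,0,1,0]$, or all three vanish, forcing $b=c=d=-a$, in which case $f$ itself is already $[z,0,1,0]$ with $z=-1$. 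From here the paper invokes the existing symmetric Holant$^c$ dichotomy to reduce to $z^4=1$, and then a holographic transformation by $T=\begin{pmatrix}\sqrt{z}&\sqrt{z}\\1&-1\end{pmatrix}$ converts $[z,0,1,0]$ into $(=_3)$, landing in \#CSP. A case split on $z\in\{1,-1,\pm i\}$ shows the tractable outcomes are $\mathscr{F}\subseteq\mathscr{A}$, $\mathscr{F}\subseteq\mathscr{A}^\alpha$, $\mathscr{F}\subseteq T\mathscr{P}$, or $\mathscr{F}\subseteq Z\mathscr{P}$.

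Your proposal has two concrete problems. First, the ``expected structural reason'' you name for the obstruction, the matchgate identity $ad=bc$ and the class $Z\mathscr{M}$, is not what actually arises. The tractable families that emerge in the paper's analysis are $\mathscr{A}$, $\mathscr{A}^\alpha$, and the Holant$^*$ product-type families $T\mathscr{P}$, $Z\mathscr{P}$; the matching class $Z\mathscr{M}$ never appears. Indeed, in the one fully explicit obstruction case $b=c=d=-a$ one has $ad=-a^2\neq a^2=bc$, so $ad=bc$ fails there. Second, your argument is explicitly a plan with the hard step (``showing that the matchgate identity is precisely the obstruction'') deferred; since that identification is wrong, the deferred step cannot be completed as stated. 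The triangle-gadget route in the paper sidesteps all of this by immediately producing a symmetric signature, so that the entire classification burden is offloaded to the known symmetric dichotomy rather than to an ad hoc analysis of when your $h$-gadget collapses.
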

\begin{proof}
Because any pair of antipodal points in $\{0, 1\}^3$ has opposite
parity, if there are at most two nonzeros among  $a,b,c,d$,
they would belong to a same subcube $\{0, 1\}^2$, thus $f$ would
be decomposable.
Since $f$ is non-decomposable, at least three of $a,b,c,d$ are non-zero.
By the parity condition,
some subcube $\{0, 1\}^2$ has exactly two nonzero values,
Without loss of generality suppose it is the subcube $x_3 = \epsilon$
(where $\epsilon  \in \{0,1\}$). If the two nonzero values have indices
$01\epsilon$ and $10\epsilon$, then we have
 a  generalized {\sc Disequality} by pinning $x_3 = \epsilon$.
If  the two nonzero values have indices $00\epsilon$ and $11\epsilon$,
then either $01\overline{\epsilon}$ or $10\overline{\epsilon}$ have
nonzero values. Then pinning $x_2 = 1$ or $x_1 = 1$ respectively
produces a generalized {\sc Disequality}.
Use the {\sc Disequality} to flip bits, we can change $f$
 to the form $\begin{pmatrix} a &  0 & 0 & b \\
                 0 &  c & d & 0
\end{pmatrix}$   with $bcd\neq 0$.
Using the triangle gadget, we can get three symmetric functions:
\[[a^3+b^3, 0, bcd+acd, 0], [a^3+c^3, 0, bcd+abd, 0], [a^3+d^3, 0, bcd+abc, 0]. \]
Note that by labeling in three different and cyclically symmetric ways
in the triangle gadget
we get these three functions (on the Boolean domain,
a cyclically symmetric ternary function is symmetric).
If at least one of the three values
 $bcd+acd, bcd+abd, bcd+abc$ is nonzero we get a
symmetric function of the form $[z,0,1,0]$, for some $z \in \mathbb{C}$.
 Otherwise, we have $b=c=d=-a$, in which case the original function
$f$ is already in this form $[z,0,1,0]$ after a nonzero scaling.

By dichotomy theorem for symmetric Holant$^c$~\cite{bib:holantc},  we know that Holant$^c([z,0,1,0])$ is \#P-hard unless $z^4= 1$.
Now we assume that $z^4= 1$. After pinning we get
the  binary $[z,0,1]$. Connection three copies we get $[z^3,0,1]$.
 Let $T=\begin{pmatrix} \sqrt{z} &  \sqrt{z}\\
                1  & -1
\end{pmatrix}$, we have $[z^3,0,1] T^{\otimes 2} = [1,0,1]$,
 and  $[z,0,1,0] = T^{\otimes 3} [1,0,0,1] $ up to a nonzero factor
$2 \sqrt{z}$.
 So, we have the following reduction
\begin{eqnarray*}
& & {\rm Holant}( \{[1, 0, 0, 1], [z^{-1}+1,z^{-1}-1, z^{-1}+1]\} \cup T^{-1}\mathscr{F})\\
&\equiv_{\rm T} & {\rm Holant}( [1, 0, 1] \mid \{[1, 0, 0, 1], [z^{-1}+1,z^{-1}-1, z^{-1}+1] \} \cup T^{-1}\mathscr{F}) \\
&\equiv_{\rm T} & {\rm Holant}( [z^3,0,1] | \{[z,0,1, 0], [1,0,1]\} \cup \mathscr{F}) \\
& \leq_{\rm T} & {\rm Holant}^c(\mathscr{F}).
\end{eqnarray*}
Having the arity 3 {\sc Equality} $(=_3) = [1, 0, 0, 1]$ in a Holant
problem allows us to get {\sc Equality} of all arities, and thus
we can apply the \#CSP dichotomy on $T^{-1}\mathscr{F}\cup \{ [z^{-1}+1,z^{-1}-1, z^{-1}+1] \}$.
\begin{itemize}
  \item Case $z=1$. Then $T=\begin{pmatrix} 1 &  1\\
                1  & -1
\end{pmatrix}$, an orthogonal matrix
(up to a scalar $1/\sqrt{2}$)
 that belongs to the stabilizer group
 of  $\mathscr{A}$. If $T^{-1}\mathscr{F} \subseteq \mathscr{A} $, 
then  $\mathscr{F} \subseteq \mathscr{A}$. If $T^{-1}\mathscr{F} \subseteq \mathscr{P} $, then  $\mathscr{F} \subseteq T \mathscr{P}$, a tractable family for Holant$^*$. For all other $\mathscr{F}$, it is \#P-hard.
  \item Case $z=-1$. Then $T=\begin{pmatrix} i &  i\\
                1  & -1
\end{pmatrix} =
i \begin{pmatrix} 1 &  1\\
                i  & -i
\end{pmatrix} 
\begin{pmatrix} 0 &  1\\
                1  & 0
\end{pmatrix}$ is essentially the $Z$ transformation.
Note that $\begin{pmatrix} 0 &  1\\
                1  & 0
\end{pmatrix}$ belongs to the stabilizer groups of both  
$\mathscr{A}$ and $\mathscr{P}$.
If $T^{-1} \mathscr{F} \subseteq \mathscr{A}$,
then $\mathscr{F} \subseteq Z \mathscr{A} = \mathscr{A}$.
Hence \#{\rm CSP}$(\mathscr{F})$ is tractable, in particular,
\#{\rm CSP}$_2^c(\mathscr{F})$ is tractable.
  If $T^{-1}\mathscr{F} \subseteq \mathscr{P} $, then  $\mathscr{F} \subseteq
 Z \mathscr{P}$, a tractable family for Holant$^*$. In all other cases, it is \#P-hard.
  \item Case $z=\pm i$. Then $T=\begin{pmatrix} \alpha^c &  \alpha^c\\
                1  & -1
\end{pmatrix}$ for some odd $c$. In this case, $[z^{-1}+1,z^{-1}-1, z^{-1}+1]  \not \in \mathscr{P}$, so the only possible tractable case is
   $T^{-1}\mathscr{F} \subseteq \mathscr{A} $. Then it is easy to see that
   $\mathscr{F} \subseteq \mathscr{A}^\alpha$,
 a tractable family for \#CSP$_2^c$.
In all other cases, it is \#P-hard.
\end{itemize}
This completes the proof.
%
\end{proof}

In the above three lemmas, we stated and proved them for general complex valued functions.
In the following lemmas, functions are real valued,
 which is important for our interpolation to succeed.
We first define the following notion of non-interpolatable.

\begin{definition}
Let $ab\neq 0$ be two real numbers. A binary function is called non-interpolatable if it is of the form
$\begin{pmatrix} a &  b  \\
                -b  & a
\end{pmatrix}$ or
$\begin{pmatrix} a &  b  \\
                b  & -a
\end{pmatrix}$.
\end{definition}

Non-interpolatable $2 \times 2$ matrices are just
nonzero multiples of orthogonal matrices with nonzero entries.

\begin{lemma}\label{lemma-interpolation}
Let $\begin{pmatrix} a &  b  \\
                c  & d
\end{pmatrix} \in \mathscr{F}$ be
a real valued binary function with
$a b\neq 0$ and $ad\neq bc$ (non-degenerate).
Unless it is non-interpolatable,
we have
{\rm Holant}$^c(\mathscr{F})\equiv_{\rm T}${\rm Holant}$^*(\mathscr{F})$.
\end{lemma}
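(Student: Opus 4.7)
The plan is to realize every unary function as a gadget built from $M=\begin{pmatrix}a & b\\ c & d\end{pmatrix}$ and $\Delta_0,\Delta_1$; this gives ${\rm Holant}^*(\mathscr{F})\leq_{\rm T}{\rm Holant}^c(\mathscr{F})$, with the reverse direction being trivial. The key construction is the binary gadget
\[
N \;=\; M M^T \;=\; \begin{pmatrix} a^2+b^2 & ac+bd\\ ac+bd & c^2+d^2\end{pmatrix},
\]
obtained by taking two copies of $M$ and identifying one pair of corresponding inputs via $(=_2)$. Since $M$ is real and $\det M=ad-bc\neq 0$, $N$ is real symmetric and positive definite.

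The first step is to observe that $N$ is a scalar matrix iff $a^2+b^2=c^2+d^2$ and $ac+bd=0$, which together are exactly the conditions for $M$ to be a nonzero scalar multiple of a $2\times 2$ orthogonal matrix, i.e., a scaled rotation $\begin{pmatrix}a & b\\ -b & a\end{pmatrix}$ or a scaled reflection $\begin{pmatrix}a & b\\ b & -a\end{pmatrix}$. These are exactly the two non-interpolatable forms excluded by the hypothesis, so under our assumption $N$ has two distinct positive real eigenvalues $\mu_1>\mu_2>0$ with orthogonal real eigenvectors $e_1,e_2$ spanning $\mathbb{R}^2$.

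Next I produce a sequence of unary gadgets by iteration. If $ac+bd\neq 0$ then $\Delta_0$ is not an eigenvector of $N$ and I take $u_i=N^i\Delta_0$. Otherwise $N$ is diagonal with eigenvectors $\Delta_0,\Delta_1$; but then the hypotheses $ab\neq 0$, $\det M\neq 0$, and $ac+bd=0$ together force $c\neq 0$, so $M\Delta_0=[a,c]^T$ has both components nonzero and is not an eigenvector of the diagonal $N$, and I take $u_i=N^i(M\Delta_0)$ instead. In either case $u_i=\alpha_1\mu_1^i e_1+\alpha_2\mu_2^i e_2$ with $\alpha_1\alpha_2\neq 0$.

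Finally, I apply standard polynomial interpolation. For any ${\rm Holant}^*$ instance $\Omega$ in which an unknown unary $v$ appears at $n$ edges, replacing each copy of $v$ by $u_i$ gives a Holant value
\[
H_i \;=\; \sum_{k=0}^n \mu_1^{ik}\,\mu_2^{i(n-k)}\,A_k,
\]
where the coefficients $A_k$ depend only on $\Omega$ and the fixed basis $\{e_1,e_2\}$. Because $\mu_1/\mu_2$ is a positive real different from $1$, the ratios $(\mu_1/\mu_2)^i$ for $i=0,1,\ldots,n$ are pairwise distinct; the corresponding generalized Vandermonde system is invertible and the $A_k$ are recoverable in polynomial time. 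From them the Holant value of $\Omega$ with $v$ in place is obtained by one further linear combination, and iterating this procedure once per distinct unary in the ${\rm Holant}^*$ instance completes the reduction. The main care required is in the diagonal subcase above: verifying from the bare hypotheses that the fallback starting vector $M\Delta_0$ really has nonzero projection on both eigenvectors of $N$, so that the multilinear expansion producing $H_i$ is genuinely rank $n+1$ and the interpolation is not secretly collapsing.
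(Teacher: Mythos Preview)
Your proof is correct and follows the same core idea as the paper: form the real positive-definite symmetric signature $N=MM^{\mathsf T}$, note that it fails to be a scalar matrix precisely when $M$ is not non-interpolatable, and use its two distinct positive eigenvalues to interpolate all unaries. The one difference is in the edge case where $N$ is diagonal ($ac+bd=0$): you salvage the interpolation by switching the starting vector from $\Delta_0$ to $M\Delta_0=[a,c]^{\mathsf T}$ (verifying $c\neq 0$ from $ab\neq 0$ and $ad\neq bc$), whereas the paper instead switches to the second symmetric matrix $M^{\mathsf T}M$, whose off-diagonal entry is $ab+cd$, and observes that $ac+bd=ab+cd=0$ together with $ab\neq 0$ force the non-interpolatable form. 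Both fixes are equally valid; the paper's lets it appeal to a single cited interpolation lemma uniformly, while yours keeps everything to one matrix at the cost of a separate starting vector in the diagonal subcase.
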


\begin{proof}
By a Lemma 5.3 of \cite{CaiLX11-holant}, we can use a non-degenerate symmetric real
valued binary function $\begin{pmatrix} x &  y  \\
                y  & z
\end{pmatrix}$ and two unary $[0,1], [1,0]$ to interpolate all unary functions unless $y=0$ or $x+z=0$ (the conditions guarantee that the two
eigenvalues  have different  nonzero norm, and  $[0,1], [1,0]$
are not two eigenvectors).
From the binary function $\begin{pmatrix} a &  b  \\
                c  & d
\end{pmatrix}$, we can get two  non-degenerate symmetric real
valued binary functions
$\begin{pmatrix} a^2+b^2 &  ac+bd  \\
                ac+ bd  & c^2+d^2
\end{pmatrix}$ and
$\begin{pmatrix} a^2+c^2 &  ab+cd  \\
                ab+ cd  & b^2+d^2
\end{pmatrix}$.
 Since $a^2+b^2+c^2+d^2\neq 0$, we are done unless $ac+bd=0$ and  $ab+cd =0$.
Since $ab\neq 0$, this implies that $c=b, d=-a$ or
$c=-b, d=a$, which are non-interpolatable.
\end{proof}

\begin{lemma}\label{lemma-ab-a-b}
Let $f\in \mathscr{F}$ be a real valued
 function of arity $3$ such that each  of six pinnings $f^{x_i = 0}$,
 $f^{x_i = 1}$ ($1 \le i \le 3$)
produces a   non-interpolatable binary function.
 Then {\rm Holant}$^c(\mathscr{F})$ is \#P-hard
 unless $\mathscr{F}$ is  a tractable family for {\rm Holant}$^*$ or \#{\rm CSP}
(the latter condition certainly implies tractablility for \#{\rm CSP}$_2^c$).
\end{lemma}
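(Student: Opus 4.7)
The plan is to leverage the highly restrictive hypothesis---each of the six pinnings of $f$ being a nonzero scalar multiple of a real $2\times 2$ orthogonal matrix with four nonzero entries---to reduce ${\rm Holant}^c(\mathscr{F})$ to either the symmetric Holant$^c$ dichotomy from prior work or to \#CSP$_2^c(\mathscr{F})$ via Theorem~\ref{thm csp2 theorem}.

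First I would carry out a short case analysis on the signs of the six pinnings. Each pinning has the form $\left(\begin{smallmatrix} a & b \\ -b & a \end{smallmatrix}\right)$ or $\left(\begin{smallmatrix} a & b \\ b & -a \end{smallmatrix}\right)$ with $ab \neq 0$; the forced coincidences across the three axes (equating entries of $f^{x_1=\epsilon}$ with entries of $f^{x_2=\epsilon'}$ and $f^{x_3=\epsilon''}$) rapidly collapse $f$, up to permutation of the three inputs and an overall scalar, into one of a handful of canonical ``rotation-family'' shapes, each parametrised by two real numbers $a,b$ with $a^2+b^2 \neq 0$. In particular, in one representative case the column vectors of the unary restrictions $u_{x_1 x_2}(x_3)$ satisfy $u_{x_1 x_2} = R^{x_1+x_2} u_{00}$ where $R$ is a $90^\circ$ rotation, reflecting a hidden complex structure.

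Next I would build a symmetric ternary auxiliary signature by the triangle gadget $h(x,y,z) = \sum_{u_1,u_2,u_3} f(x,u_1,u_2)\, f(y,u_2,u_3)\, f(z,u_3,u_1)$, realisable in \#{\rm Holant}$^c(\mathscr{F})$. By construction $h$ is cyclically symmetric in $x,y,z$, and on the Boolean domain cyclic symmetry on an arity-3 signature already forces full symmetry (the two nontrivial $C_3$-orbits on $\{0,1\}^3$, the Hamming weight~1 and weight~2 triples, coincide with the corresponding $S_3$-orbits). Thus $h=[h_0,h_1,h_2,h_3]$, and its four entries can be evaluated via traces such as $h_0={\rm Tr}(M_0^3)$ and $h_3={\rm Tr}(M_1^3)$, where $M_\epsilon = f^{x_1=\epsilon}$; the orthogonal-rotation structure forces a very special form for $h$ (in the canonical cases, $h_{k+2} = -h_k$, i.e.\ $h=[c_1,c_2,-c_1,-c_2]$).

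With $h$ realised and symmetric, I would invoke the symmetric Holant$^c$ dichotomy established in the cited prior work~\cite{bib:holantc,realholant}, together with Lemmata~\ref{lemma-odd-equality}, \ref{lemma-even-equality}, and \ref{lemma-pairty}. Either $h$ lies outside every tractable symmetric class, yielding \#P-hardness of \#{\rm Holant}$^c(\{h\}\cup\mathscr{F})$ and hence of \#{\rm Holant}$^c(\mathscr{F})$, or $h$ lands in a tractable symmetric class; in the latter case, undoing the orthogonal holographic transformation naturally built from $M_0/\sqrt{a^2+b^2}$ shows $\mathscr{F}$ sits inside either $H\mathscr{P}$ for an orthogonal $H$ (a Holant$^*$ tractable family) or $\mathscr{P}$ or $\mathscr{A}$. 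In parallel, I would note that the self-loop gadget on any non-interpolatable pinning $M$ yields $MM^{\tt T} = \lambda I_2$, giving the binary equality, and a one-edge merge of two copies of $f$ gives a 4-ary signature whose support is affine and whose compressed values are a quadratic sign pattern, i.e.\ an $\mathscr{A}$-signature; these tools let us simulate enough of the \#CSP$_2^c$ setting to invoke Theorem~\ref{thm csp2 theorem} when needed.

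The main obstacle I anticipate is the final bookkeeping step: tractability of the single symmetric signature $h$ does not automatically transfer to the whole set $\mathscr{F}$, so for each tractable outcome for $h$ I must either use the other functions of $\mathscr{F}$ in further gadgets to manufacture a signature in $\overline{\mathscr{P}}\cap\overline{\mathscr{A}}\cap\overline{\mathscr{A}^\alpha}\cap\overline{\mathscr{L}}$ (and then trigger \#P-hardness via Theorem~\ref{thm csp2 theorem}), or argue that $\mathscr{F}$ is forced to lie in the specific tractable class dictated by the orthogonal matrix extracted from $f$. The non-interpolatability hypothesis is precisely what blocks Lemma~\ref{lemma-interpolation}, so this bridging step must be entirely constructive rather than via interpolation, which is the most delicate part of the argument.
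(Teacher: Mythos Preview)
Your high-level strategy—produce a symmetric ternary signature and then invoke the symmetric Holant$^c$ dichotomy—matches the paper's, but you miss the key structural observation and this leaves precisely the gap you yourself flag as the ``main obstacle.''

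The observation you miss is that the hypothesis already forces $f$ itself to be symmetric, up to a single bit flip. Looking at the three faces $x_i=0$, the non-interpolatable condition makes each of $f(100),f(010),f(001)$ equal to $\pm$ each other; if all three agree one checks directly that $f=[a,b,-a,-b]$, and if (say) $-f(100)=f(010)=f(001)$ then $f=\left(\begin{smallmatrix} a & b & b & -a\\ -b & a & a & b\end{smallmatrix}\right)$, from which a pinned unary $[b,a]$ connected to $x_1$ produces the binary {\sc Disequality}, and flipping $x_1$ yields the symmetric $[-b,a,b,-a]$. So there is no need for the triangle gadget (which, incidentally, can degenerate: when the pinnings are reflection-type orthogonal matrices with trace zero, your $h_0={\rm Tr}(M_0^3)=0$).

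More importantly, having $f$ \emph{itself} symmetric is what resolves your bridging problem. From $[a,b,-a,-b]$ and the pinned unary $[a,b]$ one realizes $[1,0,-1]$ (here $a^2+b^2\neq 0$ uses reality). Under the holographic transformation by $Z=\left(\begin{smallmatrix}1&1\\ i&-i\end{smallmatrix}\right)$ one has $[1,0,-1]Z^{\otimes 2}=2\,(=_2)$ and $[a,b,-a,-b]=Z^{\otimes 3}[c,0,0,d]$, so
\[
{\rm Holant}\bigl(Z^{-1}\mathscr{F}\cup\{[c,0,0,d],\,Z^{-1}\Delta_0,\,Z^{-1}\Delta_1\}\bigr)\ \le_{\rm T}\ {\rm Holant}^c(\mathscr{F}).
\]
Now Lemma~\ref{lemma-odd-equality} applies to the \emph{entire} set on the left, not just to a single gadget signature $h$. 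Since $Z^{-1}\Delta_0=\tfrac12[1,1]\notin\mathscr{A}^\alpha$, the only tractable outcomes are $Z^{-1}\mathscr{F}\subseteq\mathscr{A}$ or $Z^{-1}\mathscr{F}\subseteq\mathscr{P}$, which translate back to $\mathscr{F}\subseteq\mathscr{A}$ (a \#CSP tractable class) or $\mathscr{F}\subseteq Z\mathscr{P}$ (a Holant$^*$ tractable class). Your proposed bridge via ``undoing the orthogonal transformation $M_0/\sqrt{a^2+b^2}$'' does not achieve this: tractability of your auxiliary $h$ says nothing about the rest of $\mathscr{F}$, whereas the $Z$-reduction carries the whole family along and lands in a setting where an existing lemma delivers the dichotomy for it.
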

\begin{proof}
By definition,  all four values of a
non-interpolatable function  are nonzero.
Thinking in terms of the six faces of the cube $\{0, 1\}^3$,
the three function values $f(100), f(010), f(001)$ are
either equal or negative of each other. If they are all equal,
then  the function is symmetric having the
 form  $[a,b,-a, -b]$ with $ab \not =0$, and we can get
the unary function $[a,b]$ by pinning.
If they are not all equal,
 then without loss of generality we can assume that
  $-f(100)=f(010)=f(001)$ and the function has
the form $\begin{pmatrix} a &  b & b & -a \\
                 -b  &  a &  a & b
\end{pmatrix}$. By pinning, we can
get the unary function $[b,a]$. Connecting this unary
back to $x_1$ of the function $f$, we get the {\sc Disequality} function:
$(0, a^2+b^2, a^2+b^2, 0)$ a nonzero multiple of $(0,1,1,0)$
since $ab \not =0$ and $a, b \in \mathbb{R}$.
 Connecting the {\sc Disequality} function back to $x_1$ of the function $f$,
we get the function
$\begin{pmatrix}   -b  &  a &  a & b \\ a &  b & b & -a \end{pmatrix}$.
%
%
This is a symmetric function $[-b,a,b,-a]$ with $ab \not =0$, and we can
also get the unary $[-b,a]$.

Connecting one unary $[a, b]$ back to
 $[a,b,-a, -b]$, or $[-b,a]$ back to $[-b,a,b,-a]$,
  we get the function $[1,0,-1]$, again
because $a,b$ are nonzero real
numbers we have
$a^2 + b^2 \not =0$.
Let $Z=\begin{pmatrix} 1 &  1  \\
                i  & -i
\end{pmatrix}$,  we have $[a,b,-a, -b]=Z^{\otimes 3} [ c, 0, 0, d]$
for some nonzero $c, d  \in \mathbb{C}$,
  and $[1,0,-1] Z^{\otimes 2} = 2 [1,0,1]$.
So we have the following reduction:
\begin{eqnarray*}
& & {\rm Holant}(Z^{-1}\mathscr{F}
\cup \{[c, 0, 0, d], Z^{-1} \Delta_0, Z^{-1} \Delta_1\})\\
&\equiv_{\rm T} &
 {\rm Holant}( [1, 0, 1] \mid Z^{-1}\mathscr{F}
\cup \{[c, 0, 0, d], Z^{-1} \Delta_0, Z^{-1} \Delta_1\})\\
&\equiv_{\rm T} &
{\rm Holant}( [1, 0, 1](Z^{-1})^{\otimes 2} \mid \mathscr{F}
\cup \{Z^{\otimes 3} [ c, 0, 0, d], \Delta_0, \Delta_1\})\\
&\equiv_{\rm T} &
{\rm Holant}( [1,0,-1] \mid  \mathscr{F}
\cup \{[a,b,-a, -b], \Delta_0, \Delta_1\})\\
& \leq_{\rm T} & {\rm Holant}^c(\mathscr{F}).
\end{eqnarray*}
The reduction for $[-b,a,b,-a]$ is the same.
%
By Lemma~\ref{lemma-odd-equality}, we know that Holant$(\mathscr{F})$
is \#P-hard, unless $\mathscr{H} \subseteq \mathscr{A}$, $\mathscr{H} \subseteq \mathscr{A}^\alpha$ or $\mathscr{H} \subseteq \mathscr{P}$,
 where $\mathscr{H}=Z^{-1}\mathscr{F} \cup \{[c, 0, 0, d] , Z^{-1} \Delta_0, Z^{-1} \Delta_1\}$.
Notice that $Z^{-1} \Delta_0
= \frac{1}{2} \begin{pmatrix} 1 &  -i  \\
                1  & i
\end{pmatrix}
\begin{pmatrix} 1 \\ 0 \end{pmatrix}
= \frac{1}{2} \begin{pmatrix} 1 \\ 1 \end{pmatrix}$, and
the unary function $[1,1] \not \in \mathscr{A}^\alpha$.
We conclude that
 Holant$(\mathscr{F})$
is \#P-hard, unless
$\mathscr{H} \subseteq \mathscr{A}$ or $\mathscr{H} \subseteq \mathscr{P}$.
Since $Z$ is in the Stablizer group of $\mathscr{A}$,
we have $Z \mathscr{A} = \mathscr{A}$, and so the first
condition translates to $\mathscr{F}  \subseteq \mathscr{A}$,
which is a tractable condition for \#CSP$(\mathscr{F})$.
The second condition translates to
$\mathscr{F}  \subseteq Z \mathscr{P}$.
This implies that $\mathscr{F}$ is $\mathscr{P}$-transformable,
namely in ${\rm Holant}(=_2 \mid \mathscr{F})
\equiv_{\rm T} {\rm Holant}((=_2) Z^{\otimes 2} \mid Z^{-1} \mathscr{F})$,
 both $(=_2)Z^{\otimes 2} = (\neq_2) \in \mathscr{P}$
and $Z^{-1} \mathscr{F} \subseteq \mathscr{P}$.
This is one of the tractable families for 
Holant$^*$ problems in Theorem \ref{thm old holant-star-dichotomy}. 
 \end{proof}

\begin{lemma}\label{lemma-pairty1}
Let $f\in \mathscr{F}$ be a  non-decomposable real valued
 function of arity $3$ with the form $\begin{pmatrix} a &  0 & 0 & c \\
                 b  &  0 & 0 & d
\end{pmatrix}$ or
$\begin{pmatrix}   0 & a & c & 0  \\
0 & b & d & 0
\end{pmatrix}$ with $ab\neq 0$.  Then {\rm Holant}$^c(\mathscr{F})$ is \#P-hard unless $\mathscr{F}$ is  a tractable family for {\rm Holant}$^*$ or \#{\rm CSP}$_2^c$.
\end{lemma}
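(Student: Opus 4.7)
Proof proposal: I would begin by writing $f$ in the canonical factored form $f(x_1,x_2,x_3) = R(x_2,x_3)\cdot h(x_1,x_2)$, where $R$ is the binary equality $(=_2)$ in the first case and the binary disequality $(\neq_2)$ in the second, and $h=\begin{pmatrix} a & c\\ b & d\end{pmatrix}$ is a real $2\times 2$ matrix indexed by $(x_1,x_2)$. Non-decomposability of $f$ as a tensor product is equivalent to $h$ being non-degenerate, i.e.\ $ad\neq bc$; the hypothesis gives $ab\neq 0$. The plan is to realize binary functions derived from $hh^{\tt T}$ or $h^{\tt T}h$ and feed them either into the interpolation machinery of Lemma~\ref{lemma-interpolation} (yielding a reduction to ${\rm Holant}^*$) or, in an exceptional case, to build a generalized arity-$4$ equality and invoke Lemma~\ref{lemma-even-equality} (yielding a reduction to $\#{\rm CSP}_2^c$).

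First, connecting two copies of $f$ along their $x_2$- and $x_3$-edges produces a binary function on the surviving $x_1$-edges: since $R^2=R$ as a $\{0,1\}$-valued relation and $R(x_2,\cdot)$ pairs each $x_2$ with a unique $x_3$, summing collapses the $R$-factors and the result is exactly $hh^{\tt T}=\begin{pmatrix} a^2+c^2 & ab+cd\\ ab+cd & b^2+d^2\end{pmatrix}$. This real symmetric matrix has determinant $(ad-bc)^2\neq 0$ and strictly positive trace, so it is never non-interpolatable; provided $ab+cd\neq 0$ (so that all four entries are nonzero), Lemma~\ref{lemma-interpolation} immediately gives ${\rm Holant}^c(\mathscr{F})\equiv_{\rm T}{\rm Holant}^*(\mathscr{F})$, and Theorem~\ref{thm old holant-star-dichotomy} supplies the dichotomy.

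If $ab+cd=0$, the previous matrix is diagonal and useless for interpolation. The plan is then to produce a second binary by connecting two copies of $f$ along their $x_1$-edges, which yields the $4$-ary function $R(x_2,x_3)R(x_2',x_3')(h^{\tt T}h)(x_2,x_2')$, and then to attach copies of the unary $[a,b]$ (obtained by pinning $x_2,x_3$ of a single copy of $f$ to $0,0$ in the first form, or $0,1$ in the second) to each of the two dangling $x_3$-edges. In the first form this leaves the binary $D(h^{\tt T}h)D$ with $D=\mathrm{diag}(a,b)$; in the second form, the $(\neq_2)$-constraint swaps $x_3\leftrightarrow 1-x_2$ and turns $D$ into $\mathrm{diag}(b,a)$. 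Either way the off-diagonal of the resulting matrix is $ab(ac+bd)$, the diagonal is strictly positive, and the determinant is $(ab)^2(ad-bc)^2\neq 0$. If $ac+bd\neq 0$, this binary is symmetric, non-degenerate, and not non-interpolatable, so Lemma~\ref{lemma-interpolation} again reduces Holant$^c$ to Holant$^*$.

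The remaining case is $ab+cd=ac+bd=0$. A short calculation using $ab\neq 0$ forces $d=\pm a$ and $b=\mp c$, so $h$ is a nonzero real scalar times an orthogonal matrix, and in particular $h^{\tt T}h=(a^2+b^2)I_2$. The $4$-ary gadget from the previous paragraph then collapses to a nonzero multiple of $R(x_2,x_3)R(x_2',x_3')(=_2)(x_2,x_2')$, whose support consists of a single antipodal pair of $\{0,1\}^4$ (either $\{0000,1111\}$ or $\{0101,1010\}$ depending on $R$), i.e.\ a generalized equality of arity $4$. Lemma~\ref{lemma-even-equality} then gives ${\rm Holant}^c(\mathscr{F})\equiv_{\rm T}\#{\rm CSP}_2^c(\mathscr{F})$, and Theorem~\ref{thm csp2 theorem} delivers the four-family dichotomy. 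I expect the most delicate step to be the middle case ($ab+cd=0$ but $ac+bd\neq 0$): one must design a $4$-ary-to-$2$-ary gadget using a pinning-derived unary and verify, uniformly across both forms of $f$, that the resulting matrix $D(h^{\tt T}h)D$ satisfies the non-triviality hypotheses of Lemma~\ref{lemma-interpolation}.
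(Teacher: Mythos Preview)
Your argument is correct and reaches the same endpoint as the paper's, but the intermediate route is genuinely different. The paper first reduces the second form to the first by extracting a generalized {\sc Disequality} via pinning and flipping; it then builds \emph{asymmetric} binaries by connecting a pinned unary $[a,b]$ (and later $[c,d]$) to the $x_2$-input of a single copy of $f$, obtaining $\begin{pmatrix} a^2 & bc\\ ab & bd\end{pmatrix}$ and $\begin{pmatrix} ac & cd\\ bc & d^2\end{pmatrix}$. When both are non-interpolatable, a short case analysis yields four specific normalized forms with orthogonal $2\times2$ submatrix, and the finish is the same as yours: connect the $x_1$-edges of two copies to get an arity-$4$ generalized {\sc Equality}. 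You instead keep both forms of $f$ in parallel through the factor $R$ and manufacture \emph{symmetric} binaries $hh^{\tt T}$ and $D(h^{\tt T}h)D$; because a real symmetric matrix with strictly positive diagonal can never be non-interpolatable, your case split reduces to the vanishing of the two off-diagonals $ab+cd$ and $ac+bd$, and the exceptional case $ab+cd=ac+bd=0$ forces $h$ to be a scalar times an orthogonal matrix without any further normalization or enumeration. Your approach buys uniformity across the two forms and a cleaner non-interpolatability check; the paper's approach uses smaller gadgets at each step (one copy of $f$ plus one unary, versus your two copies of $f$ plus two unaries in the middle case) at the cost of a concluding case enumeration.
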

\begin{proof}
By being non-decomposable, $c$ and $d$ cannot be both 0.
For the form $\begin{pmatrix}   0 & a & c & 0  \\
0 & b & d & 0
\end{pmatrix}$,  we can get a generalized {\sc Disequality} by pinning and
then use this generalized {\sc Disequality} to change $f$ to the form
$\begin{pmatrix} a &  0 & 0 & c \\
                 b  &  0 & 0 & d
\end{pmatrix}$. So, we only need to deal with this case.

We can get the unary function $[a,b]$ by pinning. Connecting
 $[a,b]$ to $x_2$ of $f$ we get
$\begin{pmatrix} a^2 &  b c \\
                a b  & b d
\end{pmatrix}$, which also gives us
$\begin{pmatrix} a^2 &  a b \\
 b c & b d
\end{pmatrix}$ by switching the two variables.
  Applying Lemma \ref{lemma-interpolation},
we are done unless the binary function is non-interpolatable.
If so, we know that $cd\neq 0$. Similar, we can realized unary function $[c,d]$ by pinning. Connection $[c,d]$ to $x_2$ of $f$ we get
$\begin{pmatrix} a c &  c d \\
                b c  &  d^2
\end{pmatrix}$.  We are done unless this binary function is non-interpolatable.

Now, we assume that both binary functions are non-interpolatable.
From the first one, we get $a^2=\pm b d$ and $ab=\mp bc$.
 If $a^2=-b d$ and $ab=bc$, then $a=c$.
By being real, $ac = a^2 >0$, so from the second one,
$ac = d^2$ and $bc = - cd$. So $a = \pm d$ and $b = -d$.
This gives us the function
$\begin{pmatrix} 1 &  0 & 0 & 1 \\
                 1  &  0 & 0 & -1
\end{pmatrix}$ or
$\begin{pmatrix} 1 &  0 & 0 & 1 \\
                 -1  &  0 & 0 & 1
\end{pmatrix}$ after scaling.
If $a^2= b d$ and $ab=-bc$, then $a=-c$.
By being real, $ac = -a^2 < 0$,  so from the second one,
$ac = -d^2$ and $bc = cd$.  So $a = \pm d$ and $b = d$.
This gives us the function
$\begin{pmatrix} 1 &  0 & 0 & -1 \\
                 1  &  0 & 0 & 1
\end{pmatrix}$ or
$\begin{pmatrix} 1 &  0 & 0 & -1 \\
                 -1  &  0 & 0 & -1
\end{pmatrix}$ after scaling.
If we concentrate on the $2 \times 2$ nonzero submatrix,
the four matrices are obtained from
$\begin{pmatrix} 1 &   1 \\
                 1  &  -1
\end{pmatrix}$ by pre- or post- multiplying
by the orthogonal
$\begin{pmatrix} 1 &   0 \\
                 0  &  -1
\end{pmatrix}$, and thus all are orthogonal up to a scalar $1/\sqrt{2}$.
Therefore, by computing $M^{\tt T} M$ for  the
$4 \times 2$ matrix $M$, we get the signature of the {\sc Equality}
function of arity $4$, up to a scalar $2$.
This is realized by connecting the $x_1$ variable
of two copies of the function with matrix $M$.
So we are done by Lemma~\ref{lemma-even-equality}.
\end{proof}

\begin{lemma}\label{lemma-mix}
Let $f\in \mathscr{F}$ be a  non-decomposable function of arity $3$.
Suppose all six binary functions $f^{x_i = 0}$ and $f^{x_i = 1}$
($1 \le i \le 3$)
are either non-interpolatable or  degenerate, and
furthermore both types occur.  Then {\rm Holant}$^c(\mathscr{F})$ is \#P-hard unless
either {\rm Holant}$^*(\mathscr{F})$ or {\rm CSP}$^c_2(\mathscr{F})$  is tractable
in polynomial time.
\end{lemma}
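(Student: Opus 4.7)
My plan begins with a case analysis based on how the two types — degenerate and non-interpolatable — are distributed among the six pinnings. Since $f$ is non-decomposable, a degenerate pinning $f^{x_i = \epsilon}$ cannot be identically zero (otherwise $f$ splits as $\Delta_{1-\epsilon}(x_i) \otimes g$), so every degenerate slice factors as a nonzero rank-$1$ tensor $u \otimes v$. By permuting variables I would assume that $f^{x_1 = 0}$ is degenerate and write $f(0, x_2, x_3) = u(x_2)\, v(x_3)$. The mixed hypothesis then forces the $x_1=1$ slice to be either another rank-$1$ tensor or a nonzero multiple of an orthogonal matrix with all entries nonzero; and the same dichotomy applies to the slices along $x_2$ and $x_3$. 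Since both types occur globally, at least one pair of opposite slices is of mixed type, which I would take as the starting configuration.

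Next I would exploit the two slices of $f$ simultaneously. Pinning the remaining variables to values guided by the zero-pattern of $u$ and $v$ produces binary functions; self-loops on two of the three variables of $f$ produce unaries; and connecting these unaries back to $f$ modifies slices in a controlled way. Using these operations, the goal is to build a ternary function already handled by a preceding lemma: a generalized \textsc{Equality} of arity $3$ to invoke Lemma~\ref{lemma-odd-equality}; a function satisfying the parity condition to invoke Lemma~\ref{lemma-pairty} or Lemma~\ref{lemma-pairty1}; a function all of whose pinnings are non-interpolatable, invoking Lemma~\ref{lemma-ab-a-b}; or, after an additional loop-and-connect step between two copies of $f$, a generalized \textsc{Equality} of arity $4$, invoking Lemma~\ref{lemma-even-equality}. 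In each of these subcases the hardness conclusion propagates to $\operatorname{Holant}^c(\mathscr{F})$ except when $\mathscr{F}$ lies in one of the listed tractable families.

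The tractable exceptions in the conclusion should arise precisely in the rigid corner cases where none of the above constructions produce a new function outside the original structure. There, the interplay between the rank-$1$ block and the orthogonal block forces the entries of $f$ to be determined, up to scalars, by only a few discrete parameters, and the resulting $f$ must lie in $Z\mathscr{P}$, $Z\mathscr{M}$, $\mathscr{A}$, or $\mathscr{A}^\alpha$ after a suitable holographic transformation. Tractability then follows from Theorem~\ref{thm old holant-star-dichotomy} in the $Z\mathscr{P}$ and $Z\mathscr{M}$ cases and from Theorem~\ref{thm csp2 theorem} in the $\mathscr{A}$ and $\mathscr{A}^\alpha$ cases.

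The main obstacle I anticipate is the tight algebraic bookkeeping in the rigid corner cases: the orthogonality relation $M^{\tt T} M = c\, I$ on a non-interpolatable slice, combined with the rank-$1$ factorization of a degenerate slice glued along a shared coordinate, yields a small but nontrivial system of polynomial equations in the entries of $u$, $v$, and $M$. Enumerating its real solutions and matching each family with the correct known tractable class is where the real work lies — analogous to, but more involved than, the four-subcase computation at the end of the proof of Lemma~\ref{lemma-pairty1}, because here the orthogonal and rank-$1$ blocks live in opposite slices rather than both being of parity form.
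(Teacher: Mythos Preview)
Your setup is reasonable, but the plan diverges from the paper's proof in a way that makes the task much harder than necessary, and your anticipated main obstacle is in fact entirely avoidable.

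First, a structural fact you underuse: \emph{all eight} values of $f$ are nonzero. If some face $f^{x_i=\epsilon}$ is non-interpolatable (hence four nonzero entries), then every adjacent face already has two nonzero entries at Hamming distance one; if such an adjacent face had any zero entry it would have to be degenerate with two zeros in a row or column, forcing the opposite face $f^{x_i=1-\epsilon}$ to be identically zero and $f$ to decompose. So every entry is nonzero, and after relabeling one may write
\[
A \;=\; \begin{pmatrix} a & b & \lambda a & \lambda b \\ \sigma b & -\sigma a & x & y \end{pmatrix}, \qquad \lambda\neq 0,\ \sigma=\pm1,
\]
so that $f^{x_1=0}$ is degenerate and $f^{x_2=0}$ is non-interpolatable.

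Second, the paper's key step is to use the two gadgets $AA^{\tt T}$ and $A^{\tt T}A$ rather than ad-hoc pinnings and self-loops. Since $A$ is real of rank $2$, the symmetric $2\times 2$ matrix $AA^{\tt T}$ has rank $2$ and positive trace; if its off-diagonal entry is nonzero one can interpolate all unaries from it together with $\Delta_0$, reducing to Holant$^*$. The off-diagonal vanishes exactly when $(a,b)\perp(x,y)$, which forces $(x,y)=(\mu b,-\mu a)$ for some $\mu$. In that case the remaining pinnings along $x_3$ are non-degenerate, hence non-interpolatable, yielding $\lambda a^2+\sigma\mu b^2=\lambda b^2+\sigma\mu a^2=0$. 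Now form $A^{\tt T}A$ (connect the $x_1$'s of two copies of $f$) and pin $x_2=0$: using these orthogonality relations the resulting ternary function is
\[
\begin{pmatrix} a^2+b^2 & 0 & 0 & (\lambda-\sigma\mu)ab \\ 0 & a^2+b^2 & (\lambda-\sigma\mu)ab & 0 \end{pmatrix},
\]
which is non-decomposable (since $\mu\neq\sigma\lambda$, else $f$ would decompose) and of parity form. Lemma~\ref{lemma-pairty} then finishes.

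The upshot is that your ``rigid corner cases'' never materialize: at no point does one have to pin $f$ down to a finite list of discrete parameters and match it against $\mathscr{A}$, $\mathscr{A}^\alpha$, or a Holant$^*$ tractable class. All tractability exceptions are inherited wholesale from Lemma~\ref{lemma-pairty} (or from the Holant$^*$ dichotomy when interpolation succeeds). Your plan is not wrong in spirit, but it sets up a case enumeration that the $AA^{\tt T}$/$A^{\tt T}A$ trick makes unnecessary; the obstacle you flag as the heart of the argument is precisely the step the paper sidesteps.
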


\begin{proof}
Recall that the signature matrix of a non-interpolatable binary function
is a nonzero multiple of an orthogonal matrix with 4 nonzero entries.
Suppose  $f^{x_i = \epsilon}$ is non-interpolatable ($\epsilon = 0, 1$).
On any face $x_j = 0, 1$ for $j \not = i$ there are at least two nonzero
entries from $f^{x_i = \epsilon}$, and hence if $f^{x_j = 0}$ or $f^{x_j = 1}$
has  a zero entry it must be degenerate and have two zero entries.
Then $f^{x_i = 1-\epsilon}$ must be identically zero, a contradiction to
$f$ being non-decomposable.
 Hence $f$ has no zero entries among all eight values.
 So we can assume that the function is of the form
$A = \begin{pmatrix} a &  b & \lambda a  & \lambda b \\
                     \pm b &  \mp a & x & y
\end{pmatrix}$ where $\lambda\neq 0$ , up to some bit flips.
Since $A$ is a real matrix of rank 2,
the real symmetric matrix $A A^{\tt T}$ has rank 2
and positive trace. If $(a, b, \lambda a, \lambda b)$ is not
orthogonal to $(\pm b, \mp a, x, y)$,
which is equivalent to $\lambda(a, b)$ is not
orthogonal to $(x, y)$,
then $A A^{\tt T}$ has nonzero off diagonal,
and we can interpolate all unary functions using
$A A^{\tt T}$ and a unary $[1, 0]$.
%
So we may assume $\lambda(a, b)$ is orthogonal to $(x, y)$.
Since $\lambda\neq 0$, we have
$(a, b)$ is orthogonal to $(x, y)$.
Thus $f$ has the form
$A = \begin{pmatrix} a &  b & \lambda a  & \lambda b \\
             \sigma b &  -\sigma a & \mu b  & -\mu a
\end{pmatrix}$,
($\sigma = \pm 1$).
Clearly $\mu \not = \sigma \lambda$, since $f$ is non-decomposable.

By pinning we can get
$\begin{pmatrix} a & \lambda a\\
                \sigma  b & \mu  b
\end{pmatrix}$,
and
$\begin{pmatrix} b & \lambda b\\
                -\sigma  a & -\mu  a
\end{pmatrix}$.
By assumption both are either non-interpolatable or degenerate.
By  $\mu \not = \sigma \lambda$, both are non-degenerate.
So both are non-interpolatable.
Hence the columns are orthogonal,
\begin{equation}\label{lambda-mu-orthogonal}
\lambda a^2 +  \sigma  \mu  b^2 =0 ~~~~\mbox{and}~~~~
\lambda b^2 +  \sigma  \mu  a^2 =0.
\end{equation}

Now we consider the gadget with signature
\[A^{\tt T} A
= \begin{pmatrix} a & \sigma b \\
                  b & - \sigma a \\
                  \lambda a  & \mu b \\
                  \lambda b & -\mu a
\end{pmatrix}
\begin{pmatrix} a &  b & \lambda a  & \lambda b \\
             \sigma b &  -\sigma a & \mu b  & -\mu a
\end{pmatrix}.\]
We can pin to get its first two rows
$\begin{pmatrix} a^2 + b^2 & 0 & 0 & (\lambda - \sigma \mu) ab \\
                 0 &  a^2 + b^2  & (\lambda - \sigma \mu) ab &  0
\end{pmatrix}$.
Here we used (\ref{lambda-mu-orthogonal}).
Note that $(\lambda - \sigma \mu) ab \not =0$.
Hence this ternary function is non-decomposable.
By Lemma~\ref{lemma-pairty} we are done.

\end{proof}

%

Now we are ready to prove the reduction from Holant$^c$ problems
to  Holant$^*$ or \#CSP$_2^c$.

\begin{theorem}\label{thm:reduction-of-holant-c-to-holstar-csp2}
Let $\mathscr{F}$ be a set of real valued functions. Then {\rm Holant}$^c(\mathscr{F})$ is \#P-hard unless $\mathscr{F}$ is  a tractable family for {\rm Holant}$^*$ or \#{\rm CSP}$_2^c$, for both we have explicit dichotomy theorems.
\end{theorem}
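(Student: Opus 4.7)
The plan is to prove \#P-hardness by contradiction: assume $\mathscr{F}$ is not contained in any tractable family for Holant$^*$ (Theorem \ref{thm old holant-star-dichotomy}) or \#CSP$_2^c$ (Theorem \ref{thm csp2 theorem}), and manufacture from $\mathscr{F}$ a witness that triggers one of Lemmas \ref{lemma-odd-equality}--\ref{lemma-mix}. By Lemma \ref{lemma-decomposable} we may assume every function in $\mathscr{F}$ is non-decomposable. If $\mathscr{F}$ contains only unary and binary functions, then Holant$^c(\mathscr{F})$ is literally a Holant$^*$ problem (after verifying that $\Delta_0, \Delta_1$ do not add expressive power beyond $\mathscr{U}$), and we are done by Theorem \ref{thm old holant-star-dichotomy}. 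So fix some non-decomposable $f \in \mathscr{F}$ of arity $n \ge 3$.

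First, I would repeatedly pin variables of $f$ with $\Delta_0$ and $\Delta_1$ to reduce to an arity-$3$ function $g$ (non-decomposable; otherwise, by Lemma \ref{lemma-decomposable}, replace $g$ by one of its tensor factors of arity $\ge 3$ and recurse). Along the way, examine every binary function $h$ realized as a pinning: if $h$ is non-degenerate and not non-interpolatable, Lemma \ref{lemma-interpolation} collapses Holant$^c(\mathscr{F})$ to Holant$^*(\mathscr{F})$ and we invoke Theorem \ref{thm old holant-star-dichotomy} on the whole set $\mathscr{F}$. Hence from now on every realizable non-degenerate binary function is non-interpolatable, and all realized degenerate binaries are tensor products of unaries.

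Next I would classify the arity-$3$ function $g$ according to the profile of its six faces $g^{x_i=\epsilon}$ for $i \in \{1,2,3\}$, $\epsilon \in \{0,1\}$. The possibilities are exhaustive:
\begin{enumerate}
\item \textbf{Generalized equality.} If $g$ has antipodal support $\{\mathbf{x},\overline{\mathbf{x}}\}$, Lemma \ref{lemma-odd-equality} yields either \#P-hardness or $\mathscr{F}\subseteq\mathscr{A},\mathscr{A}^\alpha,\mathscr{P}$.
\item \textbf{Parity-preserving, balanced.} If $g$ has the form of Lemma \ref{lemma-pairty} (three or four nonzeros, one on each parity class), apply Lemma \ref{lemma-pairty}.
\item \textbf{Parity-preserving, unbalanced.} If $g$ has the form of Lemma \ref{lemma-pairty1} (all nonzero entries lie in one parity class on a face), apply Lemma \ref{lemma-pairty1}; in the \emph{ab-a-b} subcase where faces are non-interpolatable with all eight values nonzero, apply Lemma \ref{lemma-ab-a-b}.
\item \textbf{Mixed non-interpolatable/degenerate faces.} If some faces are non-interpolatable and some are degenerate, apply Lemma \ref{lemma-mix}.
\item \textbf{All faces degenerate.} Then $g$ is a tensor product of lower-arity functions on each face, contradicting non-decomposability unless $g$ itself collapses into one of the above cases after one more gadget step (self-loop or triangle), which I would verify by direct case analysis on the eight values of $g$.
\end{enumerate}
In every exceptional tractable case produced by the individual lemmas, the condition must be lifted from $g$ to all of $\mathscr{F}$; this follows because the tractable classes $\mathscr{P}$, $\mathscr{A}$, $\mathscr{A}^\alpha$, $Z\mathscr{P}$, $Z\mathscr{M}$, $H\mathscr{P}$ are each closed under the holographic transformations used, and are stable under the gadget constructions: if some $f' \in \mathscr{F}$ escapes the class, we repeat the argument with $f'$ to derive \#P-hardness.

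The main obstacle will be the degenerate-face case (item 5) and ensuring that pinning a general $f \in \mathscr{F}$ of arbitrary arity actually lands on a non-decomposable arity-$3$ function with the richness required for Lemmas \ref{lemma-ab-a-b}, \ref{lemma-pairty1}, and \ref{lemma-mix}. When a naive pinning collapses $f$ into a purely decomposable or trivial form, I would need auxiliary gadgets: self-loops (which convert parts of $f$ into binary functions used for further pinning), triangle gadgets (which produce symmetric ternary signatures amenable to the symmetric Holant$^c$ dichotomy of \cite{bib:holantc}), and tensor contractions with realized binaries (to merge separate components of $f$). A secondary subtlety is the real-valuedness hypothesis: interpolation in Lemma \ref{lemma-interpolation} relies on $a^2 + b^2 \ne 0$ over $\mathbb{R}$, and several of the stabilizer arguments (in particular distinguishing $z=\pm 1$ from $z=\pm i$ in Lemma \ref{lemma-pairty}) depend on real eigenvalue analysis. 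Once all these obstacles are discharged, combining the conclusions of the relevant lemma yields either \#P-hardness outright or a tractability certificate identifying $\mathscr{F}$ as a subset of a Holant$^*$ or \#CSP$_2^c$ tractable family, completing the proof.
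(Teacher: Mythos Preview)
Your plan has a genuine structural gap: the reduction ``repeatedly pin $f$ down to a non-decomposable arity-$3$ function $g$'' cannot always succeed. If the minimum Hamming distance between nonzero support points of $f$ is $\ge 4$, then \emph{every} pinning of $f$ down to three variables has support of size at most one, hence is a product of $\Delta_0$'s and $\Delta_1$'s with no tensor factor of arity $\ge 3$ to recurse on. Your fallback ``auxiliary gadgets: self-loops, triangles'' is not a plan here, because self-loops and triangle contractions on a function whose support has large minimum distance still yield nothing richer than a scalar until you keep four or more variables alive. The paper's proof is organized precisely around this obstacle: it computes $D_0=\min\{d(x,y):f(x)\ne 0\ne f(y)\}$ and branches on $D_0\ge 3$ odd, $D_0\ge 4$ even, $D_0=2$, $D_0=1$, using generalized {\sc Equality} of arity $3$ or $4$ (Lemmas~\ref{lemma-odd-equality}, \ref{lemma-even-equality}) in the large-$D_0$ cases and nested distances $D_1,D_2,D_3$ in the small-$D_0$ cases. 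Your outline never produces the arity-$4$ generalized {\sc Equality} needed for Lemma~\ref{lemma-even-equality}, so the $D_0\ge 4$ even case is simply uncovered.

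There is a second, related gap in your use of Lemma~\ref{lemma-interpolation}. That lemma's hypothesis is $ab\neq 0$ (two adjacent nonzero entries), not merely ``non-degenerate''. A binary pinning of the form $\begin{pmatrix}a&0\\0&d\end{pmatrix}$ with $ad\neq 0$ is non-degenerate and not non-interpolatable, yet Lemma~\ref{lemma-interpolation} does not apply to it under any row/column permutation. So your sentence ``from now on every realizable non-degenerate binary function is non-interpolatable'' is unjustified, and with it your dichotomy on the six faces of $g$ collapses: the parity-type ternaries in your items~2--3 have diagonal faces that are neither degenerate nor non-interpolatable, yet do not trigger Lemma~\ref{lemma-interpolation}. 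This is exactly why the paper needs the separate $D_0=2$ branch (leading to Lemma~\ref{lemma-pairty} and the arity-$4$ analysis via $D_1$) rather than subsuming it under a face classification. The Hamming-distance scaffolding $D_0,D_1,D_2,D_3$ is the missing idea; once you adopt it, your items~1--5 become the leaves of that tree rather than a standalone classification.
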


\begin{proof}
By Lemma~\ref{lemma-decomposable}, we can assume that functions
 in $\mathscr{F}$ are non-decomposable.  If every function in $\mathscr{F}$
has arity at most two, then  Holant$^c(\mathscr{F})$ is tractable.
 Now we assume that $\mathscr{F}$ contains a function $f$ of arity
 at least $3$. Since $f$ is non-decomposable,
there are at least two nonzero function values. Let
\[D_0 = \min \{ d(x, y) \mid x \not = y, f(x) \not =0, f(y)  \not =0\},\]
the minimum Hamming distance between two inputs with nonzero values.

\begin{itemize}
\item
If $D_0 \geq 3$ and $D_0$ is odd, we can get a
 generalized {\sc Equality} of arity $3$ by pinning and then by self loops,
and so we are done by Lemma~\ref{lemma-odd-equality}.
\item
If $D_0\geq 4$ and $D_0$ is even, we can get a generalized
 {\sc Equality} of arity $4$, and so we are
done by Lemma \ref{lemma-even-equality}.
\item
If $D_0=2$, without loss of generality,
 we can pin $x_3 x_4\cdots x_n$ such that the remaining binary
 function $g(x_1, x_2)$ is of the form $\begin{pmatrix} a &  0  \\
                0  & b
\end{pmatrix}$ or $\begin{pmatrix} 0 &  a  \\
                b  & 0
\end{pmatrix}$,  where $ab\neq 0$.
If it is the second form, it is
a generalized {\sc Disequality} and can be used to flip the input.
 So we can assume $g(x_1, x_2)$ has the first form.
If for all other values of $x_3 x_4\cdots x_n$,
 the remaining binary function is a scaling of the above one,
 then the function $f$ is decomposable, a contradiction.
 Let $A$ be the set of bit patterns for
$x_3 x_4\cdots x_n$ for which the remaining binary function
 is a nonzero scaling of the above function, and let
 $B$ be the set of  bit patterns  for which the  remaining
binary function is not a  scaling of the above function.
By definition of this binary function, $A \not = \emptyset$.
By being non-decomposable, $B \not = \emptyset$.
Clearly $A \cap B = \emptyset$.

 Let
\[D_1
 = \min \{ d(x, y) \mid x \in A,  y \in B\},\]
be the minimum Hamming distance between the two sets.
\begin{enumerate}
\item
 If  $D_1=1$, by pinning we get a non-decomposable ternary
function and it satisfies the parity condition. This is clear
by looking at the cube $\{0, 1\}^3$, using
the fact that  $D_0 =2$, $ab \not =0$ and the definition of $B$.
  So we are done by Lemma~\ref{lemma-pairty}.
\item
If $D_1=2$,
without loss of generality we may assume
a pinning bit pattern for $x_5\ldots x_n$ such that
further pinning $x_3x_4=b_3b_4$ gives us the function $g(x_1, x_2)$,
and further pinning $x_3x_4=\overline{b_3} \overline{b_4}$
 gives us another binary  function  which is not a scaling of $g$.
Because $D_1=2$, the  binary  function  obtained by
 further pinning $x_3x_4=b_3\overline{b_4}$
or $x_3x_4= \overline{b_3} b_4$ must be identially 0.
It follows that we have a  function of arity $4$ after pinning,
of the form
$\begin{pmatrix} a &  0 & 0 & b \\
                 0  &  0 & 0 & 0 \\
                 0  &  0 & 0 & 0 \\
                 c &  0 & 0 & d
\end{pmatrix}$
or
$\begin{pmatrix} a &  0 & 0 & b \\
                 0  &  0 & 0 & 0 \\
                 0  &  0 & 0 & 0 \\
                 0 &  c & d & 0
\end{pmatrix}$
where the row index of $x_3x_4$ is up to a bit flip,
with $c$ and $d$ not both 0.
It is easy to verify that this function is non-decomposable by $ab \not =0$
and the definition of $B$.
%
For the latter case $\begin{pmatrix} a &  0 & 0 & b \\
                 0  &  0 & 0 & 0 \\
                 0  &  0 & 0 & 0 \\
                 0 &  c & d & 0
\end{pmatrix}$, we can pin $x_1$ or $x_2$
to get a generalized {\sc Equality} of arity 3
 since at least one of $c$ and $d$ is nonzero.
 Then we are done by Lemma~\ref{lemma-odd-equality}.
For the former case, by definition of $B$,
$\det
\begin{pmatrix} a & b\\
c & d
\end{pmatrix} \not =0$.  We take two copies of the function
and connect the respective $x_3$ and $x_4$ together.
This produces a symmetry function of the form
$\begin{pmatrix} x &  0 & 0 & y \\
                 0  &  0 & 0 & 0 \\
                 0  &  0 & 0 & 0 \\
                 y &  0 & 0 & z
\end{pmatrix}$, with $x, z > 0$.
 We can use it to realize or interpolate an {\sc Equality} of arity $4$. So we are done by Lemma \ref{lemma-even-equality}.
\item
If $D_1\geq 3$, we can get a generalized {\sc Equality} of arity at least $3$, and we are done by Lemma \ref{lemma-odd-equality} or  Lemma \ref{lemma-even-equality}.

\end{enumerate}
\item
If $D_0=1$,  without loss of generality,
 we can assume that there is a pinning for $x_2x_3 x_4\cdots x_n$
such that the remaining unary function is $[a,b]$ with $ab\neq 0$.
  If for all  other values of $x_2 x_3 x_4\cdots x_n$,
the remaining unary function is a scaling of the above one,
then the function $f$ is decomposable, a contradiction.
 Let $A$ be the set of bit patterns for  $x_2x_3 x_4\cdots x_n$
 for which the remaining unary function is a nonzero scaling of
 $[a,b]$, and $B$ be the set of patterns for which the
 remaining unary function is not a  scaling of $[a,b]$.
By the above argument, both sets $A$ and $B$ are non-empty.

 Let $D_2$ be the minimum Hamming distance between the two sets
$A$ and $B$. Again,
\begin{enumerate}
\item If $D_2\geq 3$, we can get a generalized
 {\sc Equality} of arity at least $3$, and we are done
by Lemma~\ref{lemma-odd-equality} or  Lemma~\ref{lemma-even-equality}.
\item
 If $D_2=2$, we have a non-decomposable ternary function
taking the form in Lemma~\ref{lemma-pairty1},
 and we are done by that lemma.
\item
If $D_2=1$,  without loss of generality,
 we can assume that there is a pinning for $x_3 x_4\cdots x_n$
 such that the remaining binary function is of
form $\begin{pmatrix} a &  b  \\
                c  & d
\end{pmatrix}$ where $ab\neq 0$ and $ad\neq bc$.
 We are done by Lemma~\ref{lemma-interpolation}
unless it is non-interpolatable.
Now we assume that it is non-interpolatable.
In particular, $abcd \not =0$.
 If for all other values of $x_3 x_4\cdots x_n$, the remaining
 binary function is a scaling of the above one, then the function
$f$ is decomposable, a contradiction.
Let $A$ be the set of bit patterns for  $x_3 x_4\cdots x_n$
for which the remaining binary function is a nonzero scaling
 of the above function, and $B$ be the set of  bit patterns for which
 the remaining binary function is not a  scaling of the above function.
By the above argument,  both sets $A$ and $B$ are non-empty.

 Let $D_3$ be the minimum Hamming distance between the two sets
$A$ and $B$. Again,
\begin{description}
\item{Case $D_3\geq 3$:}
We can get a generalized
{\sc Equality} of arity at least $3$, and we are done by
 Lemma~\ref{lemma-odd-equality} or  Lemma~\ref{lemma-even-equality}.
\item{Case $D_3=2$:}
We have a function with arity $4$: for $x_3=a_3, x_4=a_4$,
we have a non-interpolatable binary function;
for $x_3=1-a_3, x_4=1-a_4$, we have a binary function
which is not a scaling of the above one;
for the other two values of $x_3, x_4$, the function is entirely zero.
Up to a flip on the row index bits $x_3$ and $x_4$, we have
the function of arity $4$
of the form
$\begin{pmatrix} a &  b & c & d \\
                 0  &  0 & 0 & 0 \\
                 0  &  0 & 0 & 0 \\
                 a' &  b' & c' & d'
\end{pmatrix}$,
where $abcd \not =0$, and
$(a', b', c', d')$ is linearly independent of $(a,b,c,d)$.
If $a'/a \not = b'/b$, or $c'/c \not = d'/d$, or
$a'/a \not = c'/c$, then we have at least one pinning of $x_1$ or $x_2$
such that the resulting ternary function is non-decomposable.
By linear independence, one of these must hold,
and  the resulting ternary function is non-decomposable.
That function is of a form of  Lemma~\ref{lemma-pairty1}
and we are done by that lemma.
\item{Case $D_3=1$:}
We get a non-decomposable ternary function.
We know that at least one of the six faces
$\begin{pmatrix} a &  b  \\
                 c  &  d
\end{pmatrix}$ is non-interpolatable. This implies that
the four adjacent faces have at least two nonzero entries
(from $\{a,b,c,d\}$) that are of Hamming distance 1.
If any one of these 4 faces is non-degenerate and not  non-interpolatable,
then we are done by Lemma~\ref{lemma-interpolation}.
If the opposite face of
$\begin{pmatrix} a &  b  \\
                 c  &  d
\end{pmatrix}$ is  non-degenerate and not  non-interpolatable,
we are also done by Lemma~\ref{lemma-interpolation},
unless it has no two adjacent nonzero entries. But if so,
being non-degenerate, it must have exactly two nonzeros at bit positions
of same parity, and two other zero entries at bit positions
of the oppostite parity.
Then in particular any of the  four adjacent faces
of $\begin{pmatrix} a &  b  \\
                 c  &  d
\end{pmatrix}$ has exactly one 0 entry and thus
both  non-degenerate and not  non-interpolatable.
   Thus we conclude that
if any one of six faces is
non-degenerate and not  non-interpolatable,
 then we are done by Lemma~\ref{lemma-interpolation}.
Now, suppose each of its six faces is
 either degenerate or non-interpolatable.
We already know that at least one of them is non-interpolatable.
 If all of them are non-interpolatable, we are done by Lemma~\ref{lemma-ab-a-b}. Otherwise, we are done by Lemma \ref{lemma-mix}.
\end{description}
\end{enumerate}
\end{itemize}
This completes the proof of Theorem~\ref{thm:reduction-of-holant-c-to-holstar-csp2}.
\end{proof}

\section*{Acknowledgments}
We sincerely thank Zhiguo Fu for his very insightful comments, in particular
his gave the key insight to a simplified proof of
Lemma~\ref{lemma-mix}.

\bibliography{bib}

\end{document}